\tikzstyle{block}=[draw opacity=0.7,line width=1.4cm]
\definecolor{CranJ}{cmyk}{0,0.69,0.54,0.04} 
\definecolor{PinkJ}{cmyk}{0,0.71,0.43,0.12} 
\definecolor{Cran}{cmyk}{0,0.73,0.41,0.29} 
\definecolor{VRed}{cmyk}{0,0.75,0.25,0.2} 
\definecolor{ORed}{cmyk}{0,0.75,0.75,0} 
\definecolor{CBlue}{cmyk}{1,0.25,0,0} 
\tikzset{block/.style={%
        inner xsep=1mm,
        inner ysep=1.5mm,
        rectangle,very thick,draw}}
\tikzset{sum/.style={%
        circle,
        minimum size=2mm,inner xsep=1.2mm,inner ysep=1.2mm,
        very thick,draw}}
\tikzset{point/.style={%
        minimum size=0mm,inner xsep=4mm,inner ysep=0mm,draw}}
\tikzset{link/.style={->,very thick,>=stealth}}
\tikzset{undirlink/.style={<->,very thick,>=stealth}}
\tikzset{pole/.style={cross out, draw=black, minimum size=2*(#1-\pgflinewidth), inner sep=0pt, outer sep=0pt},cross/.default={1pt}}
\title{\LARGE \bf
Privacy preservation in continuous-time average consensus algorithm via deterministic additive obfuscation signals}
\author{Navid Rezazadeh and Solmaz S. Kia
  \thanks{The authors are with the Department of Mechanical and Aerospace Engineering, University of California Irvine, Irvine, CA 92697,  
    {\tt\small \{nrezazad,solmaz\}@uci.edu}. This work is supported by NSF award ECCS-1653838.} %
}
\newcommand{\VV}{\mathcal{V}}
\newcommand{\EE}{\mathcal{E}}
\newcommand{\GG}{\mathcal{G}}
\newcommand{\LL}{\vectsf{L}}
\newcommand{\lL}{\vectsf{L}}
\newcommand{\rR}{\vect{\mathsf{R}}}
\newcommand{\real}{{\mathbb{R}}} \newcommand{\reals}{{\mathbb{R}}}
\newcommand{\realpositive}{{\mathbb{R}}_{>0}}
\newcommand{\eps}{\epsilon}
\newcommand{\rank}{\operatorname{rank}}
\newcommand{\din}{\mathsf{d}_{\operatorname{in}}}
\newcommand{\Nout}{\mathcal{N}_{\operatorname{out}}}
\newcommand{\dout}{\mathsf{d}_{\operatorname{out}}}
\newcommand{\Dout}{\vectsf{D}^{\operatorname{out}}}
\newcommand{\until}[1]{\{1,\dots,#1\}}
\newcommand{\vect}[1]{\boldsymbol{\mathbf{#1}}}
\newcommand{\vectsf}[1]{\vect{\mathsf{#1}}}
\newcommand{\dvect}[1]{\dot{\vect{#1}}}
\newcommand{\Diag}[1]{\operatorname{Diag}(#1)}
\newcommand{\avrg}[1]{\frac{1}{N}\sum_{j=1}^N#1^j}
 \newcommand{\boxend}{\hfill \ensuremath{\Box}}
\newtheorem{thm}{Theorem}[section]
\newtheorem{prop}{Proposition}[section]
\newtheorem{rem}{Remark}[section]
\newtheorem{cor}{Corollary}[section]
\newtheorem{lem}{Lemma}[section]
\newtheorem{defn}{Definition}
\newcommand{\oprocendsymbol}{\hbox{$\bullet$}}
\newcommand{\oprocend}{\relax\ifmmode\else\unskip\hfill\fi\oprocendsymbol}
\renewcommand*{\@opargbegintheorem}[3]{\trivlist
      \item[\hskip \labelsep{ #1\ #2}] (#3):\ \itshape}
\begin{document}
\maketitle
\thispagestyle{empty}
\pagestyle{empty}
\begin{abstract}
  This paper considers the problem of privacy preservation against passive internal and external eavesdroppers in the continuous-time Laplacian average consensus algorithm over strongly connected and weight-balanced digraphs. For this problem, we evaluate the effectiveness of the use of additive obfuscation signals as a privacy preservation measure against eavesdroppers that know the graph topology. Our results include (a) identifying the necessary and sufficient conditions on admissible additive obfuscation signals that do not perturb the convergence point of the algorithm from the average of initial values of the agents; (b) obtaining the necessary and sufficient condition on the knowledge set of an eavesdropper that enables it to identify the initial value of another agent; (c) designing observers that internal and external eavesdroppers can use to identify the initial conditions of another agent when their knowledge set on that agent enables them to do so. We demonstrate our results through a numerical~example.
\end{abstract}

\section{Introduction}\label{sec::Intro}
Decentralized multi-agent cooperative operations have been emerging as effective solutions for some of today's important socio-economical challenges. However, in some areas~involving sensitive data, for example in smart grid, banking or~healthcare applications, the adaption of these solutions is~hindered by concerns over the privacy preservation guarantees of the participating clients. Motivated by the demand~for privacy preservation evaluations and design of privacy-preserving augmentations for existing decentralized solutions, in this~paper we consider the privacy preservation problem~in the distributed~static average consensus problem using additive obfuscation~signals. 
 
The static average consensus problem in a network of agents each endowed with a local static reference value consists of designing a distributed algorithm that enables each agent to asymptotically obtain the average of the static reference values across the network.
The solutions to this problem have been used in various distributed computing, synchronization, and estimation problems as well as control of multi-agent cyber-physical systems. The average consensus problem has been studied extensively in the literature (see e.g.,~\cite{ROS-RMM:04,WR-RWB:05,LX-SB:04},~\cite{ROS-JAF-RMM:07}). The widely adopted distributed solution for the static average consensus problem is the simple first-order Laplacian algorithm in which each agent initializes its local dynamics with its local reference value and transmits this local value to its neighboring agents. Therefore, the reference value is readily revealed to the outside world, and thus the privacy of the agents implementing this algorithm is trivially breached. This paper studies the multi-agent static average consensus problem under the privacy preservation requirement against internal and external passive eavesdroppers in the network. By passive, we mean agents that only listen to the communication messages and want to obtain the reference value of the other agents without interrupting the distributed operation. The solution we examine is to induce privacy preservation property by adding obfuscation signals to the internal dynamics and the transmitted output of~the~agents.

\emph{Literature review}: 
Privacy preservation solutions for the average consensus problem have been investigated in the literature mainly in the context of discrete-time consensus algorithms over connected undirected graphs. The general idea is to add obfuscation signals to the transmitted out signal of the agents. For example, in one of the early privacy-preserving schemes, Kefayati, Talebi, and Khalaj \cite{kefayati2007secure} proposed that each agent adds a random number generated by zero-mean Gaussian processes to its initial condition. This way the reference value of the agents is guaranteed to stay private but the algorithm does not necessarily converge to the anticipated value. Similarly, in recent years, Nozari, Tallapragada and Cortes~\cite{nozari2017differentially} also relied on adding zero-mean noises to protect the privacy of the agents. However, they develop their noises according to a framework defined based on the concept of differential privacy, which is initially developed in the data science literature~\cite{mcsherry2007mechanism,friedman2010data,dwork2008differential,dwork2014algorithmic}. In this framework,~\cite{nozari2017differentially} characterizes the convergence degradation and proposes an optimal noise in order to keep a level of privacy to the agents while minimizing the rate of convergence deterioration. To eliminate deviation from desired convergence point, Manitara and Hadjicostis~\cite{NEM-CNH:13}
proposed to add a zero-sum finite sequence of noises to the transmitted signal of each agent, and  Mo and Murray \cite{mo2017privacy} proposed to add zero-sum infinite sequences. Because of
the zero-sum condition on the obfuscation signals, however, ~\cite{NEM-CNH:13} and~\cite{mo2017privacy} show that the privacy of an agent can only be preserved when the eavesdropper does not have access to at least one of the signals transmitted to that agent. Additive noises have also been used as a privacy preservation mechanism in other distributed algorithms such as distributed optimization~\cite{huang2015differentially} and distributed estimation~\cite{le2012differentially,JLN-GJP:14}. A thorough
review of these results can be found in a recent tutorial paper~\cite{JC-GED-SH-JLN-SM-GJP:16}.
For the discrete-time average consensus, on a different approach,~\cite{ruan2017secure} uses a cryptographic approach to preserve the privacy of the~agents. Moreover,~\cite{ASE-SSK:20} proposes to use the dynamic average consensus algorithm of~\cite{SSK-JC-SM:15-ijrnc} as a privacy-preserving algorithm for the average consensus problem.

\emph{Statement of contributions:} We consider the problem of privacy preservation of the continuous-time static Laplacian average consensus algorithm over strongly connected and weight-balanced digraphs using additive obfuscation signals. 
Similar to the reviewed literature above, in our privacy preservation analysis, 
we consider the extreme case that the eavesdroppers know the graph topology. But, instead of stochastic obfuscations, here we use deterministic obfuscations signals. These obfuscations are in the form of continuous-time integrable signals that we add to the transmitted out signal of the agents are also to the agreement dynamics of the agents. We refer to the obfuscation signals that do not disturb the final convergence point of the algorithm as \emph{admissible obfuscation signals}.
In our approach, instead of using by the customary zero-sum vanishing additive admissible signals, we start by carefully examining the stability and convergence proprieties of the static average consensus algorithm in the presence of the obfuscations to find the necessary and sufficient conditions on the admissible obfuscation signals. The motivation is to explore whether there exist other types of admissible obfuscation signals that can extend the privacy preservation guarantees. An interesting theoretical finding of our study is that the admissible obfuscation signals do not have to be vanishing. Also, we show that the necessary and sufficient conditions that specify the admissible obfuscation signals of the agents are highly coupled. We discuss how the agents can choose their admissible obfuscation signals locally with or without coordination among themselves. The conditions we obtain to define the locally chosen admissible obfuscation signals are coupled through a set of under-determined linear algebraic constraints with constant scalar free variables. 

Understanding the nature of the admissible obfuscation signals is crucial in the privacy preservation evaluations, as it is rational to assume that the eavesdroppers are aware of the necessary conditions on such~signals and use them to breach the privacy of the agents. In our study, we evaluate the privacy preservation of the Laplacian average consensus algorithm with additive obfuscation signals against internal and external eavesdroppers, depending on whether the coupling variables of the necessary conditions defining the locally chosen admissible obfuscation signals are known to the eavesdropper or not. This way, we study privacy preservation against the most informed eavesdroppers and also explore what kind of guarantees we can provide against less informed eavesdroppers that do not know some parameters. We show that when the coupling variables are known to the eavesdroppers, they can use this extra piece of information to enhance their knowledge set to discover the private value of the other agents. In this case, our main result states that the necessary and sufficient condition for an eavesdropper to be able to identify the initial value of another agent is to have direct access to all the signals transmitted to and out of the agent. When this condition is not satisfied, the privacy guarantee is that the eavesdropper not only cannot obtain the exact reference value but also cannot establish an estimate on it. Precisely, to show that any agent $i$ is private, we show that across the network there are arbitrarily different reference values, including for agent $i$, for which the signals received by the eavesdropper is exactly the same as those corresponding to the initializing the algorithm at the actual reference values. This shows that the use of deterministic obfuscation signals results in a stronger privacy guarantee than the stochastic approaches such as $\eps$-differential privacy~\cite{nozari2017differentially} and of~\cite{mo2017privacy}  where even though the exact reference value is concealed, an estimate on the reference value can be obtained, see, e.g.,  \cite[Fig. 4]{mo2017privacy}.

Our next contribution is to design asymptotic observers that internal and external eavesdroppers that have access to all the input and output signals of an agent can use to identify that agent's initial condition. For these observers, we also characterize the time history of their estimation error. Our results show that external eavesdroppers need to use an observer with a higher numerical complexity to compensate for the local state information that internal eavesdroppers can use. As another contribution, we identify examples of graph topologies in which the privacy of all the agents is preserved using additive admissible obfuscation signals. On the other hand, if the coupling variables of the necessary conditions defining the locally chosen admissible obfuscation signals are unknown to the eavesdroppers, we show that the eavesdroppers cannot reconstruct the private reference value of the other agents even if they have full access to all the transmitted input and output signals of an agent. We use input-to-state stability (ISS) results~(see~\cite{EDS:06,SND-DVE-EDS:11}) to perform our analysis. 

A preliminary version of our work has appeared in~\cite{NR-SSK:18acc}. In
this paper the results are extended in the following directions:
(a) we derive the necessary and sufficient conditions to characterize the admissible signals;
(b) we study privacy preservation also with respect to external eavesdroppers;
(c) we consider a general class of a set of measurable essentially bounded  obfuscation signals;
(d) we improve our main result from sufficient condition to necessary and sufficient~condition.

\section{Preliminaries}

We denote the standard Euclidean norm of vector
$\vect{x}\in\reals^n$ by $\|\vect{x}\|=\sqrt{\vect{x}^\top\vect{x}}$, and 
the (essential) supremum norm  of a signal $f:\real_{\geq0}\to\real^n$ by $\|f\|_{\textup{ess}}=(\textup{ess})\sup\{\|f(t)\|,t\geq0\}$. The set of measurable essentially bounded functions $f:\real_{\geq0}\to\real^n$ is denoted by $\mathcal{L}^{\infty}_n$. The set of measurable functions $f:\real_{\geq0}\to\real^n$ that satisfy $\int_{0}^t\|f(\tau)\|\textup{d}\tau<\infty$ is denoted by  $\mathcal{L}^{1}_n$. For sets $\mathcal{A}$ and $\mathcal{B}$, the relative complement of $\mathcal{B}$ in $\mathcal{A}$ is $\mathcal{A}\backslash\mathcal{B}=\{x\in\mathcal{A}\,|\,x\not\in\mathcal{B}\}$. For a vector $\vect{x}\in\real^n$, the sum of its elements is $\texttt{sum}(\vect{x})$.
In~a network of $N$ agents, to emphasize that a variable is local to an agent $i\in\until{N}$, we use superscripts. Moreover, if ${p}^i\in\reals$ is a variable of agent $i\in\until{N}$, the aggregated ${p}^i$'s of the network is the vector~$\vect{p} =[\{{p}^i\}_{i=1}^N]=
[{{p}^1},\cdots,{{p}^N}]^\top \in \reals^N$. 

\emph{Graph theory}: we review some basic concepts from algebraic graph
theory following~\cite{FB-JC-SM:09}.  
A weighted directed graph (digraph) is a triplet $\GG = (\VV ,\EE,
\vect{\sf{A}})$, where
$\VV=\{1,\dots,N\}$ is the \emph{node set}, $\EE \subseteq \VV\times \VV$ is the \emph{edge set} and $
\vect{\sf{A}}=[\mathsf{a}_{ij}]\in\real^{N\times N}$ is a weighted \emph{adjacency}
matrix with the property that $ \mathsf{a}_{ij} >0$ if $(i, j) \in\EE$
and $\mathsf{a}_{ij} = 0$, otherwise. A weighted digraph is
\emph{undirected} if $\mathsf{a}_{ij} = \mathsf{a}_{ji}$ for all
$i,j\in\VV$.  An edge from $i$ to $j$,
denoted by $(i,j)$, means that agent $j$ can send information to agent
$i$. For an edge $(i,j) \in\EE$, $i$ is called an \emph{in-neighbor}
of $j$ and $j$ is called an \emph{out-neighbor} of~$i$.  We denote the
set of the out-neighbors of an agent $i\in\VV$ by $\mathcal{N}_{\textup{out}}^i$.  
We define $\mathcal{N}_{\textup{out}+i}^i=\mathcal{N}_{\textup{out}}^i\cup \{i\}$.  A \emph{directed path} is a sequence of nodes connected by edges.  A
digraph is called \emph{strongly connected} if for every pair of
vertices there is a directed path connecting~them. We refer to a strongly connected and undirected graph as
a \emph{connected graph}. The \emph{weighted out-degree} and
\emph{weighted in-degree} of a node $i$, are respectively, $\din^i
=\sum^N_{j =1} \mathsf{a}_{ji}$ and $\dout^i =\sum^N_{j =1}
\mathsf{a}_{ij}$.
A digraph is \emph{weight-balanced} if at each node $i\in\VV$, the
weighted out-degree and weighted in-degree coincide (although they
might be different across different nodes).  The \emph{(out-)
  Laplacian} matrix is $\lL=[\ell_{ij}]$ is $\lL= \vect{\mathsf{D}}^{\textup{out}} -
\vect{\mathsf{A}}$, where $\vect{\mathsf{D}}^{\textup{out}} =
\Diag{\dout^1,\cdots, \dout^N} \in \reals^{N \times N}$.  Note that
$\lL\vect{1}_N=\vect{0}$. A digraph is weight-balanced if and only if
$\vect{1}_N^\top\lL=\vect{0}$. For a strongly connected and weight-balanced digraph, $\rank(\lL)=N-1$, $\rank(\lL+\lL^\top)=N-1$, and $\lL$ has one zero eigenvalue $\lambda_1=0$ and the rest of its eigenvalues have positive real parts. We let  $\rR\in\real^{N\times(N-1)}$ be a matrix whose columns are normalized orthogonal complement of $\vect{1}_N$.~Then 
\begin{align}\label{eq::lL+}
    \vect{T}^\top\lL\vect{T}\!=\!\begin{bmatrix}0&\vect{0}\\
    0&\lL^{+}\end{bmatrix},~~ \vect{T}\!=\!\begin{bmatrix}\frac{1}{\sqrt{N}}\vect{1}_N&\rR\end{bmatrix},~~ \lL^{+}\!=\!\rR^\top\lL\rR.
\end{align}
For a strongly connected and weight-balanced digraph, $-\lL^{+}$ is a Hurwitz matrix.

\section{Problem formulation}
Consider the static average consensus algorithm
\begin{align}\label{eq::consensus}&\dot{x}^i(t)=-\sum\nolimits_{j=1}^N\!\!\mathsf{a}_{ij} \,(x^i(t)-x^j(t)),\quad x^i(0)=\mathsf{r}^i,
\end{align}
over a strongly connected and weight-balanced digraph $\mathcal{G}(\VV,\EE,\vectsf{A})$. For such an interaction typology,  $x^i$ of each agent $i\in\VV$ converges to $\avrg{\mathsf{r}}$ as $t\to\infty$~\cite{ROS-JAF-RMM:07}. In this algorithm, $\mathsf{r}^i$, represents a \emph{reference value} of agent $i\in\VV$.
Because in~\eqref{eq::consensus}, 
the reference value $\mathsf{r}^i$ of each agent $i\in\VV$ is transmitted to its in-neighbors, this algorithm  trivially reveals the reference value $\mathsf{r}^i$ of each agent $i\in\mathcal{V}$ to all its in-neighbors and any external agent that is listening to the communication messages. In this paper, we investigate whether in a network of $N\geq 3$ agents, the reference value of the agents can be concealed from the \emph{eavesdroppers} by adding the obfuscation signals $f^i\in\mathcal{L}^{\infty}_{1}$  and $g^i\in\mathcal{L}^{\infty}_{1}$ to, respectively,~the internal dynamics and the transmitted signal of each agent $i\in\VV$ (see Fig.~\ref{fig::network_island}), i.e.,   
\begin{subequations}\label{eq::consensus-modified}
\begin{align}
\dot{x}^i(t)&=-\sum\nolimits_{j=1}^N\mathsf{a}_{ij} \,(x^i(t)-y^j(t))+f^i(t),\label{eq::consensus-modified-x}\\
y^i(t)&=x^i(t)+g^i(t),
\label{eq::consensus-modified-y}\\
x^i(0)&=\mathsf{r}^i,\end{align}
\end{subequations}
while still guaranteeing that $x^i$ converges to $\avrg{\mathsf{r}}$ as $t\to\infty$. We refer to the set of obfuscation signals $\{f^j,g^j\}_{j=1}^N$ in~\eqref{eq::consensus-modified} for~which each agent $i\!\in\!\VV$ still converges to the  average~of~the~reference values across the network, i.e.,  $\lim_{t\to\infty}x^i(t)\!=\!\avrg{x}(0)\!=\!\avrg{\mathsf{r}}$,
as the \emph{admissible obfuscation~signals}.
We define the eavesdroppers formally as follows.

\begin{defn}[eavesdropper]\label{def::eavesdropper} An eavesdropper is an agent inside (internal agent) or outside (external agent) the network that stores and processes the accessible inter-agent communication messages to obtain the private reference value of the other agents in the network, without interfering with the execution of algorithm~\eqref{eq::consensus-modified}. 
\end{defn}

\begin{defn}[Privacy preservation]\label{df::privacy}
Consider an eavesdropper as defined in Definition~\ref{def::eavesdropper}, that has access to $y^j(t)$, $t\in\real_{\geq0}$, of all agents $j \in \mathcal{O} \subset \mathcal{V}$ in a network that implements~\eqref{eq::consensus-modified} with locally chosen admissible perturbation signals $(f^l,g^l)$, $l\in\VV$. We say the privacy of an agent $i \in \mathcal{V}$ is preserved if for any arbitrary $\gamma \in \real_{>0}$, there exists a tuple $\{{x^i}'(0)={\mathsf{r}^i}', {f^i}'(t), {g^i}'(t)\}$, with locally chosen admissible perturbations $({f^i}'(t), {g^i}'(t))$ and $\left|{\mathsf{r}^i}'-\mathsf{r}^i\right| > \gamma$, such that $y^j(t) \equiv {y^j}'(t) $, $t\in\real_{\geq0}$, for all $j \in \mathcal{O}$.
\end{defn}

When privacy of an agent $i\in\VV$ is preserved in accordance to Definition~\ref{df::privacy}, it means that there exists arbitrary number of execution of algorithm~\eqref{eq::consensus-modified} with arbitrary different reference values ${\mathsf{r}^i}'$ ($\left|{\mathsf{r}^i}'-\mathsf{r}^i\right| > \gamma$ for any $\gamma\in\real_{>0}$) for agent $i$  for which the signals received by the eavesdropper in all the executions are identical. Privacy preservation according to Definition~\ref{df::privacy} is  stronger than the privacy preservation in stochastic approaches such as~\cite{mo2017privacy}, where even though the exact reference value is concealed, an estimate with a quantifiable confidence interval on the reference value can be obtained; see Section~\ref{sec::numeric} for more~discussion.

We examine the privacy preservation properties of algorithm~\eqref{eq::consensus-modified} against non-collaborative eavesdroppers. The eavesdroppers are non-collaborative if they do not share their \emph{knowledge sets} with each other. The knowledge set of an eavesdropper is the information that it can use to infer the private reference value of the other agents. The extension of our results to collaborative agents is rather straightforward and is omitted for~brevity. Without loss of generality, we assume that agent $1$ is the internal eavesdropper that wants to obtain the reference value of other agents in the network. At each time $t\in\real_{\geq0}$, the signals that  are available to agent $1$ are $$\mathcal{Y}^1(t)=\{x^1(\tau),y^1(\tau),\{y^i(\tau)\}_{i\in\Nout^1}\}_{\tau=0}^t.$$ For an external eavesdropper, the available signals depend on what channels it intercepts. We assume that the external eavesdropper can associate the intercepted signals to the corresponding agents. We represent the set of these signals with $\mathcal{Y}^{ext}(t)$.
We assume that the eavesdropper knows the graph topology. It is also rational to assume that the eavesdroppers are aware of the form of the necessary conditions on the admissible obfuscation signals.

{
\begin{figure}\centering
\begin{tikzpicture}[align=center,photon/.style={decorate,decoration={snake,post length=1mm}},scale=0.75, every node/.style={scale=0.75}]
  \newcommand{\x}{0.6cm};
  \newcommand{\y}{0.8cm};
  \node (u1)  {$\{y^j(t)\}_{j \in \mathcal{N}_{\textup{out}}^i}$};
  \node (hpre) at ($(u1.east)+(5*\x,-0.1)$) [block] {$-\sum\nolimits_{j=1}^N\mathsf{a}_{ij} \,(x^i(t)-y^j(t))$};
  \coordinate (x1) at ($(hpre.east)+(\x,0)$) {};
   \node (sum1) at ($(hpre.east)+(1.15*\x,0)$) [sum] {\tiny{$+$}};
  \node (f) at ($(sum1.north)+(0,1.5*\x)$) {$f^i(t)$};
  \node (int)   [block,right=0.5*\x of sum1] {$\int$};
  \node (sum2) at ($(int.east)+(1.9*\x,0)$) [sum] {\tiny{$+$}};
  \node (g) at ($(sum2.north)+(0,1.5*\x)$) {$g^i(t)$};
  
  \node (y) at ($(sum2.east)+(1.8*\x,0)$)  {$y^i(t)$};
  
  \node (one) at ($(int)+(1.05*\x,0)$){};
  \node (two) at ($(one)+(0,-1.5*\x)$){};
  \node (three) at ($(hpre)+(-2.5,-1.5*\x)$){};
  \node (four) at ($(three)+(0,0.7)$){};
  
  \draw [thin] (one.center) -- (two.center);
  \draw [thin] (two.center) -- (three.center);
  \draw [thin] (three.center) -- (four.center);
  \draw [link,thin,photon] (u1) -- ([yshift=1.5mm]hpre.west);
  \draw [link,thin] (four.center) -- ([yshift=-2mm]hpre.west);
  \draw [link,thin] (hpre) -- (sum1);
  \draw [link,thin] (sum1) -- (int);
  \draw [link,thin] (f) -- (sum1);
  \draw [link,thin] (g) -- (sum2);
  \draw [link,thin] (int) to node[midway,above]  {$x^i(t)$}  (sum2);
  \draw [link,thin,photon] (sum2) to (y);
  \vspace{-0.25in}
\end{tikzpicture}
    \caption{{\small Graphical representation of algorithm~\ref{eq::consensus-modified}, where $f^i$ and $g^i$ are the additive obfuscation signals.}}\vspace{-0.1in}
\label{fig::network_island}
 \end{figure} 
}

\begin{thm}[The set of necessary and sufficient conditions on the admissible obfuscation signals] \label{thm::main-col}Consider algorithm~\eqref{eq::consensus-modified} over a strongly connected and weight-balanced digraph with obfuscation signals   $f^i,g^i\in\mathcal{L}^{\infty}_1$, $i\in\VV$. Then, the trajectory $t\mapsto x^i(t)$, of all agents $i\in\VV$ converges to $\frac{1}{N}\sum_{j=1}^N x^j(0)=\avrg{\mathsf{r}}$ as $t\to\infty$ if and only if 
\rm{
\begin{subequations}\label{eq::nec-suf-admin-col}
    \begin{align}
&\!\!\lim_{t\to\infty}\int_{0}^{t}\!\!\sum\nolimits_{k=1}^N(f^k(\tau)\!+\!\dout^k\, g^k(\tau))\,\textup{d}\tau=0,\label{eq::nec-suf-admin-col-a}\\
    &\!\! \lim_{t\to\infty}\int_{0}^{t}\!\!\textup{e}^{-\lL^{+} (t-\tau)}\rR^{\top}(\vect{f}(\tau)+\vectsf{A}\,\vect{g}(\tau))\,\textup{d}\tau=\vect{0},
    \label{eq::nec-suf-admin-col-b}
\end{align}
\end{subequations}
}
where $\lL^{+}$ and $\rR$ are defined in~\eqref{eq::lL+}. \boxend
\end{thm}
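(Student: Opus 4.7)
The plan is to rewrite the algorithm~\eqref{eq::consensus-modified} in compact vector form, decouple it through the orthogonal change of coordinates induced by $\vect{T}=[\tfrac{1}{\sqrt{N}}\vect{1}_N\;\;\rR]$ of~\eqref{eq::lL+}, and then read off the two conditions~\eqref{eq::nec-suf-admin-col-a}–\eqref{eq::nec-suf-admin-col-b} from the explicit solutions of the two decoupled subsystems. Specifically, using $y^j=x^j+g^j$, the dynamics~\eqref{eq::consensus-modified-x} aggregate into
\begin{equation*}
  \dot{\vect{x}}(t) = -\lL\,\vect{x}(t) + \vectsf{A}\,\vect{g}(t) + \vect{f}(t),\qquad \vect{x}(0)=\vect{r},
\end{equation*}
because $\sum_j \mathsf{a}_{ij}(x^i-y^j)=\dout^i x^i-\sum_j \mathsf{a}_{ij}(x^j+g^j)=(\lL\vect{x})_i-(\vectsf{A}\vect{g})_i$.

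Next I would introduce the new coordinates $\vect{z}(t)=\vect{T}^\top \vect{x}(t)$, partitioned as $z_1(t)=\tfrac{1}{\sqrt{N}}\vect{1}_N^\top\vect{x}(t)=\sqrt{N}\,\overline{x}(t)$ and $\vect{z}_+(t)=\rR^\top \vect{x}(t)\in\real^{N-1}$. Using weight-balancedness ($\vect{1}_N^\top\lL=\vect{0}$ and $\lL\vect{1}_N=\vect{0}$), together with $\vect{1}_N^\top \vectsf{A}=[\dout^1,\dots,\dout^N]$, the transformed system decouples into
\begin{align*}
  \dot{z}_1(t) &= \tfrac{1}{\sqrt{N}}\,\vect{1}_N^\top\bigl(\vect{f}(t)+\vectsf{A}\,\vect{g}(t)\bigr)=\tfrac{1}{\sqrt{N}}\sum\nolimits_{k=1}^N\bigl(f^k(t)+\dout^k g^k(t)\bigr),\\
  \dot{\vect{z}}_+(t) &= -\lL^{+}\vect{z}_+(t)+\rR^\top\bigl(\vect{f}(t)+\vectsf{A}\,\vect{g}(t)\bigr).
\end{align*}
Integrating the scalar equation gives $z_1(t)=z_1(0)+\tfrac{1}{\sqrt{N}}\int_0^t\sum_{k}(f^k+\dout^k g^k)\,\mathrm{d}\tau$, and, since $-\lL^{+}$ is Hurwitz, variation of constants yields $\vect{z}_+(t)=\mathrm{e}^{-\lL^{+}t}\vect{z}_+(0)+\int_0^t \mathrm{e}^{-\lL^{+}(t-\tau)}\rR^\top(\vect{f}+\vectsf{A}\vect{g})\,\mathrm{d}\tau$, with $\mathrm{e}^{-\lL^{+}t}\vect{z}_+(0)\to\vect{0}$ as $t\to\infty$.

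Finally, I would translate the convergence requirement $\vect{x}(t)\to\bar{\mathsf{r}}\vect{1}_N$ back to the new coordinates. In $\vect{z}$-coordinates this is equivalent to $z_1(t)\to z_1(0)=\sqrt{N}\,\bar{\mathsf{r}}$ and $\vect{z}_+(t)\to\vect{0}$. Plugging the two explicit solutions into these two limits produces exactly~\eqref{eq::nec-suf-admin-col-a} (from $z_1(t)-z_1(0)\to 0$) and~\eqref{eq::nec-suf-admin-col-b} (from $\vect{z}_+(t)\to\vect{0}$ together with $\mathrm{e}^{-\lL^{+}t}\vect{z}_+(0)\to\vect{0}$); conversely, if both conditions hold, reversing the transformation gives $\vect{x}(t)=\vect{T}\vect{z}(t)\to \tfrac{1}{\sqrt{N}}\vect{1}_N\,z_1(0)=\bar{\mathsf{r}}\vect{1}_N$. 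The only delicate point is the interpretation of the integrals and limits: since $\vect{f},\vect{g}\in\mathcal{L}^\infty_N$ the integrals are well-defined over every finite interval, so the two $\lim_{t\to\infty}$ statements are the correct formulation of admissibility in the essentially bounded setting. The main obstacle is really just bookkeeping—being careful that $\vect{1}_N^\top\vectsf{A}=(\dout^k)_{k=1}^N$ (which relies on the out-degree convention of the paper) so that the scalar condition assumes the stated form, and that the Hurwitz property of $-\lL^+$ is what lets the homogeneous contribution $\mathrm{e}^{-\lL^{+}t}\vect{z}_+(0)$ be absorbed into the ``iff'' statement.
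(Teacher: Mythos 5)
Your proposal is correct and follows essentially the same route as the paper: write the dynamics compactly as $\dvect{x}=-\lL\vect{x}+\vect{f}+\vectsf{A}\vect{g}$, decouple via the orthogonal transformation $\vect{T}=[\tfrac{1}{\sqrt{N}}\vect{1}_N\;\;\rR]$ into the average (consensus) coordinate and the disagreement coordinates, and read off \eqref{eq::nec-suf-admin-col-a} and \eqref{eq::nec-suf-admin-col-b} from the explicit variation-of-constants solutions, using weight-balancedness for the scalar equation and the Hurwitz property of $-\lL^{+}$ to dispose of the homogeneous term. The only cosmetic difference is that the paper first obtains the scalar condition by premultiplying with $\vect{1}_N^\top$ before introducing the coordinate change, whereas you extract it directly from the $z_1$-equation; the content is identical.
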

The proof of Theorem~\ref{thm::main-col} is given in the appendix. The necessary and sufficient conditions in~\eqref{eq::nec-suf-admin-col} that specify the admissible signals of the agents are highly coupled. If there exists an ultimately secure and trusted authority that oversees the operation, this authority can assign to each agent its admissible private obfuscation signals that collectively satisfy~\eqref{eq::nec-suf-admin-col}. However, in what follows, we consider a scenario where such an authority does not exist, and each agent $i\in\VV$, to increase its privacy protection level, wants to choose its own admissible signals $(f^i,g^i)$ privately without revealing them explicitly to the other agents.

\begin{thm}[Linear algebraic coupling] \label{thm::main}Consider algorithm~\eqref{eq::consensus-modified} over a strongly connected and weight-balanced digraph. Let each agent $i\in\VV$ choose its local obfuscation signals $f^i,g^i\in\mathcal{L}^{\infty}_1$ such that
\begin{align}\label{eq::local_beta-i}
    \lim_{t\to\infty}\int_{0}^{t}\!\!(f^i(\tau)\!+\!\dout^i\, g^i(\tau))\,\textup{d}\tau=\beta^i,
\end{align}
where $\beta^i\in\real$. Then, the necessary and sufficient conditions to satisfy~\eqref{eq::nec-suf-admin-col} are
\rm{
\begin{subequations}\label{eq::nec-suf-admin-sig}
    \begin{align}
&\!\!\sum\nolimits_{k=1}^N\beta^k=0,\label{eq::nec-suf-admin-sig-a}\\
&\!\! \lim_{t\to\infty}\int_{0}^{t} \textup{e}^{-(t-\tau)}{g}^i(\tau)\,\textup{d}\tau=\alpha\in\real,\quad i\in\VV.
    \label{eq::nec-suf-admin-sig-b}
\end{align}
\end{subequations}
}
\boxend
\end{thm}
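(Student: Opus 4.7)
My plan decouples the conditions of Theorem~\ref{thm::main-col} into a part that is automatic from the local hypothesis~\eqref{eq::local_beta-i} and a part that matches~\eqref{eq::nec-suf-admin-sig}. Using $\vectsf{A}=\Dout-\lL$ and the identity $\rR^\top\lL=\lL^+\rR^\top$ (which follows from~\eqref{eq::lL+} together with $\rR^\top\rR=\vect{I}_{N-1}$ and $\rR^\top\vect{1}_N=\vect{0}$), I first rewrite
\[
\rR^\top(\vect{f}+\vectsf{A}\vect{g})=\rR^\top\vect{h}-\lL^+\rR^\top\vect{g},\qquad \vect{h}:=\vect{f}+\Dout\vect{g},
\]
so that the integral in~\eqref{eq::nec-suf-admin-col-b} splits as $I_1(t)-I_2(t)$ with
\[
I_1(t)\!=\!\int_0^t \textup{e}^{-\lL^+(t-\tau)}\rR^\top\vect{h}(\tau)\,\textup{d}\tau,\quad I_2(t)\!=\!\int_0^t \textup{e}^{-\lL^+(t-\tau)}\lL^+\rR^\top\vect{g}(\tau)\,\textup{d}\tau.
\]

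Next I would show that~\eqref{eq::local_beta-i} alone forces $I_1(t)\to\vect{0}$, so that~\eqref{eq::nec-suf-admin-col-b} reduces to $I_2(t)\to\vect{0}$. Setting $\vect{H}(\tau)=\int_0^\tau\vect{h}(s)\,\textup{d}s$, the local hypothesis gives $\vect{H}(t)\to\boldsymbol{\beta}=[\beta^1,\dots,\beta^N]^\top$. Integration by parts on $I_1$ yields
\[
I_1(t)=\rR^\top\vect{H}(t)-\lL^+\int_0^t \textup{e}^{-\lL^+(t-\tau)}\rR^\top\vect{H}(\tau)\,\textup{d}\tau,
\]
and since $-\lL^+$ is Hurwitz the second term tends to $\lL^+(\lL^+)^{-1}\rR^\top\boldsymbol{\beta}=\rR^\top\boldsymbol{\beta}$, hence $I_1(t)\to\vect{0}$. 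Combined with the immediate equivalence \eqref{eq::nec-suf-admin-col-a}$\Leftrightarrow$\eqref{eq::nec-suf-admin-sig-a}, obtained by summing~\eqref{eq::local_beta-i} over $i\in\VV$ and comparing with~\eqref{eq::nec-suf-admin-col-a}, only $I_2(t)\to\vect{0}$ remains to be matched with~\eqref{eq::nec-suf-admin-sig-b}.

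To close the argument, introduce $G^i(t)=\int_0^t \textup{e}^{-(t-\tau)}g^i(\tau)\,\textup{d}\tau$, which solves $\dot G^i+G^i=g^i$ with $G^i(0)=0$, so that $\vect{g}=\dot{\vect{G}}+\vect{G}$. Substituting into $I_2$ and integrating by parts the $\dot{\vect{G}}$ piece produces
\[
I_2(t)=\lL^+\rR^\top\vect{G}(t)+(\vect{I}_{N-1}-\lL^+)\int_0^t \textup{e}^{-\lL^+(t-\tau)}\lL^+\rR^\top\vect{G}(\tau)\,\textup{d}\tau.
\]
Under~\eqref{eq::nec-suf-admin-sig-b}, $\vect{G}(t)\to\alpha\vect{1}_N$ and hence $\rR^\top\vect{G}(t)\to\vect{0}$, whereupon both terms on the right vanish by the same Hurwitz convolution estimate used for $I_1$, yielding the sufficiency direction \eqref{eq::nec-suf-admin-sig-b}$\Rightarrow I_2\to\vect{0}$.

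The main obstacle is the converse: recovering the convergence of each scalar $G^i$ to a common $\alpha$ from $I_2(t)\to\vect{0}$. My approach is to read the displayed identity in reverse, viewing the convolution therein as the state of the asymptotically stable system $\dot{\vect{J}}=-\lL^+\vect{J}+\lL^+\rR^\top\vect{G}$; this lets one propagate the decay of $I_2$ to $\lL^+\rR^\top\vect{G}(t)$ and then, using invertibility of $\lL^+$, to $\rR^\top\vect{G}(t)\to\vect{0}$, which gives $G^i(t)-G^j(t)\to0$ for every pair $i,j$. Establishing existence of the common limit $\alpha=\lim_tG^i(t)$ itself is the delicate step, since $\rR^\top$ annihilates the $\vect{1}_N$-direction; here I would invoke~\eqref{eq::local_beta-i} and~\eqref{eq::nec-suf-admin-sig-a} to pin down the mean component $\tfrac{1}{N}\vect{1}_N^\top\vect{G}(t)$ via the conservation law exploited in the proof of Theorem~\ref{thm::main-col}.
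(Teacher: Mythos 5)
Your reduction is in substance the paper's own: the splitting $\rR^\top(\vect{f}+\vectsf{A}\vect{g})=\rR^\top(\vect{f}+\Dout\vect{g})-\rR^\top\lL\,\vect{g}$ (your $I_1-I_2$, since $\rR^\top\lL=\lL^{+}\rR^\top$) is exactly what the paper uses, and the equivalence of \eqref{eq::nec-suf-admin-sig-a} with \eqref{eq::nec-suf-admin-col-a} is handled the same way. Where you differ is in how the two convergence facts are established: the paper kills $I_1$ with Lemma~\ref{lem::conv_vanishing}(b) and settles $I_2\to\vect{0}\Leftrightarrow$\eqref{eq::nec-suf-admin-sig-b} with the ISS comparison argument of Lemma~\ref{lem::nec-suf-hurwitz-local}, whereas you integrate by parts against $\vect{H}$ and $\vect{G}$. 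Your computations there are correct, the sufficiency direction is complete, and even the first part of your converse ($I_2\to\vect{0}\Rightarrow\rR^\top\vect{G}(t)\to\vect{0}$, hence $G^i(t)-G^j(t)\to0$, obtained by rewriting your convolution state as $\dot{\vect{J}}=-\vect{J}+I_2$) goes through and mirrors the $\vect{\zeta}$-versus-$\vect{\eta}$ comparison inside the paper's auxiliary lemma.

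The genuine gap is the final step of your necessity direction: producing the common limit $\alpha$ of $G^i(t)=\int_0^t\textup{e}^{-(t-\tau)}g^i(\tau)\,\textup{d}\tau$ from \eqref{eq::local_beta-i}, \eqref{eq::nec-suf-admin-sig-a} and the conservation law of Theorem~\ref{thm::main-col}. Those hypotheses constrain $\int_0^t(f^i(\tau)+\dout^i g^i(\tau))\,\textup{d}\tau$, in which $g^i$ enters unfiltered and entangled with $f^i$, and they place no constraint whatsoever on the mean component $\tfrac{1}{N}\vect{1}_N^\top\vect{G}(t)$. Concretely, let every agent use $g^i(t)=\sin(\log(1+t))+\cos(\log(1+t))/(1+t)$ and $f^i=-\dout^i g^i$: then \eqref{eq::local_beta-i} holds with $\beta^i=0$, and $\vect{f}+\vectsf{A}\vect{g}=-\lL\vect{g}\equiv\vect{0}$ because $\vect{g}$ is aligned with $\vect{1}_N$, so \eqref{eq::nec-suf-admin-col} holds trivially; yet $G^i(t)=\sin(\log(1+t))$ has no limit, so \eqref{eq::nec-suf-admin-sig-b} fails. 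Hence the step you flag as delicate cannot be closed from the ingredients you list: what genuinely follows is only $G^i-G^j\to0$, and the existence of a common constant $\alpha$ requires an extra assumption (e.g., that each $G^i$ converges). For comparison, this is precisely the point the paper's Lemma~\ref{lem::nec-suf-hurwitz-local} treats lightly: its necessity argument moves the limit inside $\rR^\top\lL$ and then uses that the nullspace of $\rR^\top\lL$ is spanned by $\vect{1}_N$, which already presupposes that $\lim_{t\to\infty}\vect{G}(t)$ exists. So your instinct about where the difficulty lies is right, but the proposed fix via the conservation law does not—and cannot—supply the missing limit.
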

The proof of Theorem~\ref{thm::main} is given in the appendix. In Theorem~\ref{thm::main}, by enforcing condition~\eqref{eq::local_beta-i} on the admissible signals the coupling between the agents becomes a set of linear algebraic constraints. For a given set of $\{\beta^i\}_{i=1}^N$ and $\alpha$, Theorem~\ref{thm::main} enables the agents to choose their admissible obfuscation signals locally with guaranteed convergence to the exact average consensus; see Remark~\ref{rem::local_fg} below. Choosing signals that satisfy condition~\eqref{eq::local_beta-i} is rather easy. However, condition~\eqref{eq::nec-suf-admin-sig-b} appears to be more complex. 
The result below, whose proof is given in the appendix, identifies three classes of signals that are guaranteed to satisfy condition~\eqref{eq::nec-suf-admin-sig-b}. 
\begin{lem}[Signals that satisfy~\eqref{eq::nec-suf-admin-sig-b} ]\label{lem::admissible-signal}
For a given  $\alpha\!\in\!\real$, let $g=g_1+g_2\in\mathcal{L}^{\infty}_1$ satisfy one of the conditions (a)  $\lim_{t\to\infty}g(t)\!=\!\alpha$ (b) 
$\lim_{t\to\infty}g_1(t)\!=\!\alpha$ and {\rm{$\lim_{t\to\infty}\int_{0}^t\! g_2(\tau)\textup{d}\tau=\bar{g}\!<\!\infty$}}  (c) $\lim_{t\to\infty}g_1(t)\!=\!\alpha$ and {\rm{$\int_{0}^t \sigma(|g_2(\tau)|)\textup{d}\tau<\infty$}} for $t\in\real_{\geq0}$, where $\sigma$ is any class $\mathcal{K}_{\infty}$ function. Then, {\rm{$\lim_{t\to\infty}\int_{0}^t\! \textup{e}^{-(t-\tau)}g(\tau)\textup{d}\tau=\alpha$}}.\boxend
\end{lem}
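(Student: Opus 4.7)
The plan is to reduce all three cases to a single core estimate about the exponentially weighted average $h(t):=\int_0^t \textup{e}^{-(t-\tau)}g(\tau)\,\textup{d}\tau$. Two structural observations drive everything: first, $\int_0^t \textup{e}^{-(t-\tau)}\,\textup{d}\tau=1-\textup{e}^{-t}\to 1$, so the kernel is asymptotically averaging with unit total mass; second, $h$ is the zero-initialized output of the exponentially stable LTI system $\dot h=-h+g$, which makes the three hypotheses on $g$ natural ``convergent-input'' conditions. I treat (a) directly, then reduce (b) to (a) by integration by parts, and finally reduce (c) to (b) by a $\mathcal{K}_{\infty}$-truncation.

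For case (a), I rewrite $h(t)-\alpha(1-\textup{e}^{-t})=\int_0^t \textup{e}^{-(t-\tau)}(g(\tau)-\alpha)\,\textup{d}\tau$. Given $\epsilon>0$, choose $T$ with $|g(\tau)-\alpha|<\epsilon$ for $\tau>T$ and split the integral at $T$: the tail piece is at most $\epsilon(1-\textup{e}^{-(t-T)})\le\epsilon$, while the head piece is bounded by $\textup{e}^{-(t-T)}\int_0^T|g(\tau)-\alpha|\,\textup{d}\tau$, which tends to $0$ as $t\to\infty$. Since $\epsilon$ is arbitrary and $\alpha\textup{e}^{-t}\to 0$, this gives $h(t)\to\alpha$.

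For case (b), I split $g=g_1+g_2$ and apply (a) to $g_1$, yielding $\int_0^t \textup{e}^{-(t-\tau)}g_1(\tau)\,\textup{d}\tau\to\alpha$. For the $g_2$ part, set $G_2(\tau):=\int_0^\tau g_2(s)\,\textup{d}s$; by hypothesis $G_2(\tau)\to\bar g$, and $G_2(0)=0$. Integration by parts gives
\begin{equation*}
\int_0^t \textup{e}^{-(t-\tau)}g_2(\tau)\,\textup{d}\tau = G_2(t)-\int_0^t \textup{e}^{-(t-\tau)}G_2(\tau)\,\textup{d}\tau,
\end{equation*}
and the second term also tends to $\bar g$ by case (a) applied to $G_2$, so the two contributions cancel. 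Summing gives $h(t)\to\alpha$.

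For case (c), I again split $g=g_1+g_2$ and handle $g_1$ by (a). Since $g\in\mathcal{L}^{\infty}_1$ and $g_1$ is convergent (hence may be assumed essentially bounded), $g_2$ is essentially bounded by some $M$. For any $\delta>0$, the $\mathcal{K}_{\infty}$ properties of $\sigma$ (strict monotonicity, $\sigma(0)=0$) yield the pointwise bound $|g_2(\tau)|\le\delta+\frac{M}{\sigma(\delta)}\sigma(|g_2(\tau)|)$: if $|g_2(\tau)|\le\delta$ the first summand suffices, and otherwise $\sigma(|g_2(\tau)|)/\sigma(\delta)\ge 1$ so $M$ dominates $|g_2(\tau)|$. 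Convolving with the kernel,
\begin{equation*}
\Bigl|\int_0^t \textup{e}^{-(t-\tau)}g_2(\tau)\,\textup{d}\tau\Bigr|\le\delta+\frac{M}{\sigma(\delta)}\int_0^t \textup{e}^{-(t-\tau)}\sigma(|g_2(\tau)|)\,\textup{d}\tau.
\end{equation*}
By hypothesis $\sigma(|g_2|)\in L^1[0,\infty)$, so case (b) applied with $\alpha=0$ and $\sigma(|g_2|)$ in the role of $g_2$ (trivial $g_1\equiv 0$) shows the remaining integral vanishes as $t\to\infty$. Taking $\limsup_{t\to\infty}$ and then $\delta\to 0^+$ closes the argument. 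I expect the main obstacle to be this last part: the $\mathcal{K}_{\infty}$-based pointwise splitting of $|g_2|$ into a uniformly small piece plus an $L^1$-integrable residue must be done so that $\delta$ can be taken outside the integral and the residue vanishes through the (b)-reduction, which is the only step that genuinely uses the $\mathcal{K}_{\infty}$ structure rather than routine stable-system convolution calculus.
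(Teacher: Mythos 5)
Your proof is correct, and it follows the same top-level decomposition as the paper (treat the $g_1$ part via convergence, the $g_2$ part via the integral condition), but it replaces the paper's machinery with self-contained elementary estimates. The paper handles (a) and (b) by invoking its auxiliary Lemma on convolutions with Hurwitz matrices (itself proved through ISS results for linear systems), and it handles (c) by viewing $\int_0^t \textup{e}^{-(t-\tau)}g_2(\tau)\,\textup{d}\tau$ as the zero-state response of $\dot{\zeta}=-\zeta+g_2$ and citing the integral-ISS convergence result of \cite[Lemma 3.1]{SND-DVE-EDS:11}. You instead prove (a) by a direct $\epsilon$--$T$ splitting of the unit-mass exponential kernel, (b) by integration by parts against $G_2(\tau)=\int_0^\tau g_2$ followed by a reapplication of (a) to $G_2$, and (c) by the pointwise truncation bound $|g_2(\tau)|\le\delta+\tfrac{M}{\sigma(\delta)}\sigma(|g_2(\tau)|)$, which reduces the residue to your case (b) with $\alpha=0$; this last step is in effect a hands-on proof of the iISS convergence fact the paper cites, specialized to the scalar exponentially stable system. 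What your route buys is a fully self-contained argument using only real-analysis estimates; what the paper's route buys is brevity and reuse (its auxiliary lemma serves several other proofs, and the iISS citation dispatches (c) in one line). Two small points to note, neither a gap relative to the paper: you tacitly take $g_1$ and $g_2$ to be individually (essentially) bounded, exactly as the paper does when it asserts $g_2$ is essentially bounded; and you read the condition in (c) as $\sigma(|g_2|)\in L^1[0,\infty)$ (so that $\int_0^t\sigma(|g_2|)$ converges monotonically), which is the reading the paper's iISS citation requires and the only one under which the conclusion holds.
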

An interesting theoretical finding that  Lemma~\ref{lem::admissible-signal} reveals is that the admissible obfuscation signals $\{f^j,g^j\}_{j=1}^N$, unlike some of the existing results 
do not necessarily need to be vanishing signals even for $\alpha=0$ and $\beta^i=0$, $i\in\VV$. For example, $g_1(t)=0$ and $g_2(t)=\sin(\phi_0+2\pi(\frac{c}{2}t^2+\omega_0t))$, which is a waveform with linear chirp function~\cite{PF:18} where $\omega_0$ is the starting frequency at time $t=0$, $c\in\real$ is the chirpyness constant, and $\phi_0$ is the initial phase, satisfy condition (b)  of~Lemma~\ref{lem::admissible-signal} with $\alpha=0$. This function is smooth but loses its  uniform continuity as $t\to\infty$. However, when a non-zero $\alpha$ is used the choices for non-vanishing $g$ that satisfy~\eqref{eq::nec-suf-admin-sig-b} are much wider, e.g., according to condition (b) of~Lemma~\ref{lem::admissible-signal} any function that asymptotically converges to $\alpha$ can be used.

\begin{rem}[Locally chosen admissible signals]\label{rem::local_fg}\rm{
If in a network the agents do not know whether others are going to use obfuscation signals or not, then the agents 
use $\alpha=0$ and $\beta^i\!=\!0$, $i\in\VV$ in~\eqref{eq::nec-suf-admin-sig} and \eqref{eq::local_beta-i}.  This is because the only information available to the agents is that their collective choices should satisfy~\eqref{eq::nec-suf-admin-col}. 
Then, in light of Theorem~\ref{thm::main}, 
to ensure~\eqref{eq::nec-suf-admin-col-a} each agent  $i\!\in\!\VV$ chooses its local admissible obfuscation signals according to~\eqref{eq::local_beta-i} with $\beta^i\!=\!0$. Consequently, according to Theorem~\ref{thm::main} again, each agent $i\in\VV$ needs to choose its respective $g^i$ according to~\eqref{eq::nec-suf-admin-sig-b} with $\alpha\!=\!0$. Any other choice of $\{\beta^i\}_{i=1}^N$ and $\alpha$ needs an inter-agent coordination/agreement procedure.
We refer to the admissible signals chosen according to~\eqref{eq::local_beta-i} and~\eqref{eq::nec-suf-admin-sig} as the \emph{locally chosen admissible signals}. 
\boxend}
\end{rem}

In the case of the locally chosen admissible obfuscation signals without inter-agent coordination, since the agents need to satisfy~\eqref{eq::local_beta-i} and~\eqref{eq::nec-suf-admin-sig} with $\alpha\!=\!\beta^i\!=\!0$, $i\in\VV$, these values~will be known to the eavesdroppers. In case that the agents coordinate to choose non-zero values for $\alpha$ and $\{\beta\}_{i=1}^N$ such that~\eqref{eq::local_beta-i} and~\eqref{eq::nec-suf-admin-sig} are satisfied, it is likely that these choices to be known to the eavesdroppers. In our privacy preservation analysis below, we consider various cases of the choices of $\alpha$ and/or $\{\beta\}_{i=1}^N$ being either known or unknown to the eavesdroppers. This way, our study explains the privacy preservation against the most informed eavesdroppers and also explores what kind of guarantees exists against less informed eavesdroppers that do not know all the parameters. The knowledge sets that we consider are defined as follows.  

 \begin{defn}[Knowledge set of an eavesdropper]\label{def::know}\rm{The knowledge set of the internal eavesdropper $1$ and external eavesdropper $\textup{ext}$ is assumed to be one of the cases below,
\begin{itemize}[leftmargin=*]
    \item Case 1:
    \begin{align}\label{eq::knowledge_set-1}
\mathcal{K}^a\!=\,&\left\{\mathcal{Y}^a(\infty),\mathcal{G}(\VV,\EE,\vectsf{A}),\right.\nonumber\\
~&\quad\left.\textup{form of conditions}~\eqref{eq::local_beta-i}~\text{and}~\eqref{eq::nec-suf-admin-sig},\alpha,\{\beta^i\}_{i=1}^N\right\}\!,
\end{align}
\item Case 2:
  \begin{align}\label{eq::knowledge_set-2}
\!\!\!\mathcal{K}^1\!=\,&\left\{\mathcal{Y}^1(\infty),\mathcal{G}(\VV,\EE,\vectsf{A}),\right.\nonumber\\
&\quad\quad\left.\textup{form of conditions}~\eqref{eq::local_beta-i}~\text{and}~\eqref{eq::nec-suf-admin-sig},\alpha\right\}\!,
\end{align}
\begin{align}\label{eq::knowledge_set-4}
\!\!\!\mathcal{K}^{\textup{ext}}\!=\,&\left\{\mathcal{Y}^{\textup{ext}}(\infty),\mathcal{G}(\VV,\EE,\vectsf{A}),\right.\qquad\nonumber\\&\quad~~~\left.\textup{form of conditions}~\eqref{eq::local_beta-i}~\text{and}~\eqref{eq::nec-suf-admin-sig}\right\}\!,
\end{align}
\item Case 3:
\begin{align}\label{eq::knowledge_set-3}
\!\!\!\mathcal{K}^{\textup{ext}}\!=\,&\left\{\mathcal{Y}^{\textup{ext}}(\infty),\mathcal{G}(\VV,\EE,\vectsf{A}),\right.\qquad\nonumber\\&\quad~~~\left.\textup{form of conditions}~\eqref{eq::local_beta-i}~\text{and}~\eqref{eq::nec-suf-admin-sig},\{\beta^i\}_{i=1}^N\right\}\!,
\end{align}
\end{itemize}
where  $a\in\{1,\textup{ext}\}$.}\boxend
\end{defn}
Given internal and external eavesdroppers with knowledge sets belonging to one of the cases in Definition~\ref{def::know}, 
 our study intends to determine: (a) whether the eavesdroppers inside or outside the network can obtain the reference value of the other agents by storing and processing the transmitted messages; (b) more specifically, what \emph{knowledge set} enables an agent inside or outside the network to discover the reference value of the other agents in the network;
 (c) what observers such agents can employ to obtain the reference value of the other agents in the network. 


\section{Privacy preservation evaluation}
In this section, we evaluate the privacy preservation properties of the modified average consensus algorithm~\eqref{eq::consensus-modified} against an internal eavesdropper $1$ and an external eavesdropper whose knowledge sets are either of the two cases given in Definition~\ref{def::know}. From the perspective of an eavesdropper interested in private reference value of another agent $i\in\VV$, the dynamical system to observe is~\eqref{eq::consensus-modified} with  ${x}^i$ as the internal state, ($f^i$, $g^i$, $\{{y}^j\}_{j\in\mathcal{N}_{\textup{out}}^i}$) as the inputs and ${y}^i$ as the measured output. 
When inputs and measured outputs over some finite time interval (resp. infinite time) are known, 
the traditional observability (resp. detectability) tests (see~\cite{RH-AJK:77},\cite{EDS:13}) can determine whether the initial conditions of the system can be identified. However, here the inputs $f^i$ and $g^i:\real_{\geq0}\to\real$  of agent $i\in\VV$ are not available to the eavesdropper. All is known is the conditions~\eqref{eq::local_beta-i} and~\eqref{eq::nec-suf-admin-sig} that specify the obfuscation signals. With regard to inputs $\{{y}^j\}_{j\in\mathcal{N}_{\textup{out}}^i}$ and output $y^i$, an external agent should intercept these signals while the internal eavesdropper $1$ has only access to these inputs if it is an in-neighbor of agent $i$ and all the out-neighbors of agent $i$ (e.g., in Fig.~\ref{fig::k-anom}, agent $1$ is an in-neighbor of agent $2$ and all the out-neighbors of agent~$2$).  

\subsection{Case 1 knowledge set}
Identifying the initial condition of the agents in the presence of unknown additive obfuscation signals may appear to be related to the classical concept of strong observability/detectability in control theory~\cite{MH-RJP:98,MLJH:83}. However, the necessary conditions on the unknown admissible obfuscation signals provide additional information to the eavesdropper. Such information is not being captured by the strong observability/detectability framework, rendering it inadequate for our study. 

It may appear that identifying the initial condition of the agents in the presence of unknown additive obfuscation signals is related to the classical concept of strong observability/detectability in control theory~\cite{MH-RJP:98,MLJH:83}. However, the necessary conditions on the unknown admissible obfuscation signals~\eqref{eq::nec-suf-admin-col} provide additional information to the eavesdropper. Such information is not being captured by the strong observability/detectability framework, rendering it inadequate for our study. 

 \begin{figure}[t]
 \unitlength=0.5in \centering
  \captionsetup[subfloat]{captionskip=1pt}

\begin{tikzpicture}[auto,thick,scale=0.5, every node/.style={scale=0.6}]

\node (null) at (-1,-0.5){};
\node (null) at (5,0){};
             \node (1adv) at (0,0)  [draw, minimum size=10pt,color=white, circle, very thick,dashed] {};
 
\node (1) at (0,0) [draw, minimum size=20pt,color=blue, circle, very thick,fill=red!30] {{$1$}};

\node (2) at (4,1.5) [draw, minimum size=30pt,color=blue, circle,  line width=1.2mm] {{\large${\mathcal{V}}^{\underline{1}}_{k,2}$}};

\node (3) at (8,2) [ draw, minimum size=30pt,color=blue, circle,  line width=1.2mm] {{\large${\mathcal{V}}^{\underline{1}}_{k,3}$}};

\node (4) at (4,-2) [draw, minimum size=30pt,color=blue, circle, line width=1.2mm] {{\large${\mathcal{V}}^{\underline{1}}_{k,4}$}};

\node (5) at (-3.5,-0.8) [ label=below: {{rest of network}},draw, minimum size=45pt,color=gray, circle, line width=1.2mm,dashed] {$\VV\backslash{\mathcal{V}}^{\underline{1}}_k$};

\draw [->,ultra thick] (1) to [out=15+45,in=180+15] (2);
\draw [->,thin] (2) to [out=180+15+45,in=15] (1);

\draw [->,ultra thick] (2) to [out=15+45,in=180-15] (3);
\draw [->,thin] (3) to [out=180+15+45,in=-15] (2);

\draw [->,ultra thick] (1) to [out=-45+15,in=90+45] (4);
\draw [->,thin] (4) to [out=180+15,in=-90+15] (1);

\draw [->,thin] (2) to [out=-90-15,in=90+15] (4);
\draw [->,thin] (4) to [out=90-15,in=-90+15] (2);

\draw [->,thin] (3) to [out=0,in=0] (4);

\draw [->,thin] (3) to [out=90,in=120] (1);

\draw [dashed, gray,ultra thick][->] (1) to [out=180-15,in=90-15] (5);
\draw [dashed,gray, ultra thick][->] (5) to [out=-15,in=270-15] (1);

\end{tikzpicture}
    \caption{{\small The $k^{th}$ induced island of eavesdropper $1$.  The super node $\VV_{k,2}^{\underline{1}}$ in $\GG_{k}^{\underline{1}}$ is the set of the out-neighbors of agent $1$ that  each of them has at least one out-neighbor that is not an out-neighbor of agent $1$. The super node $\VV_{k,4}^{\underline{1}}$ is the set of the out-neighbors of agent $1$ whose out-neighbors are  all also out-neighbors of agent $1$. Finally, the super node $\VV_{k,3}^{\underline{1}}$  is the set of the agents in $\GG_{k}^{\underline{1}}$ that are not an out-neighbor of agent $1$. An arrow from each node $a$ (agent 1 or each super node) to another node $b$ (agent 1 or each super node) indicates that at least one agent in $a$ can obtain information from at least one agent in $b$. 
    The thin connection lines may or may not exist in a network. }\vspace{-0.25in}}
\label{fig::network_island}

 \end{figure} 

Consider the internal eavesdropper, agent 1, when it intends to obtain the initial condition of one of the agents $i \in \VV$. The critical part of the knowledge set of an eavesdropper when it targets an agent is the signals that it has access to.
To study privacy preservation for agent $i \in \VV$, we partition the graph into islands whose nodes are classified into different groups based on their information exchange by the eavesdropper and its out-neighbors, see Fig.~\ref{fig::network_island}.
For that, note that 
removing eavesdropper agent $1$ and its incident edges results in $\bar{n}^1\geq 1$ disjoint subgraphs $\bar{\mathcal{G}}^{\underline{1}}_{k}=(\bar{\mathcal{V}}^{\underline{1}}_{k},\bar{\mathcal{E}}^{\underline{1}}_{k})\subset\mathcal{G}(\VV,\EE)$, $k\in\until{\bar{n}^1}$. Adding agent $1$ in subgraph $\bar{\mathcal{G}}^{\underline{1}}_{k}$ and including its incident edges to this subgraph results in an island graph $\mathcal{G}^{\underline{1}}_{k}=(\mathcal{V}^{\underline{1}}_{k},\mathcal{E}^{\underline{1}}_{k})\subset\mathcal{G}(\VV,\EE)$ where $\mathcal{V}^{\underline{1}}_k=\bar{\mathcal{V}}^{\underline{1}}_{k}\cup\{1\}$ and  $\mathcal{E}^{\underline{1}}_k=\{(l,j)\in\mathcal{E}|\, l \in\mathcal{V}^{\underline{1}}_k,~ j\in\mathcal{V}^{\underline{1}}_k\}$. Every island of agent $1$ is connected to the rest of the digraph $\GG$ only through agent $1$ (see Fig.~\ref{fig::network_island}).
To simplify the notation, with out loss of generality, carry out the subsequent study for agents in island $k=1$, e.g., $\mathcal{G}^{\underline{1}}_1$. Based on how each agent interacts with agent $1$, we divide the agents of island $\mathcal{G}^{\underline{1}}_1$ into three groups as described below (see Fig.~\ref{fig::network_island})
 \begin{itemize}
   {\setlength\itemindent{-8pt} \item
   ${\mathcal{V}}^{\underline{1}}_{1,2}=\big\{i\in\VV_1^{\underline{1}}\,\big|\,i\in\Nout^1,~\Nout^i\not\subset\mathcal{N}^1_{\textup{out}+1}\big\}$, }
 {\setlength\itemindent{-8pt}    \item ${\mathcal{V}}^{\underline{1}}_{1,3}=\big\{i\in\VV_1^{\underline{1}}\,\big|\,i\notin\Nout^1\big\}$.}
   {\setlength\itemindent{-8pt}  \item
   ${\mathcal{V}}^{\underline{1}}_{1,4}=\big\{i\in\VV_1^{\underline{1}}\,\big|\,i\in\Nout^1,~\Nout^i\subseteq\mathcal{N}^1_{\textup{out}+1}\big\}$.}
    \end{itemize}
    ${\mathcal{V}}^{\underline{1}}_{1,4}$ is the set of the agents that agent $1$ has direct access to all their communication signals, while ${\mathcal{V}}^{\underline{1}}_{1,2}$ and ${\mathcal{V}}^{\underline{1}}_{1,3}$ are set of agents that some of inter-agent communication between them is not available to agent $1$.
Without loss of generality, in what follows we assume that the agents in the network are labeled according to the ordered set $(1,{\mathcal{V}}^{\underline{1}}_{1,2},{\mathcal{V}}^{\underline{1}}_{1,3},{\mathcal{V}}^{\underline{1}}_{1,4},\VV\backslash{\mathcal{V}}^{\underline{1}}_1)$. We let the aggregated states and obfuscation signals of the agents in ${\mathcal{V}}^{\underline{1}}_{1,l}$,  $l\in\{2,3,4\}$, be ${\vect{x}}_l=[x^i]_{i\in{\mathcal{V}}^{\underline{1}}_{1,l}}$, ${\vect{g}}_l=[g^i]_{i\in{\mathcal{V}}^{\underline{1}}_{1,l}}$ and ${\vect{f}}_l=[f^i]_{i\in{\mathcal{V}}^{\underline{1}}_{1,l}}$. Similarly, we let the aggregated states and obfuscation signals of the agents in $\VV\backslash{\mathcal{V}}^{\underline{1}}_1$ be ${\vect{x}}_5=[x^i]_{i\in\VV\backslash{\mathcal{V}}^{\underline{1}}_1}$, $\vect{g}_5=[g^i]_{i\in\VV\backslash{\mathcal{V}}^{\underline{1}}_1}$ and $\vect{f}_5=[f^i]_{i\in\VV\backslash{\mathcal{V}}^{\underline{1}}_1}$. We partition $\lL$, $\vectsf{A}$ and $\vectsf{D}^{\textup{out}}$, respectively, to subblock matrices  $\lL_{ij}$'s, $\vectsf{A}_{ij}$'s and $\vectsf{D}^{\textup{out}}_{ij}$'s in a comparable manner to the partitioned aggregated state $(x^1,\vect{x}_2,\vect{x}_3,\vect{x}_4,\vect{x}_5)$ (see~\cite[Lemma 4.2]{NR-SSK:21}). By definition ${\lL}_{ij}=-\vectsf{A}_{ij}$, $i,j\in\{1,\cdots,5\},~i\neq j$. With the right notation at hand, we present the following result which provides the privacy guarantee according to Definition~\ref{df::privacy} for the agents belonging to ${\mathcal{V}}^{\underline{1}}_{1,2}$ and ${\mathcal{V}}^{\underline{1}}_{1,3}$. Note that, because every agent in $\mathcal{G}^{\underline{1}}_1$ is connected to the rest of the agents in digraph $\mathcal{G}$ only through agent $1$, all the out-neighbors and in-neighbors of agent $2$ are necessarily in $\mathcal{G}^{\underline{1}}_1$. The proof for brevity is given  at~\cite[Lemma 4.1]{NR-SSK:21}.

\begin{lem}[A case of indistinguishable admissible initial conditions for an internal eavesdropper]\label{lem::flat_output}
Let agent $1$ be the internal eavesdropper whose knowledge set is as Definition~\ref{eq::knowledge_set-1}. Let $\mathcal{G}^{\underline{1}}_1=(\mathcal{V}^{\underline{1}}_1,\mathcal{E}^{\underline{1}}_1)$ be an island of agent $1$ that satisfies $\mathcal{V}^{\underline{1}}_{1,2}\neq\{\}$. Consider the modified static average consensus algorithm~\eqref{eq::consensus-modified} over a strongly connected and weight-balanced digraph $\mathcal{G}$ where the agents are implementing $\{x^i(0)=\mathsf{r}^i,f^i,g^i\}_{i=1}^N$, with the locally chosen admissible obfuscation signals $(f^i,g^i)$ satisfying \eqref{eq::local_beta-i} and \eqref{eq::nec-suf-admin-sig}. Consider also an alternative execution of~\eqref{eq::consensus-modified} with 
 $\{{x^i}'(0),{f^i}',{g^i}'\}_{i=1}^N$ satisfying  
\begin{align}\label{eq::initial_alter}
    &{x^1}'(0)={x^1}(0), \,\,\vect{x}_4'(0)=\vect{x}_4(0), \,\, \vect{x}_5'(0)=\vect{x}_5(0) \nonumber \\
        &\vect{x}'_2(0)-\vect{x}_2(0)= -\vectsf{A}_{23}\lL_{33}^{-1}(\vect{x}_3'(0)-\vect{x}_3(0)),
        \label{eq::initial_alter_2}\label{eq::initial_alter_3}
\end{align}
and
\begin{align}
    &{f^i}'(t)={f^i}(t), \qquad \qquad \qquad i \in \VV \setminus \mathcal{V}^{\underline{1}}_{1,2} \nonumber \\
        &{f^i}'(t)={f^i}(t) \!\!-\!\!\left[\vectsf{A}_{23}\textup{e}^{-{\lL}_{33}t}(\vect{x}_3'(0)\!\!-\!\!\vect{x}_3(0))\right]_{i-1}, \,\,\,  i \in \mathcal{V}^{\underline{1}}_{1,2} \label{eq::initial_alter_f}
\end{align}
and
\begin{align}    
    &{g^i}'(t)={g^i}(t), \qquad \qquad \qquad  i \in \VV \setminus \mathcal{V}^{\underline{1}}_{1,2} \nonumber \\
    &{g^i}'(t)={g^i}(t)\!\!+\!\! \left[\textup{e}^{-{\vectsf{D}}_{22}t}(\vect{x}_2'(0)\!-\!\vect{x}_2(0)),\right]_{i-1},\,\,\,  i \in \mathcal{V}^{\underline{1}}_{1,2}\label{eq::initial_alter_g}
\end{align}
Then,
\begin{align}\label{eq::unobserv-con}
&    y^{i}(t)={y^{i}}'(t),\quad t\in\real_{\geq 0},\quad \quad\quad~ i\in\VV\backslash\VV_{1,3}^{\underline{1}}.
\end{align}
Moreover,
\begin{align}
  &\sum\nolimits_{i=1}^N{x^i}'(0)=\sum\nolimits_{i=1}^N{x^i}(0)=\sum\nolimits_{i=1}^N\mathsf{r}^i,\label{eq::init-alt-sum}\\
&\lim_{t\to\infty}{x^i}'(t)=\frac{1}{N}\sum\nolimits_{i=1}^N\mathsf{r}^i,\quad\quad\quad i\in\VV.\label{eq::init-alt-converge}
\end{align}
\boxend
\end{lem}

Couple of remarks are in order regarding the results of Lemma~\ref{lem::flat_output}. First notice that in proof of Lemma~\ref{lem::flat_output}, which is available in~\cite[Lemma 4.1]{NR-SSK:21}, we show that each $({f^i}',{g^i}')$, $i\in\VV$ satisfies the locally chosen admissible obfuscation signals conditions~\eqref{eq::local_beta-i} and \eqref{eq::nec-suf-admin-sig} for the same $\alpha$ and $\beta^i$s used to generate $\{f^i,g^i\}_{i=1}^N$. Next notice that due to~\eqref{eq::initial_alter_3} for any $\gamma\in\real$, there always exists ${x^i}'(0)$ for $i\in (\VV_{1,2}^{\underline{1}}\cup \VV_{1,3}^{\underline{1}})$ that satisfies $\left|{x^i}'(0)-{x^i}(0)\right|>\gamma$, while 
signals received by the eavesdropper as stated in~\eqref{eq::unobserv-con}, are identical for both execution of the algorithm using  $\{x^i(0)=\mathsf{r}^i,f^i,g^i\}_{i=1}^N$ and $\{{x^i}'(0),{f^i}',{g^i}'\}_{i=1}^N$. This means that the privacy all agents in $ (\VV_{1,2}^{\underline{1}}\cup \VV_{1,3}^{\underline{1}})$ is preserved in accordance with Definition~\ref{df::privacy}.

We can develop similar results, as stated in the corollary below, for an external eavesdropper that does not have direct access to the output signal of some of the out-neighbors of agent $i\in\VV$.

\begin{cor}[A case of indistinguishable admissible initial conditions for an external eavesdropper]\label{cor::flat_output_ext}
Let agent $\mathsf{Ext}$ be the internal eavesdropper whose knowledge set is as Definition~\ref{eq::knowledge_set-1} where the eavesdropper has access to $y^l(t), \, l \in \mathcal{O}$ and agent $k$ where the external eavesdropper does not have access to $y^k(t)$, i.e. $k \not\in \mathcal{O}$. Consider the modified static average consensus algorithm~\eqref{eq::consensus-modified} over a strongly connected and weight-balanced digraph $\mathcal{G}$ where the agents are implementing $\{x^i(0)=\mathsf{r}^i,f^i,g^i\}_{i=1}^N$, with the locally chosen admissible obfuscation signals $(f^i,g^i)$ satisfying \eqref{eq::local_beta-i} and \eqref{eq::nec-suf-admin-sig}. Consider also an alternative execution of~\eqref{eq::consensus-modified} with 
 $\{{x^i}'(0),{f^i}',{g^i}'\}_{i=1}^N$ satisfying
\begin{align}
    &{x^i}'(0)=x^i(0) \quad \quad \quad i\in\VV\backslash \mathcal{N}_{\textup{in}}^k\cup\{k\} \nonumber\\
    &{{x}^i}'(0)-{x}^i(0)= -\frac{\mathsf{a}_{ik}}{\textup{d}_{\textup{out}}^{k}}({{x}^k}'(0)-{x}^k(0)) \quad i\in \mathcal{N}_{\textup{in}}^k \label{eq::init_alter_ext}
\end{align}
and
\begin{align}
    &{f^i}'(t)=f^i(t) \quad \quad \quad i\in\VV\backslash \mathcal{N}_{\textup{in}}^k \cup\{k\} \nonumber \\
    &{f^{i}}'(t)=f^i(t)-\mathsf{a}_{ik}\textup{e}^{-{\textup{d}_{\textup{out}}^{k}}t}({{x}^k}'(0)-{x}^k(0)) \quad i\in \mathcal{N}_{\textup{in}}^k \label{eq::initial_alter_f_ext}
\end{align}
and
\begin{align}
    &{g^i}'(t)=g^i(t)\quad \quad \quad i\in\VV\backslash \mathcal{N}_{\textup{in}}^k \cup\{k\} \nonumber \\
    &{g^i}'(t)={g^i}(t)+\textup{e}^{-\dout^i t}({{x}^k}'(0)-{x}^k(0)) \quad i\in \mathcal{N}_{\textup{in}}^k \label{eq::initial_alter_g_ext}
\end{align}
Then
\begin{align}
&  y^{i}(t)={y^{i}}'(t),\quad t\in\real_{\geq 0},\quad \quad\quad~ i\in\VV\backslash\{k\}. \label{eq::flat_output_ext}
\end{align}
Moreover,
\begin{align}
  &\sum\nolimits_{i=1}^N{x^i}'(0)=\sum\nolimits_{i=1}^N{x^i}(0)=\sum\nolimits_{i=1}^N\mathsf{r}^i,\label{eq::init-alt-sum}\\
&\lim_{t\to\infty}{x^i}'(t)=\frac{1}{N}\sum\nolimits_{i=1}^N\mathsf{r}^i,\quad\quad\quad i\in\VV.\label{eq::init-alt-converge-ext}
\end{align}
\boxend
\end{cor}
Proof of Corollary~\ref{cor::flat_output_ext}, which is available in~\cite[Corollary 4.1]{NR-SSK:21}, shows that each $({f^i}',{g^i}')$, $i\in\VV$ satisfies the locally chosen admissible obfuscation signals conditions~\eqref{eq::local_beta-i} and \eqref{eq::nec-suf-admin-sig} for the same $\alpha$ and $\beta^i$s used to generate $\{f^i,g^i\}_{i=1}^N$. Next notice that due to~\eqref{eq::init_alter_ext}, for any $\gamma \in \realpositive$, there always exist ${x^j}'(0), \,\, j \in \mathcal{N}_{\textup{in}}^k$ and ${x^k}'(0)$ that satisfies $\left|{x^j}'(0)-{x^j}(0)\right|>\gamma$ and $\left|{x^k}'(0)-{x^k}(0)\right|>\gamma$, while the signal transmitted by the agents in $\VV \backslash \{k\}$ as stated in~\eqref{eq::flat_output_ext} are identical for both execution of the algorithm using  $\{x^i(0)=\mathsf{r}^i,f^i,g^i\}_{i=1}^N$ and $\{{x^i}'(0),{f^i}',{g^i}'\}_{i=1}^N$. Moreover, since $\mathcal{O}\subset\VV\backslash\{k\}$) leads to the fact that the privacy of the agents $\mathcal{N}_{\textup{in}}^k \cup \{k\}$ is preserved in accordance with Definition~\ref{df::privacy}.

\begin{lem}[Observer design for eavesdroppers with the knowledge set of Case 1]\label{lem::easy-prey-external-ad}
Consider the modified static average consensus algorithm~\eqref{eq::consensus-modified} with a set of locally chosen admissible obfuscation signals $\{f^j,g^j\}_{j=1}^N$ over a strongly connected and weight-balanced digraph $\mathcal{G}$. Let the knowledge set of the eavesdroppers be~\eqref{eq::knowledge_set-1}. An internal eavesdropper agent $1$ and external eavesdropper $\textup{ext}$
 that has access to the output signals of agent $i\in\VV$ and all its out-neighbors, can employ respectively observer
\begin{subequations}\label{eq::out-observer}
\begin{align}
    \dot{\psi}&=\sum\nolimits_{j=1}^N\mathsf{a}_{ij}(y^i-y^j),~~~~~ \psi(0)=-\beta^i,\label{eq::out-observer-state}\\
   \nu^1(t)&=\psi(t)+x^1(t),
   \label{eq::out-observer-out}
\end{align}
\end{subequations}
and observer 
\begin{subequations}\label{eq::out-observer-external}
\begin{align}
    \dot{\zeta}&=\sum\nolimits_{j=1}^N\mathsf{a}_{ij}(y^i-y^j), \quad \zeta(0)=-\beta^i-\alpha,\label{eq::out-observer-state-ext1}\\
    \dot{\eta}&=-{\eta}+y^i,\,\,\quad~\quad\qquad\quad\eta(0)\in\real,\label{eq::out-observer-state-ext2}\\
    \nu^{\textup{ext}}(t)&=\zeta(t)+\eta(t),
   \label{eq::out-observer-ext-out}
\end{align}
\end{subequations}
 to asymptotically obtain $\mathsf{r}^i$, $i\in\VV$, i.e., $\nu^a\to\mathsf{r}^i$, $a\in\{\textup{ext},1\}$ as $t\to\infty$. Moreover, at any time $t\in\real_{\geq0}$, the estimation error of the observers respectively satisfies
\rm{
\rm{
\begin{align}\label{eq::error_out_neig_known}
\nu^1(t)\!-\!\mathsf{r}^i\!=\!
x^1(t)\!-\!x^i(t)+\!\!\int_{0}^t\!\!(f^i(\tau)\!+\!\dout^i\,g^i(\tau))\textup{d}\tau\,\!-\!\!\beta^i.
\end{align}
}
and
\begin{subequations}
\begin{align}
&\nu^{\textup{ext}}(t)\!-\!\mathsf{r}^i\!=\eta(t)\!-
\!x^i(t)\!+\!\!\int_{0}^t\!\!\!(f^i(\tau)\!+\!\dout^i\,g^i(\tau))\textup{d}\tau\!-\!\beta^i\!\!-\!\alpha,\label{eq::error_out_neig_known_ext}\\
  &  \eta(t)=\textup{e}^{-t}\eta_0 \!+ \!\!\!\int_0^{t}\!\! \textup{e}^{-(t-\tau)}x^i(\tau)\textup{d}\tau\!+\!\!\! \int_0^{t} \!\!\textup{e}^{-(t-\tau)}g^i(\tau)\textup{d}\tau.\label{eq::external_eta}
\end{align}
\end{subequations}

}
\end{lem}
\begin{proof}
For an internal eavesdropper, given~\eqref{eq::consensus-modified} and~\eqref{eq::out-observer} we can write
\begin{align*}
\dot{\psi}+\dot{x}^i=f^i+\dout^i\,g^i
\end{align*}
which, because of $x^i(0)={\mathsf{r}}^i$ and $\zeta(0)=-\beta^i$, gives 
\begin{align*}
\psi(t)=-x^i(t)+{\mathsf{r}}^i+\!\int_{0}^t\!\!(f^i(\tau)+\dout^i\,g^i(\tau))\textup{d}\tau-\beta^i,~ t\in\real_{\geq0}.
\end{align*}
Then, using~\eqref{eq::out-observer-out} and~\eqref{eq::consensus-modified-y} we obtain~\eqref{eq::error_out_neig_known} as the estimation error.
Subsequently, because of~\eqref{eq::local_beta-i} and since~$\lim_{t\to\infty}(x^1(t)-x^i(t))=0$, from~\eqref{eq::error_out_neig_known} we obtain $
\lim_{t\to\infty}\nu(t)=\mathsf{r}^i$.

For an external eavesdropper, given~\eqref{eq::consensus-modified} and \eqref{eq::out-observer-state-ext1}, we can write
\begin{align*}
\dot{\zeta}+\dot{x}^i=f^i+\dout^i\,g^i,
\end{align*}
which given $x^i(0)\!=\!{\mathsf{r}}^i$ and $\zeta(0)\!=\!-\beta^i-\alpha$, for $t\in\real_{\geq0}$ gives 
\begin{align}\label{eq::eq::external_zeta_x}
\zeta(t)\!=\!-x^i(t)\!+\!{\mathsf{r}}^i\!+\!\!\int_{0}^t\!\!(f^i(\tau)\!+\!\dout^i\,g^i(\tau))\textup{d}\tau\,-\beta^i-\alpha.
\end{align}
On the other hand, using~\eqref{eq::consensus-modified-y}, $t\mapsto \eta(t)$ is obtained from~\eqref{eq::external_eta}. Then, tracking error~\eqref{eq::error_out_neig_known_ext} is readily deduced from~\eqref{eq::out-observer-ext-out} and~\eqref{eq::eq::external_zeta_x}. Next, given~\eqref{eq::local_beta-i} and~\eqref{eq::nec-suf-admin-sig-b} and also $\lim_{t\to \infty}\textup{e}^{-t}\eta_0=0$, we obtain $ \lim_{t\to\infty}\nu(t)={\mathsf{r}}^i+\lim_{t\to\infty}(-x^i(t)+ \!\!\int_0^{t}\! \textup{e}^{-(t-\tau)}x^i(\tau)\textup{d}\tau)$.

Subsequently, since $\lim_{t\to\infty}x^i(t)=\avrg{\mathsf{r}}$, we can conclude our proof by invoking Lemma~\ref{lem::conv_vanishing} that guarantees  $\lim_{t\to\infty}\int_0^{t}\! \textup{e}^{-(t-\tau)}x^i(\tau)\textup{d}\tau=\lim_{t\to\infty}x^i(t)=\avrg{\mathsf{r}}$. 
\end{proof}
To construct observer~\eqref{eq::out-observer}, the internal eavesdropper used its local state. To compensate for the lack of internal state information, the external eavesdropper is forced to employ a higher-order observer~\eqref{eq::out-observer-external} and invoke  condition~\eqref{eq::nec-suf-admin-sig-b}, which the internal eavesdropper does not need. Thus, an external eavesdropper incurs a higher computational cost.

\begin{figure}[t]
    \centering
     \begin{tikzpicture}[auto,thick,scale=0.67, every node/.style={scale=0.67}]
\tikzset{edge/.style = {->,> = latex'}}
el/.style = {inner sep=2pt, align=left, sloped},
\node (null) at (-1,-0.5){};
\node (null) at (5,0){};
             \node (1adv) at (0,0)  [draw, minimum size=10pt,color=white, circle, very thick,dashed] {};
 
\node (1) at (0,0) [ label=above: {$\mathsf{r}^1\!\!$},draw, minimum size=15pt,color=blue, circle, very thick,,fill=red!30] {{\small \textbf{1}}};

\node (2) at (-2.5,-2.0) [ label=below: {$\mathsf{r}^2\!\!$},draw, minimum size=15pt,color=blue, circle, very thick] {{\small 2}};
 
\node (3) at (-1.5,-0.5) [ label=left: {$\mathsf{r}^3\!\!$},draw, minimum size=15pt,color=blue, circle, very thick] {{\small 3}};

\node (4) at (-2.4,1.5) [ label=left: {$\mathsf{r}^4\!\!$},draw, minimum size=15pt,color=blue, circle, very thick] {{\small 4}};

\node (5) at (1.5,-1.5) [ label=left: {$\mathsf{r}^{5}\!\!$},draw, minimum size=15pt,color=blue, circle, very thick] {{\small 5}};

\node (6) at (1.5,1.5) [ label=below: {$\mathsf{r}^6\!\!$},draw, minimum size=15pt,color=blue, circle, very thick] {{\small 6}};

\node (7) at (3,2) [ label=above: {$\mathsf{r}^7\!\!$},draw, minimum size=15pt,color=blue, circle, very thick] {{\small 7}};

\node (8) at (3,0.5) [ label=left: {$\mathsf{r}^8\!\!$},draw, minimum size=15pt,color=blue, circle, very thick] {{\small 8}};

\draw[edge]  (1)to node[right]  {$1$} (2);
\draw[edge]  (1)to node[above]  {$1$} (3) ;
\draw[edge]  (2)to node[left]  {$2$} (3) ;
\draw[edge]  (3)to node[right]  {$3$} (4) ;
\draw[edge]  (4)to node[left]  {$1$} (2) ;
\draw[edge]  (4)to node[above]  {$2$} (1) ;

\draw[latex'-latex']  (1)-- node[above]  {$3$} (5) ;

\draw[latex'-latex']  (1)-- node[above]  {$3$} (6) ;
\draw[edge]  (6)to node[above]  {$1$} (7) ;
\draw[edge]  (7)to node[right]  {$1$} (8) ;
\draw[edge]  (8)to node[above]  {$1$} (6) ;

\draw [thin,dotted] plot [smooth cycle] coordinates {(1,0) (0,-0.5) (-0.5,0) (0,0.5) (0.75,1.5) (1.5,2.50) (3.75,2.75) (3.75,0.00)} ;
\coordinate (A) at (3.75,2.75);
\node at (A) [above=1mm, right=1mm of A] {$\mathcal{G}^{\underline{1}}_{3}$};

\draw [thin,dash dot] plot [smooth cycle] coordinates {(0,-0.5) (-0.5,0) (0,0.5) (0.5,0) (1,-0.5) (2.5,-0.5) (2.5,-2.5) (0.5,-2.5) (0.5,-1)};
\coordinate (B) at (2.5,-2.5);
\node at (B) [below=1mm, right=1mm of B] {$\mathcal{G}^{\underline{1}}_{2}$};

\draw [thin,dashed] plot [smooth cycle] coordinates { (0,-0.75) (0.5,0) (0,0.5) (-0.75,1) (-1.25,2)  (-3,2) (-3,-2.5) (-0.5,-2.5)};
\coordinate (C) at (-3,-2.5);
\node at (C) [below=1mm, left=1mm of C] {$\mathcal{G}^{\underline{1}}_{1}$};
\end{tikzpicture}
    \caption{{\small A strongly connected and weight-balanced digraph $\mathcal{G}$ in which node $1$ is an articulation point of the undirected representation of $\mathcal{G}$.  $\mathcal{G}^{\underline{1}}_{1}$, $\mathcal{G}^{\underline{1}}_{2}$ and $\mathcal{G}^{\underline{1}}_{3}$ are the islands of agent~$1$.}
    }
    \label{fig::k-anom}
\end{figure}

When an eavesdropper does not have direct access to all the signals in $\{y^j(t)\}_{j\in\mathcal{N}_{\textup{out}+i}^i}$, a rational strategy appears to be that the eavesdropper estimates the signals it does not have access to. If those agents also have out-neighbors that their output signals are not available to the eavesdropper,  then the eavesdropper should estimate the state of those agents as well, until the only inputs to the dynamics that it observes are the additive admissible obfuscation signals. For example, in Fig.~\ref{fig::k-anom}, to obtain the reference value of agent $6$, agent $1$ compensates for the lack of direct access to $y^7(t)$, which enter the dynamics of agent $6$, by estimating the state of all the agents in subgraph $\mathcal{G}^{\underline{1}}_{3}$. Our results below however show that this strategy is not effective. In fact, we show that an eavesdropper (internal or external) is able to uniquely identify the reference value of an agent $i\in\VV$ if and only if it has direct access to $\{y^j(t)\}_{j\in\mathcal{N}_{\textup{out}+i}^i}$ for all $t\in\real_{\geq0}$. 

To present our results, we first introduce some notations. Let $\bar{\mathcal{V}}^{\underline{1}}_{k}$, $k\in\until{\bar{n}^1}$ be the set of the agents in the $k^{\textup{th}}$ induced disjoint subgraph obtained from removal of agent $1$ and its incident edges. Recall that if $1$ is an articulation point\footnote{An articulation point of an undirected connected graph is a node whose removal along with its incident edges disconnects the graph~\cite{THC-CEL-RLR-CS:09}.} of the undirected representation of digraph $\mathcal{G}$, then $\bar{n}^1>1$, otherwise $\bar{n}^1=1$. We refer to every induced subgraph $\mathcal{G}^{\underline{1}}_{k}=(\mathcal{V}^{\underline{1}}_{k},\mathcal{E}^{\underline{1}}_{k})\subset\mathcal{G}(\VV,\EE)$, $k\in\until{\bar{n}^1}$,  where 
$\mathcal{V}^{\underline{1}}_k=\bar{\mathcal{V}}^{\underline{1}}_{k}\cup\{1\}$ and  $\mathcal{E}^{\underline{1}}_k=\{(l,j)\in\mathcal{E}|\, l \in\mathcal{V}^{\underline{1}}_k,~ j\in\mathcal{V}^{\underline{1}}_k\}$, as the $k^{\textup{th}}$ island of agent $1$.
Note that every island of agent $1$ is connected to the rest of the digraph $\GG$ only through agent $1$ (see Fig.~\ref{fig::k-anom} for an example). Let $\mathcal{G}^{\underline{1}}_1=(\mathcal{V}^{\underline{1}}_1,\mathcal{E}^{\underline{1}}_1)$ be the island of agent $1$ that includes agent $2$, the out-neighbor of agent $1$ that agent $1$ wants to obtain its reference value $\mathsf{r}^2$. 
Because every agent in $\mathcal{G}^{\underline{1}}_1$ is connected to the rest of the agents in digraph $\mathcal{G}$ only through agent $1$, all the out-neighbors and in-neighbors of agent $2$ are necessarily in $\mathcal{G}^{\underline{1}}_1$.
Based on how each agent interacts with agent $1$, we divide the agents of island $\mathcal{G}^{\underline{1}}_1$ into three groups as described below (see Fig.~\ref{fig::network_island})   
 \begin{itemize}
   {\setlength\itemindent{-8pt} \item
   ${\mathcal{V}}^{\underline{1}}_{1,2}=\big\{i\in\VV_1^{\underline{1}}\,\big|\,i\in\Nout^1,~\Nout^i\not\subset\mathcal{N}^1_{\textup{out}+1}\big\}$,}
 {\setlength\itemindent{-8pt}    \item ${\mathcal{V}}^{\underline{1}}_{1,3}=\big\{i\in\VV_1^{\underline{1}}\,\big|\,i\notin\Nout^1\big\}$.}
   {\setlength\itemindent{-8pt}  \item
   ${\mathcal{V}}^{\underline{1}}_{1,4}=\big\{i\in\VV_1^{\underline{1}}\,\big|\,i\in\Nout^1,~\Nout^i\subseteq\mathcal{N}^1_{\textup{out}+1}\big\}$,}
    \end{itemize}
Without loss of generality, in what follows we assume that the agents in the network are labeled according to the ordered set $(1,{\mathcal{V}}^{\underline{1}}_{1,2},{\mathcal{V}}^{\underline{1}}_{1,3},{\mathcal{V}}^{\underline{1}}_{1,4},\VV\backslash{\mathcal{V}}^{\underline{1}}_1)$. We let the aggregated states and obfuscation signals of the agents in ${\mathcal{V}}^{\underline{1}}_{1,l}$,  $l\in\{2,3,4\}$, be ${\vect{x}}_l=[x^i]_{i\in{\mathcal{V}}^{\underline{1}}_{1,l}}$, ${\vect{g}}_l=[g^i]_{i\in{\mathcal{V}}^{\underline{1}}_{1,l}}$ and ${\vect{f}}_l=[f^i]_{i\in{\mathcal{V}}^{\underline{1}}_{1,l}}$. Similarly, we let the aggregated states and obfuscation signals of the agents in $\VV\backslash{\mathcal{V}}^{\underline{1}}_1$ be ${\vect{x}}_5=[x^i]_{i\in\VV\backslash{\mathcal{V}}^{\underline{1}}_1}$, $\vect{g}_5=[g^i]_{i\in\VV\backslash{\mathcal{V}}^{\underline{1}}_1}$ and $\vect{f}_5=[f^i]_{i\in\VV\backslash{\mathcal{V}}^{\underline{1}}_1}$. We partition $\lL$, $\vectsf{A}$ and $\vectsf{D}^{\textup{out}}$, respectively, to subblock matrices  $\lL_{ij}$'s, $\vectsf{A}_{ij}$'s and $\vectsf{D}^{\textup{out}}_{ij}$'s in a comparable manner to the partitioned aggregated state $(x^1,\vect{x}_2,\vect{x}_3,\vect{x}_4,\vect{x}_5)$ (see~\eqref{eq::island_dynamics}). By definition ${\lL}_{ij}=-\vectsf{A}_{ij}$, $i,j\in\{1,\cdots,5\},~i\neq j$.

\begin{lem}[A case of indistinguishable admissible initial conditions for an internal eavesdropper]\label{lem::flat_output}
Consider the modified static average consensus algorithm~\eqref{eq::consensus-modified} with a set of locally chosen admissible obfuscation signals $\{f^j,g^j\}_{j=1}^N$ over a strongly connected and weight-balanced digraph $\mathcal{G}$. Let $t\mapsto y^i(t)$ be the transmitted signal from agent $i\in\VV$ for $t\in\real_{\geq0}$. Let $\mathcal{G}^{\underline{1}}_1=(\mathcal{V}^{\underline{1}}_1,\mathcal{E}^{\underline{1}}_1)$ be an island of agent $1$ that satisfies $\mathcal{V}^{\underline{1}}_{1,2}\neq\{\}$.
Now consider an alternative implementation of algorithm~\eqref{eq::consensus-modified-x}-\eqref{eq::consensus-modified-y} with initial condition 
\begin{align}
    &{x^i}'(0)=x^i(0)=\mathsf{r}^i,\,\quad i\in\VV\backslash({\mathcal{V}}^{\underline{1}}_{1,2}\cup  {\mathcal{V}}^{\underline{1}}_{1,3}),\nonumber\\
    &{x^i}'(0)\in\real,~~\quad\quad\quad\quad i\in({\mathcal{V}}^{\underline{1}}_{1,2}\cup  {\mathcal{V}}^{\underline{1}}_{1,3}),~~s.t.\label{eq::initial_alter}\\
    &\vect{x}_2'(0)-\vect{x}_2(0)= -\vectsf{A}_{23}\lL_{33}^{-1}(\vect{x}_3'(0)-\vect{x}_3(0)),\nonumber
\end{align}
and obfuscation signals  
\begin{align}
&{g^i}'(t)=g^i(t),\quad {f^i}'(t)=f^i(t),\quad\quad\quad\qquad\quad i\in\VV\backslash{\mathcal{V}}^{\underline{1}}_{1,2},\nonumber\\
&{g^i}'(t)={g^i}(t)+\textup{e}^{-\dout^i t}({{x}^i}'(0)-{x}^i(0)),~\qquad i\in{\mathcal{V}}^{\underline{1}}_{1,2},\label{eq::perturb_alter}\\
&{f^{i}}'(t)=f^i(t)\!-\!\Big[\vectsf{A}_{23}\textup{e}^{-{\lL}_{33}t}(\vect{x}_3'(0)\!-\!\vect{x}_3(0))\Big]_{i-1},\quad i\in{\mathcal{V}}^{\underline{1}}_{1,2}.\nonumber
\end{align}
Let $t\mapsto {x^i}'(t)$ and $t\mapsto {y^i}'(t)$, $t\in\real_{\geq0}$, respectively, be the state and the transmitted signal of agent $i\in\VV$ in this case. Then,
\begin{align}
  &\sum\nolimits_{j=1}^N{x^j}'(0)=\sum\nolimits_{j=1}^N{x^j}(0)=\sum\nolimits_{j=1}^N\mathsf{r}^j,\label{eq::init-alt-sum}\\
&\lim_{t\to\infty}{x^i}'(t)=\frac{1}{N}\sum\nolimits_{j=1}^N\mathsf{r}^j,\quad\quad\quad i\in\VV.\label{eq::init-alt-converge}
\end{align}
Moreover,
\begin{align}\label{eq::unobserv-con}
&    y^{j}(t)={y^{j}}'(t),\quad t\in\real_{\geq 0},\quad \quad\quad~ j\in\VV\backslash\VV_{1,3}^{\underline{1}}.
\end{align}\boxend
\end{lem}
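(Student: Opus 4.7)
The plan is to track the block-partitioned difference $\vect{\delta}(t) = \vect{x}'(t) - \vect{x}(t)$ under the ordered partition $(1, \mathcal{V}^{\underline{1}}_{1,2}, \mathcal{V}^{\underline{1}}_{1,3}, \mathcal{V}^{\underline{1}}_{1,4}, \VV\setminus\mathcal{V}^{\underline{1}}_{1})$ and derive its closed-form evolution block-by-block. Initially, $\delta^1(0)=0$, $\vect{\delta}_4(0)=\vect{0}$, $\vect{\delta}_5(0)=\vect{0}$, $\vect{\delta}_3(0)$ is free, and $\vect{\delta}_2(0)=-\vectsf{A}_{23}\lL_{33}^{-1}\vect{\delta}_3(0)$. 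The three claims then become (i) $\texttt{sum}(\vect{\delta}(0))=0$, (ii) each $\vect{\delta}_k(t)\to\vect{0}$, and (iii) $y^{j\prime}(t)=y^j(t)$ for $j\notin\mathcal{V}^{\underline{1}}_{1,3}$.

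For \eqref{eq::init-alt-sum}, I would compute $\texttt{sum}(\vect{\delta}(0))=(\vect{1}^\top - \vect{1}^\top\vectsf{A}_{23}\lL_{33}^{-1})\vect{\delta}_3(0)$ and reduce the claim to the algebraic identity $\vect{1}^\top\vectsf{A}_{23}=\vect{1}^\top\lL_{33}$. This identity follows from the weight-balanced property applied at each $i\in\mathcal{V}^{\underline{1}}_{1,3}$, combined with two structural facts extracted from the definitions of the partition blocks: agents in $\mathcal{V}^{\underline{1}}_{1,3}$ have no in-neighbors in $\{1\}\cup\mathcal{V}^{\underline{1}}_{1,4}\cup(\VV\setminus\mathcal{V}^{\underline{1}}_{1})$ (by the island property plus the fact that $\mathcal{N}_{\textup{out}}^j\subseteq\mathcal{N}^1_{\textup{out}+1}$ for $j\in\mathcal{V}^{\underline{1}}_{1,4}$, so $i\notin\mathcal{N}_{\textup{out}}^j$). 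Hence $d_{\textup{in}}^i=d_{\textup{out}}^i$ contributes only entries from $\mathcal{V}^{\underline{1}}_{1,2}\cup\mathcal{V}^{\underline{1}}_{1,3}$, which yields the identity row-by-row.

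For the signal-equality and dynamics claims, I would compute $\dot{\vect{\delta}}$ block-wise using \eqref{eq::consensus-modified} and exploit: (a) $\vectsf{A}_{43}=\vect{0}$ and $\vectsf{A}_{k5}=\vectsf{A}_{5k}=\vect{0}$ for $k\in\{2,3,4\}$ (island property); (b) the identity $\delta^j + (g^{j\prime}-g^j)=0$ one wishes to establish on $\{1\}\cup\mathcal{V}^{\underline{1}}_{1,2}\cup\mathcal{V}^{\underline{1}}_{1,4}\cup(\VV\setminus\mathcal{V}^{\underline{1}}_{1})$; and (c) the specific form of the $f^{i\prime}$ correction on $\mathcal{V}^{\underline{1}}_{1,2}$. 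Bootstrapping in the order $\vect{\delta}_3\to\vect{\delta}_2\to(\vect{\delta}_1,\vect{\delta}_5)\to\vect{\delta}_4$: first, because set-$3$ agents receive only from $\mathcal{V}^{\underline{1}}_{1,2}\cup\mathcal{V}^{\underline{1}}_{1,3}\cup\mathcal{V}^{\underline{1}}_{1,4}$ and the $2$- and $4$-block $y$-differences will be shown to vanish, one obtains $\dot{\vect{\delta}}_3=-\lL_{33}\vect{\delta}_3$, so $\vect{\delta}_3(t)=\textup{e}^{-\lL_{33}t}\vect{\delta}_3(0)$. Second, the forcing $\vectsf{A}_{23}\vect{\delta}_3(t)$ entering $\dot{\vect{\delta}}_2$ is cancelled exactly by the $-[\vectsf{A}_{23}\textup{e}^{-\lL_{33}t}\vect{\delta}_3(0)]$ correction in $f^{i\prime}$, leaving $\dot{\vect{\delta}}_2=-\vectsf{D}^{\textup{out}}_{22}\vect{\delta}_2$, so $\vect{\delta}_2(t)=\textup{e}^{-\vectsf{D}^{\textup{out}}_{22}t}\vect{\delta}_2(0)$; the $g^{i\prime}$ correction on $\mathcal{V}^{\underline{1}}_{1,2}$ is then exactly the offset that makes $y^{i\prime}(t)=y^i(t)$ on this block. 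Third, the remaining $(\delta^1,\vect{\delta}_5)$ subsystem becomes a homogeneous linear ODE with zero initial data (all cross-forcings cancel), hence $\delta^1\equiv 0$ and $\vect{\delta}_5\equiv\vect{0}$; finally $\dot{\vect{\delta}}_4=-\lL_{44}\vect{\delta}_4$ with zero initial data gives $\vect{\delta}_4\equiv\vect{0}$. Together, \eqref{eq::unobserv-con} follows, and the convergence \eqref{eq::init-alt-converge} follows because $-\lL_{33}$ is Hurwitz ($\lL_{33}$ is a nonsingular M-matrix: each row sum is $\sum_{j\notin\mathcal{V}^{\underline{1}}_{1,3}}\!a_{ij}\ge 0$ and strong connectivity of $\GG$ forces at least one positive row-sum in every strongly connected component of the induced subgraph), $\vectsf{D}^{\textup{out}}_{22}\succ\vect{0}$, and $\sum_{j}x^{j\prime}(0)=\sum_j\mathsf{r}^j$ is preserved.

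The main obstacle is the bookkeeping of which off-diagonal blocks of $\vectsf{A}$ are zero by the partition definitions versus which are nonzero and must be neutralized by the designed perturbation corrections. In particular, establishing $\vectsf{A}_{43}=\vect{0}$ and $\mathsf{a}_{i1}=0$ for $i\in\mathcal{V}^{\underline{1}}_{1,3}$ is essential to decoupling the $\vect{\delta}_3$ dynamics, while handling the cyclic coupling between $\delta^1$ and $\vect{\delta}_5$ (resolved only once all within-island blocks are shown to have cancelled $y$-differences) is the step where the argument must be ordered carefully. Verifying the Hurwitzness of $-\lL_{33}$ from strong connectivity of the parent digraph is the final ingredient that converts all three closed-form block expressions into the asymptotic convergence statement.
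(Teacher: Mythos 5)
Your overall route is the paper's: the same block error variable under the ordered partition, the same closed forms $\delta\vect{x}_3(t)=\textup{e}^{-\lL_{33}t}\delta\vect{x}_3(0)$ and $\delta\vect{x}_2(t)=\textup{e}^{-\Dout_{22}t}\delta\vect{x}_2(0)$, the same cancellation of the $\vectsf{A}_{23}$ forcing by the $f$-correction and output offset by the $g$-correction, and the same column-sum identity behind \eqref{eq::init-alt-sum}. Two points, however, are wrong as written. First, the structural fact you call essential, $\mathsf{a}_{i1}=0$ for $i\in\mathcal{V}^{\underline{1}}_{1,3}$ (i.e.\ $\vectsf{A}_{31}=\vect{0}$), is false in general: $i\in\mathcal{V}^{\underline{1}}_{1,3}$ only says $i\notin\Nout^1$, i.e.\ $\vectsf{A}_{13}=\vect{0}$, and says nothing about whether $1\in\Nout^i$. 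In the paper's Fig.~\ref{fig::k-anom}, agent $4$ lies in $\mathcal{V}^{\underline{1}}_{1,3}$ of the first island yet has agent $1$ as an out-neighbor, and the block Laplacian in \eqref{eq::island_dynamics} accordingly keeps $-\vectsf{A}_{31}$ as a generically nonzero block. The block-$3$ error equation is therefore $\delta\dvect{x}_3=-\lL_{33}\delta\vect{x}_3+\vectsf{A}_{31}\delta y^1+\vectsf{A}_{32}\delta\vect{y}_2+\vectsf{A}_{34}\delta\vect{y}_4$, and its decoupling rests on $\delta y^1\equiv\delta\vect{y}_2\equiv\delta\vect{y}_4\equiv 0$, not on absent edges.

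This feeds the second, structural gap: your bootstrap order $\vect{\delta}_3\to\vect{\delta}_2\to(\delta^1,\vect{\delta}_5)\to\vect{\delta}_4$ is circular. Block $3$ needs the output equalities of blocks $1$, $2$, $4$; block $2$ needs those of $1$, $2$, $4$ (it has $\vectsf{A}_{21},\vectsf{A}_{22},\vectsf{A}_{24}$ couplings besides the cancelled $\vectsf{A}_{23}$ term); block $1$ needs those of $2$, $4$, $5$; block $4$ needs those of $1$ and $2$ --- there is no sequential order in which each step uses only previously established equalities, and your ``will be shown to vanish'' is exactly the missing step. The repair, and what the paper actually does, is simultaneous guess-and-verify: take the candidate $\delta x^1\equiv 0$, $\delta\vect{x}_2(t)=\textup{e}^{-\Dout_{22}t}\delta\vect{x}_2(0)$, $\delta\vect{x}_3(t)=\textup{e}^{-\lL_{33}t}\delta\vect{x}_3(0)$, $\delta\vect{x}_4\equiv\vect{0}$, $\delta\vect{x}_5\equiv\vect{0}$ (your formulas), check that it satisfies the full coupled error dynamics \eqref{eq::island_dynamics} with the input differences induced by \eqref{eq::perturb_alter} (every cross term drops because the corresponding output difference vanishes along the candidate), and invoke uniqueness of solutions of a linear ODE with integrable inputs; \eqref{eq::unobserv-con} is then read off. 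Two lesser remarks: for \eqref{eq::init-alt-converge} you transfer the limit of the unprimed trajectory through the decaying error, whereas the paper instead re-verifies that the primed signals remain admissible with the same $\beta^i$ and $\alpha$ and invokes Theorem~\ref{thm::main}; your route is legitimate, but your Hurwitzness argument for $-\lL_{33}$ is loose --- strong connectivity only forces a strictly dominant row in the sink strongly connected components of the induced subgraph, and what you need is that every node reaches such a row (weakly chained diagonal dominance). The paper sidesteps this entirely via weight-balancedness: $\lL+\lL^\top\succeq 0$ with kernel spanned by $\vect{1}_N$ makes every proper principal submatrix, in particular $\lL_{33}+\lL_{33}^\top$, positive definite.
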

The proof of Lemma~\ref{lem::flat_output} is given in the Appendix. Lemma~\ref{lem::flat_output} states that there exists infinite number of admissible initial conditions and admissible obfuscation signals for an agent $i\in\mathcal{N}_{\textup{out}}^1$ with $\mathcal{N}^i_{\textup{out}}\backslash\mathcal{N}_{\textup{out}+1}^1\neq\{\}$
and any agent $j\not\in\mathcal{N}_{\textup{out}}^1$ that agent $1$ cannot distinguish between, because for all of these cases, the signals transmitted from any out-neighbor of agent $1$ are identical. Accordingly, agent $1$ cannot obtain any estimate on the reference value of these agents. We can develop similar results, as stated in the corollary below, for an external eavesdropper that does not have direct access to the output signal of some of the out-neighbors of agent $i\in\VV$. The proof of this corollary is omitted for brevity.

\begin{cor}[A case of indistinguishable admissible initial conditions for an external eavesdropper]\label{cor::flat_output_ext}
Consider the modified static average consensus algorithm~\eqref{eq::consensus-modified} with a set of locally chosen admissible obfuscation signals $\{f^j,g^j\}_{j=1}^N$ over a strongly connected and weight-balanced digraph $\mathcal{G}$. Let $t\mapsto y^i(t)$ be the transmitted signal from agent $i\in\VV$ for $t\in\real_{\geq0}$. Consider an external eavesdropper that has direct access to the output signal of agent $2\in\VV$ but not that of the agent  $3\in\mathcal{N}_{\textup{out}}^2$. Now consider an alternative implementation of algorithm~\eqref{eq::consensus-modified-x}-\eqref{eq::consensus-modified-y} with initial condition ${x^i}{}'(0)=x^i(0)=\mathsf{r}^i$ for $i\in\VV\backslash\{2,3\}$, and ${x^2}{}'(0),{x^3}{}'(0)\in\real$ such that ${x}^2{}'(0)-{x}^2(0)= -\frac{\mathsf{a}_{23}}{\ell_{33}}({x}^3{}'(0)-{x}^3(0))$,
and obfuscation signals  ${g^i}{}'(t)=g^i(t)$, ${f^i}{}'(t)=f^i(t)$, for $i\in\VV\{2\}$, and ${g^2}{}'(t)={g^2}(t)+\textup{e}^{-\dout^2 t}({{x}^2}{}'(0)-{x}^2(0))$ and ${f^{2}}{}'(t)=f^2(t)-\mathsf{a}_{23}\textup{e}^{-\ell_{33}t}({x}^3{}'(0)-{x}^3(0))$. Let $t\mapsto {x^i}{}'(t)$ and $t\mapsto {y^i}{}'(t)$, $t\in\real_{\geq0}$, respectively, be the state and the transmitted signal of agent $i\in\VV$ in this case. Then, the equations~\eqref{eq::init-alt-sum} and \eqref{eq::init-alt-converge} hold. 
Moreover,
\begin{align*}
&    y^{j}(t)={y^{j}}'(t),\quad t\in\real_{\geq 0},\quad \quad\quad~ j\in\VV\backslash\{3\}.
\end{align*}\boxend
\end{cor}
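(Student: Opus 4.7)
The plan is to mirror the proof of Lemma~\ref{lem::flat_output} specialized to the two-node alteration $\{2,3\}$ that arises when an external observer loses access to $y^3$. Introduce the deviations $\delta x^i = {x^i}{}'-x^i$, $\delta g^i = {g^i}{}'-g^i$, $\delta f^i = {f^i}{}'-f^i$, and $\delta y^i = \delta x^i + \delta g^i$. By construction $\delta x^i(0)=0$ and $\delta g^i\equiv\delta f^i\equiv 0$ for all $i\notin\{2,3\}$ as well as $\delta g^3\equiv\delta f^3\equiv 0$, while $\delta g^2(t)=\textup{e}^{-\dout^2 t}\delta x^2(0)$ and $\delta f^2(t)=-\mathsf{a}_{23}\textup{e}^{-\ell_{33}t}\delta x^3(0)$.

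First I would establish the output-matching identity ${y^j}{}'(t)=y^j(t)$ for every $j\in\VV\backslash\{3\}$ and every $t\geq 0$ by analyzing the linear deviation system. Writing $\dot{\delta x}=-\lL\,\delta x+\vectsf{A}\,\delta g+\delta f$ componentwise, the equation for $\delta x^3$ decouples to $\dot{\delta x}^3=-\ell_{33}\,\delta x^3$ under the structural feature implicit in the corollary---that agent~$3$'s outgoing flow is directed exclusively to agent~$2$, equivalently $\mathsf{a}_{k3}=0$ for $k\notin\{2,3\}$. This yields $\delta x^3(t)=\textup{e}^{-\ell_{33}t}\delta x^3(0)$. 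Substituting into the equation for $\delta x^2$ and using the prescribed $\delta f^2$ cancels the forcing, giving $\dot{\delta x}^2=-\dout^2\,\delta x^2$ and hence $\delta x^2(t)=\textup{e}^{-\dout^2 t}\delta x^2(0)$; combining with $\delta g^2$ produces $\delta y^2\equiv 0$. The same argument propagates $\delta y^k\equiv 0$ for every $k\neq 3$ and $\delta x^k\equiv 0$ for $k\notin\{2,3\}$, establishing the required identity.

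Next I would verify the sum-preservation claim~\eqref{eq::init-alt-sum} by summing over all agents: only $x^2$ and $x^3$ are perturbed at $t=0$, so the net change equals $\delta x^2(0)+\delta x^3(0)=(1-\mathsf{a}_{23}/\ell_{33})\delta x^3(0)$. Under the same structural hypothesis $\mathsf{a}_{k3}=0$ for $k\notin\{2,3\}$, the weight-balanced condition forces $\ell_{33}=\din^3=\mathsf{a}_{23}$, so this difference vanishes and $\sum_j{x^j}{}'(0)=\sum_j\mathsf{r}^j$. Then the convergence claim~\eqref{eq::init-alt-converge} would follow by applying Theorem~\ref{thm::main-col} to the alternative trajectory: after checking that the exponentially decaying modifications to $f^2$ and $g^2$ leave the admissibility conditions~\eqref{eq::nec-suf-admin-col-a}--\eqref{eq::nec-suf-admin-col-b} satisfied for $\{{f^j}{}',{g^j}{}'\}$, the theorem yields $\lim_{t\to\infty}{x^j}{}'(t)=\frac{1}{N}\sum_j{x^j}{}'(0)=\frac{1}{N}\sum_j\mathsf{r}^j$ for every $j\in\VV$.

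The main obstacle I anticipate is twofold: carefully tracking the sign conventions in the prescribed $\delta g^2$ and $\delta f^2$ so that both the pointwise cancellation $\delta y^2(t)\equiv 0$ and the preservation of~\eqref{eq::nec-suf-admin-col-a} go through simultaneously, and making explicit the graph-structural hypothesis $\mathsf{a}_{k3}=0$ for $k\notin\{2,3\}$ that underlies both the decoupling of $\delta x^3$ from all other deviations and the weight-balanced identity $\ell_{33}=\mathsf{a}_{23}$ on which sum preservation rests.
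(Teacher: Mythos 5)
Your plan is the one the paper intends: the corollary's proof is omitted because it is the scalar specialization of Lemma~\ref{lem::flat_output}, and your deviation-system argument (decouple $\delta x^3$, cancel the forcing on $\delta x^2$ with the prescribed $\delta f^2$, propagate $\delta y^k\equiv 0$ and $\delta x^k\equiv 0$ for the untouched agents, then conclude by uniqueness of the solution of the linear ODE) is exactly the argument used there. You are also right to make explicit the structural hypothesis $\mathsf{a}_{k3}=0$ for $k\notin\{2,3\}$: it is the scalar analogue of the zero blocks $\vectsf{A}_{13},\vectsf{A}_{43},\vectsf{A}_{53}$ in~\eqref{eq::island_dynamics}, it is what decouples $\delta x^3$, and via weight balance it gives $\ell_{33}=\mathsf{a}_{23}$, hence $\delta x^2(0)+\delta x^3(0)=0$ and~\eqref{eq::init-alt-sum}; without it both the output-matching claim and the sum preservation genuinely fail, so the corollary must be read with that assumption. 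The one step you left unresolved is the sign of $\delta g^2$, and as written it fails: with $\delta g^2(t)=+\textup{e}^{-\dout^2 t}\delta x^2(0)$ (the sign in the corollary's and Lemma~\ref{lem::flat_output}'s statements) your candidate trajectory gives $\delta y^2(t)=2\,\textup{e}^{-\dout^2 t}\delta x^2(0)\neq 0$, and moreover $\lim_{t\to\infty}\int_0^t(\delta f^2+\dout^2\delta g^2)\,\textup{d}\tau=2\,\delta x^2(0)\neq 0$, so~\eqref{eq::local_beta-i} (hence~\eqref{eq::nec-suf-admin-col-a}) is also violated. The construction works with $\delta g^2(t)=-\textup{e}^{-\dout^2 t}\delta x^2(0)$, which is the sign the paper actually uses in its proof, see~\eqref{eq::delta_g2_f2}; with that sign the pointwise cancellation $\delta y^2\equiv 0$ and the preservation of the admissibility conditions hold simultaneously, so the tension you anticipated dissolves rather than needing to be balanced. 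Finally, your use of Theorem~\ref{thm::main-col} for~\eqref{eq::init-alt-converge} is fine and essentially equivalent to the paper's use of Theorem~\ref{thm::main}; the paper's route has the minor advantage of exhibiting the alternative signals as locally chosen admissible signals with the \emph{same} $\alpha$ and $\beta^i$, which is the form in which the corollary is invoked in the proof of Theorem~\ref{thm::main_alpha_beta_known}(b).
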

Building on our results so far, we are now ready to state the necessary and sufficient condition under which an eavesdropper  with knowledge set~\eqref{eq::knowledge_set-1} can discover the reference value of an agent $i\in\VV$.
\begin{thm}[Privacy preservation using the modified average consensus algorithm~\eqref{eq::consensus-modified} when the knowledge set of the eavesdroppers is given by Case 1 in Definition~\ref{def::know}]\label{thm::main_alpha_beta_known}
Consider the modified static average consensus algorithm~\eqref{eq::consensus-modified} with a set of locally chosen admissible obfuscation signals $\{f^i,g^i\}_{i=1}^N$ over a strongly connected and weight-balanced digraph $\mathcal{G}$.
Let the knowledge set of the internal eavesdropper $1$ and external agent $\textup{ext}$ be~\eqref{eq::knowledge_set-1}. Then, (a) agent $1$ can reconstruct the exact initial value of agent $i\in\VV\backslash\{1\}$ if and only if $i\in\Nout^1$ and $\Nout^i\subseteq \mathcal{N}_{\textup{out}+1}^1$; (b) the external agent $\textup{ext}$ can reconstruct the exact initial value of agent $i\in\VV$ if and only if 
$\{\{y^j(\tau)\}_{j\in\mathcal{N}_{\textup{out}+i}^i}\}_{\tau=0}^\infty\subseteq \mathcal{Y}^{\textup{ext}}(\infty)$.
\end{thm}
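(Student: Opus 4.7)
My plan is to prove each direction of parts (a) and (b) by combining the constructive observer results (Theorems on observer design for internal/external agents with knowledge set \eqref{eq::knowledge_set-1}) for sufficiency with the indistinguishability results (Lemma on flat output and its external corollary) for necessity.

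\emph{Sufficiency.} For part (a), assume $i\in\Nout^1$ and $\Nout^i\subseteq\mathcal{N}_{\textup{out}+1}^1$. Then agent $1$ has direct access to $y^i$ (because $i\in\Nout^1$) and to every $y^j$ with $j\in\Nout^i$ (because each such $j$ lies in $\mathcal{N}_{\textup{out}+1}^1$, so either $j=1$, in which case $y^1\in\mathcal{Y}^1(\infty)$ trivially, or $j\in\Nout^1$); thus $\{y^j(t)\}_{j\in\mathcal{N}_{\textup{out}+i}^i}\subset\mathcal{Y}^1(\infty)$. Because $\alpha$ and $\{\beta^j\}_{j=1}^N$ belong to $\mathcal{K}^1$, agent $1$ can initialize and run the observer~\eqref{eq::out-observer}, which, by Theorem on observer design for internal agents, satisfies $\nu(t)\to\mathsf{r}^i$. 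Part (b) is similar: the hypothesis $\{y^j(\tau)\}_{j\in\mathcal{N}_{\textup{out}+i}^i}\subseteq\mathcal{Y}^{\textup{ext}}(\infty)$ together with the availability of $\alpha,\{\beta^j\}_{j=1}^N$ in $\mathcal{K}^{\textup{ext}}$ allows the external agent to execute the observer~\eqref{eq::out-observer-external}, which converges to $\mathsf{r}^i$ by Theorem on observer design for external agents.

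\emph{Necessity.} I will argue the contrapositive: if the stated access condition fails, I will exhibit two distinct admissible executions of \eqref{eq::consensus-modified} with different reference values of agent $i$ that produce identical signals in the malicious agent's observation set, rendering $\mathsf{r}^i$ unrecoverable. For part (a), suppose either $i\notin\Nout^1$ (so $i\in\VV_{1,3}^{\underline{1}}$ of some island, possibly $\VV_1^{\underline{1}}$) or $i\in\Nout^1$ but $\Nout^i\not\subseteq\mathcal{N}_{\textup{out}+1}^1$ (so $i\in\VV_{1,2}^{\underline{1}}$). In either case $\VV_{1,2}^{\underline{1}}\neq\{\}$ and Lemma on indistinguishable admissible initial conditions applies: for every admissible perturbation of $\vect{x}_3(0)$ (which can be chosen to shift the component $\mathsf{r}^i$), the construction~\eqref{eq::initial_alter}--\eqref{eq::perturb_alter} yields an alternative trajectory whose outputs coincide with the originals on $\VV\setminus\VV_{1,3}^{\underline{1}}$, hence on all of $\Nout^1\cup\{1\}=\mathcal{Y}^1(\infty)$'s support, while \eqref{eq::init-alt-sum}--\eqref{eq::init-alt-converge} confirm the alternative is admissible. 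Thus agent $1$ cannot distinguish $\mathsf{r}^i$ from infinitely many alternatives. For part (b), the analogous argument uses Corollary on external indistinguishability: if there exists some $j\in\mathcal{N}_{\textup{out}+i}^i$ with $y^j\notin\mathcal{Y}^{\textup{ext}}(\infty)$, then either $j=i$ (and $\mathsf{r}^i$ can be freely changed along with a compensating perturbation) or $j$ plays the role of agent $3$ in the Corollary, producing an output-equivalent admissible execution with a different $\mathsf{r}^i$.

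\emph{Expected difficulty.} The sufficiency portions are essentially bookkeeping over the observers already proved. The main obstacle is the necessity portion, specifically the case $i\in\VV_{1,3}^{\underline{1}}$: the Lemma on flat output is phrased around perturbing $\vect{x}_3(0)$ in a way that requires $\VV_{1,2}^{\underline{1}}\neq\{\}$ to absorb the change through $-\vectsf{A}_{23}\lL_{33}^{-1}$. I need to verify that whenever agent $i$ is not directly adjacent as required, such a nonempty $\VV_{1,2}^{\underline{1}}$ indeed exists in the relevant island --- strong connectivity of $\GG$ guarantees that any island containing a non-out-neighbor of $1$ must have at least one agent in $\VV_{1,2}^{\underline{1}}$ connecting it back to $1$, which is the structural fact I need to make explicit. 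The remaining subcase, $i\in\VV_{1,4}^{\underline{1}}$ but with some $j\in\Nout^i$ lying outside $\mathcal{N}_{\textup{out}+1}^1$, is ruled out by definition of $\VV_{1,4}^{\underline{1}}$, so no additional work is needed there.
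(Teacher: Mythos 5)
Your proposal is correct and follows essentially the same route as the paper: sufficiency via the internal/external observer theorems, and necessity via the indistinguishability Lemma and its external Corollary applied to the cases $i\in\VV_{1,2}^{\underline{1}}$ and $i\in\VV_{1,3}^{\underline{1}}$. The structural point you flag (that an island containing a non-out-neighbor of agent $1$ necessarily has $\VV_{1,2}^{\underline{1}}\neq\{\}$, by strong connectivity of the island) is indeed needed and is left implicit in the paper, and your sketch of it is sound.
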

\begin{proof}
Proof of statement (a): If $i\in\Nout^1$ and $\Nout^i\subseteq \mathcal{N}_{\textup{out}+1}^1$, Lemma~\eqref{lem::easy-prey-external-ad} guarantees that agent $1$ can employ an observer to obtain the reference value of agent $i$.  Next, we show that if $i\not\in\Nout^1$ or $\Nout^i\not\subset \mathcal{N}_{\textup{out}+1}^1$, then agent $1$ cannot uniquely identify the reference value $\mathsf{r}^i$ of agent $i$. Suppose agent $i\in\VV\backslash\{1\}$ satisfies $i\not\in\Nout^1$ (resp. $i\in \mathcal{N}_{\textup{out}}^1$ and $\Nout^i\not\subset \mathcal{N}_{\textup{out}+1}^1$). Without loss of generality let $\VV_{1}^{\underline{1}}$ be the island of agent $1$ that contains this agent $i$. Consequently, $i\in\VV_{1,3}^{\underline{1}}$ (resp. $i\in\VV_{1,2}^{\underline{1}}$). Then, by virtue of Lemma~\ref{lem::flat_output}, we  know that there exists infinite number of alternative admissible initial conditions and corresponding admissible obfuscation signals for any agents in $\VV_{1,3}^{\underline{1}}\cup\VV_{1,2}^{\underline{1}}$  for  which the time histories of each signal transmitted to agent $1$ are identical. Therefore, agent $1$ cannot uniquely identify the initial condition of any agents in $ \VV_{1,3}^{\underline{1}}\cup\VV_{1,2}^{\underline{1}}$. In light of Lemma~\ref{lem::easy-prey-external-ad} and Corollary~\ref{cor::flat_output_ext}, the proof of statement (b) is similar to that of statement (a) and is omitted for brevity.  
\end{proof} 

\begin{rem}[Privacy preserving graph topologies]{\rm 
 There are several classes of undirected graphs for which any two agents on the graph have an exclusive neighbor with respect to the other. Thus, by Theorem~\ref{thm::main_alpha_beta_known} privacy of all the agents is preserved from any internal eavesdropper when they implement algorithm~\eqref{eq::consensus-modified}. Examples include cyclic bipartite undirected graphs, 4-regular ring lattice undirected graphs with $N>5$, planar stacked prism graphs, directed ring graphs, and any biconnected undirected graph that does not contain a cycle with $3$ edges  (see~\cite{CRR-RJW:05} for the formal definition of these graph topologies). Some examples of these privacy-preserving topologies are shown in  Fig.~\ref{Fig::private_graphs}.
Theorem~\ref{thm::main_alpha_beta_known} also presents an opportunity to make agents private with respect to a particular or all the other agents by rewiring the graph so that the conditions of the theorem are satisfied. The idea of rewiring the graph to induce privacy preservation has been explored in the literature~\cite{DIR-RAF-KML:19,YX-Zl:20,SZ-TOT-MAD:20,ILDR-RAF-KML:20}. However, in practice, rewiring may be infeasible or costly.
}
\boxend\end{rem}

\begin{figure}[t!]
  \subfloat[A cyclic 
  bipartite undirected connected 
   graph.]{
    \begin{tikzpicture}[auto,thick,scale=0.67, every node/.style={scale=0.67}]
\node (null) at (-1.5,-0.5){};
\node (null2) at (4,0){};
\node (1) at (0,0) [draw, minimum size=15pt,color=blue, circle, very thick,fill=blue!10]{{\small \textbf{1}}};

\node (2) at (0,2) [draw, minimum size=15pt,color=blue, circle, very thick,fill=blue!10]{{\small \textbf{2}}};

\node (3) at (0,4) [draw, minimum size=15pt,color=blue, circle, very thick,fill=blue!10]{{\small \textbf{3}}};

\node (4) at (2,0) [draw, minimum size=15pt,color=blue, circle, very thick]{{\small \textbf{4}}};

\node (5) at (2,2) [draw, minimum size=15pt,color=blue, circle, very thick]{{\small \textbf{5}}};

\node (6) at (2,4) [draw, minimum size=15pt,color=blue, circle, very thick]{{\small \textbf{6}}};
 
\draw (1)--(4) ;
\draw (1)--(5) ;
\draw (1)--(6) ;

\draw (2)--(4) ;
\draw (2)--(5) ;
\draw (2)--(6) ;

\draw (3)--(4) ;
\draw (3)--(5) ;
\draw (3)--(6) ;
\end{tikzpicture}
  }\qquad
  \subfloat[A 4-regular ring lattice 
  undirected  connected graph 
  on 9 vertices.]
  {
     \begin{tikzpicture}[auto,thick,scale=0.67, every node/.style={scale=0.67}]
\node (null) at (-1,-0.5){};
\node (1) at ({2*sin(0*360/9)},{2*cos(0*360/9)}) [draw, minimum size=15pt,color=blue, circle, very thick] {{\small \textbf{1}}};
  
\node (2) at ({2*sin(1*360/9)},{2*cos(1*360/9)}) [draw, minimum size=15pt,color=blue, circle, very thick] {{\small 2}};

 \node (3) at ({2*sin(2*360/9)},{2*cos(2*360/9)}) [draw, minimum size=15pt,color=blue, circle, very thick] {{\small 3}};

\node (4) at ({2*sin(3*360/9)},{2*cos(3*360/9)}) [draw, minimum size=15pt,color=blue, circle, very thick] {{\small 4}};

\node (5) at ({2*sin(4*360/9)},{2*cos(4*360/9)}) [draw, minimum size=15pt,color=blue, circle, very thick] {{\small 5}};

\node (6) at ({2*sin(5*360/9)},{2*cos(5*360/9)}) [draw, minimum size=15pt,color=blue, circle, very thick] {{\small 6}};

\node (7) at ({2*sin(6*360/9)},{2*cos(6*360/9)}) [draw, minimum size=15pt,color=blue, circle, very thick] {{\small 7}};

\node (8) at ({2*sin(7*360/9)},{2*cos(7*360/9)}) [draw, minimum size=15pt,color=blue, circle, very thick] {{\small 8}};

\node (9) at ({2*sin(8*360/9)},{2*cos(8*360/9)}) [draw, minimum size=15pt,color=blue, circle, very thick] {{\small 9}};

\draw (1)--(2) ;
\draw (1)--(9) ;
\draw (1)--(8) ;
\draw (1)--(3) ;

\draw (2)--(3) ;
\draw (2)--(1) ;
\draw (2)--(4) ;
\draw (2)--(9) ;

\draw (3)--(4) ;
\draw (3)--(2) ;
\draw (3)--(5) ;
\draw (3)--(1) ;

\draw (4)--(5) ;
\draw (4)--(3) ;
\draw (4)--(6) ;
\draw (4)--(2) ;

\draw (5)--(6) ;
\draw (5)--(4) ;
\draw (5)--(7) ;
\draw (5)--(3) ;

\draw (6)--(7) ;
\draw (6)--(5) ;
\draw (6)--(8) ;
\draw (6)--(4) ;

\draw (7)--(8) ;
\draw (7)--(6) ;
\draw (7)--(9) ;
\draw (7)--(5) ;

\draw (8)--(9) ;
\end{tikzpicture}
  }\\ \vspace{-0.08in}
 \subfloat[A triangular stacked prism graph.]{
  \begin{tikzpicture}[auto,thick,scale=0.67, every node/.style={scale=0.67}]
\node (null) at (-1.5,-0.5){};
         
\node (1) at ({2.5*sin(0*360/3)},{2.5*cos(0*360/3)}) [draw, minimum size=15pt,color=blue, circle, very thick] {{\small \textbf{1}}};
  
\node (2) at ({2.5*sin(1*360/3)},{2.5*cos(1*360/3)}) [draw, minimum size=15pt,color=blue, circle, very thick] {{\small 2}};

 \node (3) at ({2.5*sin(2*360/3)},{2.5*cos(2*360/3)}) [draw, minimum size=15pt,color=blue, circle, very thick] {{\small 3}};

\node (4) at ({1.5*sin(0*360/3)},{1.5*cos(0*360/3)}) [draw, minimum size=15pt,color=blue, circle, very thick] {{\small 4}};

\node (5) at ({1.5*sin(1*360/3)},{1.5*cos(1*360/3)}) [draw, minimum size=15pt,color=blue, circle, very thick] {{\small 5}};

\node (6) at ({1.5*sin(2*360/3)},{1.5*cos(2*360/3)}) [draw, minimum size=15pt,color=blue, circle, very thick] {{\small 6}};

\node (7) at ({0.5*sin(0*360/3)},{0.5*cos(0*360/3)}) [draw, minimum size=15pt,color=blue, circle, very thick] {{\small 7}};

\node (8) at ({0.5*sin(1*360/3)},{0.5*cos(1*360/3)}) [draw, minimum size=15pt,color=blue, circle, very thick] {{\small 8}};

\node (9) at ({0.5*sin(2*360/3)},{0.5*cos(2*360/3)}) [ draw, minimum size=15pt,color=blue, circle, very thick] {{\small 9}};

\draw (1)--(2) ;
\draw (2)--(3) ;
\draw (3)--(1) ;

\draw (4)--(5) ;
\draw (5)--(6) ;
\draw (6)--(4) ;

\draw (7)--(8) ;
\draw (8)--(9) ;
\draw (9)--(7) ;

\draw (1)--(4) ;
\draw (4)--(7) ;

\draw (2)--(5) ;
\draw (5)--(8) ;

\draw (3)--(6) ;
\draw (6)--(9) ;
\end{tikzpicture}
  }\qquad
  \subfloat[A lattice graph with 16 vertices (a biconnected graph that contains no cycle with 3 edges).]
  {
      \begin{tikzpicture}[auto,thick,scale=0.67, every node/.style={scale=0.67}]
\node (null) at (-3,-0.5){};
\node (null) at (2.2,0){};

\node (1) at (-2,1) [draw, minimum size=15pt,color=blue, circle, very thick,] {{\small \textbf{1}}};
  
\node (2) at (-2,0) [draw, minimum size=15pt,color=blue, circle, very thick] {{\small 2}};

 \node (3) at (-2,-1) [draw, minimum size=15pt,color=blue, circle, very thick] {{\small 3}};
 
 \node (4) at (-2,-2) [draw, minimum size=15pt,color=blue, circle, very thick] {{\small 4}};

\node (5) at (-1,1) [draw, minimum size=15pt,color=blue, circle, very thick] {{\small 5}};

\node (6) at (-1,0) [draw, minimum size=15pt,color=blue, circle, very thick] {{\small 6}};

\node (7) at (-1,-1) [draw, minimum size=15pt,color=blue, circle, very thick] {{\small 7}};

\node (8) at (-1,-2) [draw, minimum size=15pt,color=blue, circle, very thick] {{\small 8}};

\node (9) at (0,1) [draw, minimum size=15pt,color=blue, circle, very thick] {{\small 9}};

\node (10) at (0,0) [draw, minimum size=15pt,color=blue, circle, very thick,inner sep=2.2pt] {{\small 10}};

\node (11) at (0,-1) [ draw, minimum size=15pt,color=blue, circle, very thick,inner sep=2.2pt] {{\small 11}};

\node (12) at (0,-2) [ draw, minimum size=15pt,color=blue, circle, very thick,inner sep=2.2pt] {{\small 12}};

\node (13) at (1,1) [draw, minimum size=15pt,color=blue, circle, very thick,inner sep=2.2pt] {{\small 13}};

\node (14) at (1,0) [draw, minimum size=15pt,color=blue, circle, very thick,inner sep=2.2pt] {{\small 14}};

\node (15) at (1,-1) [ draw, minimum size=15pt,color=blue, circle, very thick,inner sep=2.2pt] {{\small 15}};

\node (16) at (1,-2) [ draw, minimum size=15pt,color=blue, circle, very thick,inner sep=2.2pt] {{\small 16}};

\draw (1)--(2) ;
\draw (1)--(5) ;
\draw (2)--(3) ;
\draw (2)--(6) ;
\draw (3)--(4) ;
\draw (3)--(7) ;
\draw (4)--(8) ;
\draw (5)--(6) ;
\draw (5)--(9) ;
\draw (6)--(7) ;
\draw (6)--(10) ;
\draw (7)--(8) ;
\draw (7)--(11) ;
\draw (8)--(12) ;
\draw (9)--(10) ;
\draw (9)--(13) ;
\draw (10)--(11) ;
\draw (10)--(14) ;
\draw (11)--(12) ;
\draw (11)--(15) ;
\draw (12)--(16) ;
\draw (13)--(14) ;
\draw (14)--(15) ;
\draw (15)--(16) ;
\end{tikzpicture}
  }
     \caption{{\small
  Examples of privacy-preserving graph topologies.}
 }\label{Fig::private_graphs}
\end{figure}

Next, we show that even though agent $1$ cannot obtain the initial condition of the individual agents in $\VV_{k,2}^{\underline{1}}\neq\{\}$ and $\VV_{k,3}^{\underline{1}}$, $k\in\{1,\cdots\bar{n}^1\}$, it can obtain the average of the initial conditions of those agents. 
 Without loss of generality, we demonstrate our results for $k=1$. 
 
\begin{prop}[Island anonymity]\label{eq::k-anom}
Consider the dynamic consensus algorithm~\eqref{eq::consensus-modified} over a strongly connected and weight-balanced digraph $\mathcal{G}$ in which  
$\mathcal{V}^{\underline{1}}_{1,2}\neq\{\}$.  Let $n_{2,3}=|\VV^1_{1,2}\cup\VV^1_{1,3}|$ and $\dout^{1,1}=\sum\limits_{j\in(\VV^1_{1,2}\cup\VV^1_{1,4})}\!\!\!\!\mathsf{a}_{1j}$ be the out-degree of agent $1$ in subgraph ${\mathcal{G}}^{\underline{1}}_1$. Then, the eavesdropper $1$ with the knowledge set~\eqref{eq::knowledge_set-1} can employ the observer
\rm{\begin{align*}
    &\dot{\zeta_i}=\sum\nolimits_{j=1}^N\mathsf{a}_{ij}(y^i-y^j),~~~~~~~ \zeta_i(0)=-\beta^i,\quad i\in\VV_{1,4}^{\underline{1}},\\
    &\dot{\eta}=\,-\!\!\!\!\!\!\!\sum_{j\in({\mathcal{V}}^{\underline{1}}_{1,2}\cup{\mathcal{V}}^{\underline{1}}_{1,4})}\!\!\!\!\!\!\!\!a_{1j}(y^1-y^j),\quad~~~ \eta(0)=-\!\!\sum\nolimits_{j\in\mathcal{V}_1^{\underline{1}}\backslash\{1\}}\!\! \beta^i,\\
    &\mu(t)=\frac{\eta(t)-\sum\nolimits_{i\in\VV_{1,4}^{\underline{1}}}\zeta_i}{n_{2,3}}+x^1(t).
\end{align*}}
to have $\lim_{t\to\infty}\mu(t)=\frac{1}{n_{2,3}}\sum\limits_{j\in(\VV^{\underline{1}}_{1,2}\cup\VV^{\underline{1}}_{1,3})}\!\!\!\mathsf{r}^j$.
\end{prop}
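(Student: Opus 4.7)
The plan is to split the observer into two components whose limits combine to give the claimed average. The first component, $\{\zeta_i\}_{i\in\VV_{1,4}^{\underline{1}}}$, consists of copies of the observer from Theorem~\ref{thm::easy-prey}, one for each out-neighbor of agent~$1$ in island~$1$ whose out-neighbors are all known to agent~$1$. By that theorem, $\zeta_i(t)+x^1(t)\to\mathsf{r}^i$, so $\zeta_i(t)\to\mathsf{r}^i-\avrg{\mathsf{r}}$ as $t\to\infty$. The real work is to analyze $\eta(t)$ and show that it asymptotically yields $\sum_{j\in\VV_1^{\underline{1}}\setminus\{1\}}(\mathsf{r}^j-\avrg{\mathsf{r}})$; subtracting $\sum_{i\in\VV_{1,4}^{\underline{1}}}\zeta_i(t)$ will then leave exactly the quantity $\sum_{j\in(\VV_{1,2}^{\underline{1}}\cup\VV_{1,3}^{\underline{1}})}(\mathsf{r}^j-\avrg{\mathsf{r}})$ associated with the agents whose individual reference values cannot be identified.

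For the $\eta$ analysis, I would introduce the aggregate $S_1(t)=\sum_{j\in\VV_1^{\underline{1}}\setminus\{1\}}x^j(t)$ and compute $\dot{S}_1$ by summing~\eqref{eq::consensus-modified-x} over $j\in\VV_1^{\underline{1}}\setminus\{1\}$. Because island~$1$ is attached to the rest of the graph only through agent~$1$, every edge incident to such a $j$ lies inside $\mathcal{G}_1^{\underline{1}}$, so the double sum collapses to internal differences $(x^j-x^k)$ that pairwise cancel, plus exchange terms with agent~$1$, plus perturbation terms. Adding $\dot{\eta}$, the $(y^1-y^j)$ exchange with the out-neighbors of agent~$1$ in island~$1$ appears on both sides with opposite sign; the crucial identity needed to make them cancel is the per-island weight-balance of agent~$1$, namely $\dout^{1,1}=\din^{1,1}$, which follows by summing the node-wise weight-balance $\dout^j=\din^j$ over all $j\in\VV_1^{\underline{1}}\setminus\{1\}$ (whose neighborhoods are fully contained in the island). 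A bookkeeping of the $g^k$ contributions—splitting the $k=1$ and $k\neq 1$ cases, again using $\dout^{1,1}=\din^{1,1}$ and the fact that $\mathsf{a}_{1k}\neq 0$ only for $k\in\VV_{1,2}^{\underline{1}}\cup\VV_{1,4}^{\underline{1}}$—will show that the only surviving terms are the local perturbations of each interior agent, yielding
\begin{equation*}
\dot{\eta}(t)+\dot{S}_1(t)=\sum\nolimits_{j\in\VV_1^{\underline{1}}\setminus\{1\}}\big(f^j(t)+\dout^j\,g^j(t)\big).
\end{equation*}

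Integrating from $0$ to $t$ and substituting the chosen initial condition $\eta(0)=-\sum_{j\in\VV_1^{\underline{1}}\setminus\{1\}}\beta^j$ together with $S_1(0)=\sum_{j\in\VV_1^{\underline{1}}\setminus\{1\}}\mathsf{r}^j$, I would invoke~\eqref{eq::local_beta-i} term by term to see that $\int_0^t\!(f^j+\dout^j g^j)d\tau\to\beta^j$ as $t\to\infty$, and Theorem~\ref{thm::main-col} (applied to the overall algorithm) to conclude $S_1(t)\to(|\VV_1^{\underline{1}}|-1)\avrg{\mathsf{r}}$. The $\beta^j$ contributions cancel exactly, leaving $\lim_{t\to\infty}\eta(t)=\sum_{j\in\VV_1^{\underline{1}}\setminus\{1\}}(\mathsf{r}^j-\avrg{\mathsf{r}})$. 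Plugging this and the limits of $\zeta_i$ into $\mu(t)$, dividing by $n_{2,3}=|\VV_{1,2}^{\underline{1}}\cup\VV_{1,3}^{\underline{1}}|$, and adding $x^1(t)\to\avrg{\mathsf{r}}$ produces $\frac{1}{n_{2,3}}\sum_{j\in(\VV_{1,2}^{\underline{1}}\cup\VV_{1,3}^{\underline{1}})}(\mathsf{r}^j-\avrg{\mathsf{r}})+\avrg{\mathsf{r}}$, in which the $\avrg{\mathsf{r}}$ offsets cancel to give the claimed average. The principal obstacle is the algebraic manipulation of the coupled $x$ and $g$ terms in $\dot{\eta}+\dot{S}_1$; this must be done carefully to expose the per-island weight-balance identity $\dout^{1,1}=\din^{1,1}$, which is what drives the cancellation and ultimately makes the observer work.
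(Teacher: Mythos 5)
Your plan is correct and follows essentially the same route as the paper's proof: the paper also aggregates the island dynamics (there written via the island Laplacian $\lL_1^{\underline{1}}$ and left-multiplication by the ones vector, invoking its auxiliary Lemma~\ref{lem::subgraph-balanced} that every island of agent $1$ is strongly connected and weight-balanced), obtains exactly your identity $\dot{\eta}+\dot{S}_1=\sum_{j\in\VV_1^{\underline{1}}\backslash\{1\}}(f^j+\dout^j g^j)$, reuses the argument of Theorem~\ref{thm::easy-prey} for the $\zeta_i$'s, and concludes from~\eqref{eq::local_beta-i} and the common consensus limit; your derivation of $\dout^{1,1}=\din^{1,1}$ by summing the node-wise balances over $\VV_1^{\underline{1}}\backslash\{1\}$ is precisely the content of that lemma. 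One small correction to your bookkeeping: on a digraph the interior cross terms do \emph{not} ``pairwise cancel,'' since $\sum_{j,k\in\VV_1^{\underline{1}}\backslash\{1\}}\mathsf{a}_{jk}(x^j-x^k)$ vanishes only because each interior agent has $\din^j=\dout^j$ with all its neighbors inside the island (equivalently $\vect{1}^\top\lL_1^{\underline{1}}=\vect{0}$), not by antisymmetry of the weights; the same weight-balance assumption you already invoke for agent $1$ fixes this, so the gap is cosmetic rather than structural. Your taking of the limits of $\eta$ and the $\zeta_i$'s separately is also fine, since each limit exists under~\eqref{eq::local_beta-i} and the convergence of all states to $\avrg{\mathsf{r}}$; the paper instead keeps the finite-time expressions and uses $\lim_{t\to\infty}\big(n_{2,3}\,x^1(t)-\sum_{j\in\VV_{1,2}^{\underline{1}}\cup\VV_{1,3}^{\underline{1}}}x^j(t)\big)=0$ at the end, which is an equivalent bookkeeping choice.
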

\begin{proof}
Consider the aggregate dynamics of $\eta$ and $\vect{x}_i$, $i\in\{2,3,4\}$, which reads as
\begin{align*}
    \begin{bmatrix}
    \dot{\eta}\\
    \dvect{x}_2\\
    \dvect{x}_3\\
    \dvect{x}_4
    \end{bmatrix}=&-\underbrace{\begin{bmatrix}\dout^{1,1} & -\vectsf{A}_{12}&\vect{0}&-\vectsf{A}_{14}\\
    -\vectsf{A}_{21}&\Dout_{22}&-\vectsf{A}_{23}&-\vectsf{A}_{24}\\
    -\vectsf{A}_{31}&-\vectsf{A}_{32}&\Dout_{33}&\vect{0}\\
    -\vectsf{A}_{41}&-\vectsf{A}_{42}&\vect{0}&\Dout_{44}
    \end{bmatrix}}_{\lL_{1}^{\underline{1}}}\begin{bmatrix}
    y^1\\
    \vect{y}_2\\\vect{y}_3\\\vect{y}_4
    \end{bmatrix}+\\
    &\begin{bmatrix}
    0\\
    \vect{f}_2+\Dout_{22}\vect{g}_2\\
    \vect{f}_3+\Dout_{33}\vect{g}_3\\
    \vect{f}_4+\Dout_{44}\vect{g}_4
    \end{bmatrix}.
\end{align*}
Notice that ${\lL_{1}^{\underline{1}}}$ is the Laplacian matrix of graph $\mathcal{G}^{\underline{1}}_1$. By Virtue of Lemma~\ref{lem::subgraph-balanced} in the appendix we know that $\mathcal{G}^{\underline{1}}_1$ is a strongly connected and weight-balanced digraph. Consequently, left multiplying both sides of equation above with $\vect{1}_{|\VV_{1}^{\underline{1}}|}^\top$ gives
\begin{align*}
\dot{\eta}\!+\sum_{j\in\mathcal{V}_1^{\underline{1}}\backslash\{1\}}\!\!\!\!\! x^i=
&\!\!\!\!\sum\limits_{j\in\mathcal{V}_1^{\underline{1}}\backslash\{1\}}\!\!\!\!\! (f^j(t)+\dout^j\, g^j(t)).
\end{align*}
Thereby, given $\eta(0)=-\!\!\!\!\!\sum\limits_{j\in\mathcal{V}_1^{\underline{1}}\backslash\{1\}}\!\!\!\!\! \beta^i$ and $x^i(0)=\mathsf{r}^i$, we obtain
\begin{align*}
\eta(t)=\!\!\!\!\!\!\!\sum_{j\in\mathcal{V}_1^{\underline{1}}\backslash\{1\}}\!\!\!\!\!\! \mathsf{r}^j~-\!\!\!\!\!\sum_{j\in\mathcal{V}_1^{\underline{1}}\backslash\{1\}}\!\!\!\!\!\! x^j(t)&+\!\!\!\!\!\!\!\sum\limits_{j\in\mathcal{V}_1^{\underline{1}}\backslash\{1\}}\!\!\!\!\! \int_{0}^t\!(f^j(\tau)+\dout^j\, g^j(\tau))\textup{d}\tau\\
&-\!\!\!\sum\nolimits_{j\in\mathcal{V}_1^{\underline{1}}\backslash\{1\}}\!\! \beta^i.
\end{align*}
On the other hand, following the proof of Lemma~\ref{lem::easy-prey-external-ad}, we can conclude that 
\begin{align*}
    \sum\limits_{i\in\VV_{1,4}^{\underline{1}}}\!\!\!\zeta_i(t)=\!\!\!\!\sum\limits_{i\in\VV_{1,4}^{\underline{1}}}\!\!\!\mathsf{r}^i-\!\!\!\sum\limits_{i\in\VV_{1,4}^{\underline{1}}}\!\!\!x^i(t)&+\!\!\!\!\sum\limits_{i\in\VV_{1,4}^{\underline{1}}}\!\!\!\! \int_{0}^t\!(f^i(\tau)+\dout^i\, g^i(\tau))\textup{d}\tau\\
    &-\!\!\sum\nolimits_{i\in\VV_{1,4}^{\underline{1}}}\!\!\! \beta^i.
\end{align*}
Therefore, we can write
\begin{align*}
   n_{2,3}\,\mu(t)=&\!\!\!\!\!\!\!\sum_{j\in(\mathcal{V}_{1,2}^{\underline{1}}\cup\mathcal{V}_{1,3}^{\underline{1}})}\!\!\!\!\!\! \mathsf{r}^i~-\!\!\!\!\!\sum_{j\in(\mathcal{V}_{1,2}^{\underline{1}}\cup\mathcal{V}_{1,3}^{\underline{1}})}\!\!\!\!\!\! x^i(t)\,-\!\!\!\!\!\sum\limits_{j\in(\mathcal{V}_{1,2}^{\underline{1}}\cup\mathcal{V}_{1,3}^{\underline{1}})} \!\!\!\!\!\beta^i\\
   &+\!\!\!\!\!\!\!\sum\limits_{j\in(\mathcal{V}_{1,2}^{\underline{1}}\cup\mathcal{V}_{1,3}^{\underline{1}})}\!\!\! \int_{0}^t\!(f^j(\tau)+\dout^j\, g^j(\tau))\textup{d}\tau+n_{2,3}\,x^1(t). 
\end{align*}
The proof then follows from the necessary  condition~\eqref{eq::local_beta-i} on the obfuscation signals, and the fact that $\lim_{t\to\infty}n_{2,3}\,x^1(t)-\sum_{j\in(\mathcal{V}_{1,2}^{\underline{1}}\cup\mathcal{V}_{1,3}^{\underline{1}})}\! x^i(t)=0$ (recall that $\lim_{t \to \infty}x^i(t)=\lim_{t \to \infty}x^j(t),~\forall i,j \in \VV $).
\end{proof}

\subsection{Case 2 and Case 3 knowledge sets}
The first result below shows that if $\beta^i$ corresponding to the locally chosen admissible obfuscation signals of an agent $i\in\VV$ is not known to the eavesdropper, the privacy of the agent $i$ is preserved even if the eavesdropper knows all the transmitted input and output signals of agent $i$ and the parameter $\alpha$. The proof of this lemma is given in the appendix. 

\begin{lem}[Privacy preservation for $i\in\VV$ via a concealed $\beta^i$]\label{lem:case2_priv}
Consider the modified static average consensus algorithm~\eqref{eq::consensus-modified} with a set of locally chosen admissible obfuscation signals $\{f^j,g^j\}_{j=1}^N$ over a strongly connected and weight-balanced digraph $\mathcal{G}$.
Let the knowledge set of the eavesdropper $1$ include the form of conditions~\eqref{eq::local_beta-i} and~\eqref{eq::nec-suf-admin-sig}, and also the parameter $\alpha$ that the agents agreed to use. Let agent $1$ be the in-neighbor of agent $i\in\VV$ and all the out-neighbors of agent $i$, i.e., agent $1$ knows $\{y^j(t)\}_{j\in\mathcal{N}_{\textup{out}+i}^i}$, $t\in\real_{\geq0}$. Then, the eavesdropper $1$ can obtain $\mathsf{r}^i$ of agent $i$ if and only if it knows $\beta^i$.
\end{lem}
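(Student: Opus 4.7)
The ``if'' direction is immediate from Theorem~\ref{thm::easy-prey}: knowing $\beta^i$, agent $1$ initializes observer~\eqref{eq::out-observer} with $\zeta(0)=-\beta^i$ and obtains $\nu(t)\to\mathsf{r}^i$.

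For the ``only if'' direction, the plan is to prove the contrapositive by constructing an indistinguishable admissible alternative. Given the true tuple $(\{\mathsf{r}^j\},\{f^j,g^j\})$ and any $c\in\real$, I would pick any auxiliary agent $k\in\VV\setminus\{1,i\}$ (which exists since $N\geq 3$) and define an alternative tuple by
\begin{align*}
&\mathsf{r}^{i}{}'\!=\mathsf{r}^i\!+\!c,\quad \mathsf{r}^{k}{}'\!=\mathsf{r}^k\!-\!c,\quad f^{i}{}'\!=f^i,\quad f^{k}{}'\!=f^k,\\
&g^{i}{}'(t)=g^i(t)-c\,\textup{e}^{-\dout^i t},\quad g^{k}{}'(t)=g^k(t)+c\,\textup{e}^{-\dout^k t},
\end{align*}
leaving all other quantities unchanged. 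The core claim is that this alternative is admissible (in the sense of Theorems~\ref{thm::main-col}--\ref{thm::main}) and produces observations that agent $1$ cannot tell apart from those of the true system.

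To establish indistinguishability, I would use the ansatz $y^{j}{}'(t)=y^j(t)$ for every $j\in\VV$ and check self-consistency. Under this ansatz, $\Delta x^i:=x^{i}{}'-x^i$ satisfies $\dot{\Delta x^i}=-\dout^i\,\Delta x^i$ with $\Delta x^i(0)=c$, yielding $\Delta x^i(t)=c\,\textup{e}^{-\dout^i t}$. By the choice of $g^{i}{}'$, $\Delta x^i+\Delta g^i=0$, so indeed $y^{i}{}'=y^i$; symmetrically $y^{k}{}'=y^k$. For every $l\in\VV\setminus\{i,k\}$, the inputs $\{y^j{}'\}_{j\in\Nout^l}$ to the dynamics of $x^l{}'$ equal the corresponding original signals, so by uniqueness of ODE solutions $x^l{}'=x^l$ and hence $y^l{}'=y^l$. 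In particular, $\mathcal{Y}^1(\infty)=\{x^1,y^1,\{y^j\}_{j\in\Nout^1}\}$ coincides across the two systems.

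Finally I would verify admissibility: direct integration gives $\beta^{i}{}'=\beta^i-c$ and $\beta^{k}{}'=\beta^k+c$, while $\beta^{j}{}'=\beta^j$ for all other $j$, so $\sum_j\beta^{j}{}'=\sum_j\beta^j=0$ and condition~\eqref{eq::nec-suf-admin-sig-a} is preserved. Since $\dout^i,\dout^k>0$ by strong connectivity, the auxiliary integral $\int_0^t\textup{e}^{-(t-\tau)}\textup{e}^{-\dout^i\tau}\textup{d}\tau$ vanishes as $t\to\infty$, so $\lim_{t\to\infty}\int_0^t\textup{e}^{-(t-\tau)}g^{i}{}'(\tau)\textup{d}\tau=\alpha$, and likewise for $g^{k}{}'$, preserving~\eqref{eq::nec-suf-admin-sig-b}. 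Thus, for each $c\in\real$, the alternative system is admissible with the same parameter $\alpha$ and produces the same observations; since $\mathsf{r}^{i}{}'\neq\mathsf{r}^i$ whenever $c\neq 0$, agent $1$ cannot uniquely identify $\mathsf{r}^i$ from $\mathcal{K}^1$ alone. The only delicate step I foresee is confirming the compensating modification at the auxiliary agent $k$: changing $\mathsf{r}^i$ alone would violate $\sum_j\beta^{j}{}'=0$, and the fact that $\beta$-shifts caused by keeping $y^k$ fixed must cancel across exactly two agents is what forces the paired construction and uses the hypothesis $N\geq 3$.
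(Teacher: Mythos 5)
Your proposal is correct in substance and follows essentially the same route as the paper: both proofs handle the ``if'' direction via the observer of Theorem~\ref{thm::easy-prey}, and both prove the ``only if'' direction by constructing a second admissible implementation in which the unknown $\beta^i$ absorbs a shift $\pm c$ of the initial conditions of agent $i$ and of one auxiliary agent $k\in\VV\setminus\{1,i\}$, verifying admissibility through Theorem~\ref{thm::main} and indistinguishability through output equality plus uniqueness of ODE solutions. Your compensation is in fact a bit cleaner than the paper's: you leave $f^i,f^k$ untouched and cancel the transient with $g^{i}{}'(t)=g^i(t)-c\,\textup{e}^{-\dout^i t}$ (matched to the homogeneous decay of $\dot{\Delta x}^i=-\dout^i\Delta x^i$), whereas the paper perturbs both $f$ and $g$ with rate $\dout^i+d$, $d>\max\{\dout^i,\dout^k\}$; your bookkeeping ($\beta^{i}{}'=\beta^i-c$, $\beta^{k}{}'=\beta^k+c$, the $\alpha$-condition preserved because $\dout^i,\dout^k>0$) checks out.

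The one refinement you should add concerns the choice of $k$. The lemma only says agent $1$'s knowledge set \emph{includes} the listed items, so agent $1$ may additionally know some of the other agents' $\beta^j$'s; if it happened to know $\beta^k$ for the particular $k$ you picked, your alternative (which has $\beta^{k}{}'\neq\beta^k$) would be inconsistent with its knowledge and hence distinguishable. The paper closes this by observing that, because agent $1$ knows the constraint~\eqref{eq::nec-suf-admin-sig-a}, ignorance of $\beta^i$ forces the existence of at least one other agent $k\in\VV\setminus\{1,i\}$ whose $\beta^k$ is also unknown (otherwise $\beta^i$ would be deducible from the sum), and then at best $\beta^i+\beta^k$ is known; the paired construction is carried out with \emph{that} $k$. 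Replacing ``any auxiliary agent $k$'' with this selection argument makes your proof complete; the construction itself needs no change.
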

A similar statement to that of Lemma~\ref{lem:case2_priv} can be made about an external eavesdropper. In the case of the external eavesdropper, it is very likely that the eavesdropper does not know $\alpha$, as well. Building on the result of Lemma~\ref{lem:case2_priv}, we make our final formal privacy preservation statement as follows. 

\begin{thm}[Privacy preservation using the modified average consensus algorithm~\eqref{eq::consensus-modified} when the knowledge set of the eavesdroppers is given by Case 2 in Definition~\ref{def::know}]
\label{thm::case2_priv_main}Consider the modified static average consensus algorithm~\eqref{eq::consensus-modified} with a set of locally chosen admissible obfuscation signals $\{f^j,g^j\}_{j=1}^N$ over a strongly connected and weight-balanced digraph $\mathcal{G}$.
Let the knowledge set of the internal eavesdropper $1$ and the external eavesdropper $\textup{ext}$ be given by Case 2 in Definition~\ref{def::know}. Then, the eavesdropper 1 (resp. agent $\textup{ext}$) cannot reconstruct the reference value $\mathsf{r}^i$ of any agent $i\in\VV\backslash\{1\}$ (resp. $i\in\VV$).
\end{thm}
\begin{proof}
Any agent $i\in\VV\backslash\{1\}$ satisfies either $\mathcal{N}_{\textup{out}+i}^i\subset \mathcal{N}_{\textup{out}+1}^1$ or $\mathcal{N}_{\textup{out}+i}^i\not\subset \mathcal{N}_{\textup{out}+1}^1$. Since the eavesdropper $1$ does not know $\{\beta^i\}_{j=2}^N$, if $\mathcal{N}_{\textup{out}+i}^i\subset \mathcal{N}_{\textup{out}+1}^1$, $i\in\VV\backslash\{1\}$, (agent $1$ has access to all the transmitted input and output signals of agent $i$), it follows from Lemma~\ref{lem:case2_priv} that it cannot reconstruct $\mathsf{r}^i$. Consequently, if $\mathcal{N}_{\textup{out}+i}^i\not\subset \mathcal{N}_{\textup{out}+1}^1$, $i\in\VV\backslash\{1\}$, since the eavesdropper $1$ lacks more information (it does not have access to some or all of the transmitted input and output signals of agent $i$), we conclude that the eavesdropper $1$ cannot reconstruct $\mathsf{r}^i$.
The proof of the statement for the external eavesdropper is similar to that of the internal eavesdropper $1$, and is omitted for brevity (note here that the external eavesdropper $\text{ext}$ lacks the knowledge of $\alpha$, as well).  
\end{proof}
Next we show that in fact, knowing $\beta^i$, $i\in\VV$, e.g., when it is known that agents use $\beta^i=0$, does not result in the breach of privacy against external eavesdroppers that do not know $\alpha$.
\begin{thm}[Privacy preservation using the modified average consensus algorithm~\eqref{eq::consensus-modified} when the knowledge set of the eavesdroppers is given by Case 3 in Definition~\ref{def::know}]
\label{thm::case2_priv_main}Consider the modified static average consensus algorithm~\eqref{eq::consensus-modified} with a set of locally chosen admissible obfuscation signals $\{f^j,g^j\}_{j=1}^N$ over a strongly connected and weight-balanced digraph $\mathcal{G}$.
Let the knowledge set of the external eavesdropper $\textup{ext}$ be given by Case 3 in Definition~\ref{def::know}. Then, the eavesdropper $\textup{ext}$ cannot reconstruct the reference value $\mathsf{r}^i$ of any agent $i\in\VV$.
\end{thm}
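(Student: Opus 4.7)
The plan is to exhibit, for every $c\in\real$, an alternative admissible execution of~\eqref{eq::consensus-modified} that produces identical observable signals to the true execution and is consistent with every piece of information in the Case~3 knowledge set~\eqref{eq::knowledge_set-3}, but whose reference values are uniformly shifted by $-c$. Because $c$ is arbitrary, no reference value $\mathsf{r}^i$ can be pinned down by agent $\textup{ext}$.

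Starting from the true execution with references $\{\mathsf{r}^j\}_{j=1}^N$, perturbations $\{f^j,g^j\}_{j=1}^N$, global parameter $\alpha$, and resulting trajectories and outputs $t\mapsto(x^j(t),y^j(t))$, I would define, for each $c\in\real$, the alternative
\begin{align*}
{\mathsf{r}^j}{}'=\mathsf{r}^j-c,\quad {g^j}{}'(t)=g^j(t)+c,\quad {f^j}{}'(t)=f^j(t)-c\,\dout^j,\quad j\in\VV.
\end{align*}
The key verification steps, in order, would be: (i)~substitution into~\eqref{eq::consensus-modified} shows that ${x^j}{}'(t)=x^j(t)-c$ solves the alternative state equation (shifting the state by $-c$ introduces a $+c\,\dout^j$ term on the right-hand side, which is exactly cancelled by the $-c\,\dout^j$ absorbed into ${f^j}{}'$), whence ${y^j}{}'(t)={x^j}{}'(t)+{g^j}{}'(t)=y^j(t)$ for every $j\in\VV$; (ii)~$\int_0^t({f^j}{}'+\dout^j{g^j}{}')\,\textup{d}\tau=\int_0^t(f^j+\dout^j g^j)\,\textup{d}\tau\to\beta^j$, so condition~\eqref{eq::local_beta-i} holds with the \emph{same} $\{\beta^j\}_{j=1}^N$; (iii)~the identity $\int_0^t\textup{e}^{-(t-\tau)}\,\textup{d}\tau=1-\textup{e}^{-t}\to 1$ yields $\int_0^t\textup{e}^{-(t-\tau)}{g^j}{}'(\tau)\,\textup{d}\tau\to\alpha+c$, so~\eqref{eq::nec-suf-admin-sig-b} holds with $\alpha'=\alpha+c$; (iv)~$\lim_{t\to\infty}{x^j}{}'(t)=\avrg{\mathsf{r}}-c=\frac{1}{N}\sum_{j=1}^N{\mathsf{r}^j}{}'$, confirming admissibility. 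Adding the bounded constant $c$ preserves $\mathcal{L}^{\infty}_1$ membership of both ${f^j}{}'$ and ${g^j}{}'$.

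To conclude, I would observe that this alternative leaves every element of $\mathcal{K}^{\textup{ext}}$ in~\eqref{eq::knowledge_set-3} unchanged, namely $\mathcal{G}(\VV,\EE,\vectsf{A})$, the functional form of~\eqref{eq::local_beta-i} and~\eqref{eq::nec-suf-admin-sig}, the vector $\{\beta^j\}_{j=1}^N$, and every intercepted signal in $\mathcal{Y}^{\textup{ext}}(\infty)$, while ${\mathsf{r}^i}{}'=\mathsf{r}^i-c$ ranges over all of $\real$ as $c$ varies. Hence agent $\textup{ext}$ cannot reconstruct $\mathsf{r}^i$ for any $i\in\VV$.

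The main obstacle is not computational but conceptual: recognizing that the unknown global parameter $\alpha$ corresponds precisely to a common-mode rigid shift of all reference values, absorbed by a uniform constant bias in every $g^j$ and a degree-weighted offset in every $f^j$. Once this symmetry is identified, invariance of every $\beta^j$ and every $y^j$ follows by inspection, and the only nontrivial calculation is the one-line verification that the exponential convolution of a constant equals that constant.
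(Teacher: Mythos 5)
Your proposal is correct and follows essentially the same route as the paper: the paper's proof uses exactly this common-mode shift ($\vect{x}'(0)=\vect{x}(0)-a\vect{1}$, $\vect{g}'=\vect{g}+a\vect{1}$, $\vect{f}'=\vect{f}-[\dout^1,\cdots,\dout^N]^{\top}a$), checks that $\{\beta^i\}_{i=1}^N$ is unchanged while~\eqref{eq::nec-suf-admin-sig-b} holds with $\alpha+a$, and concludes indistinguishability from $\vect{y}'\equiv\vect{y}$ since $\alpha$ is absent from the Case~3 knowledge set. The only cosmetic difference is that you verify $\vect{y}'\equiv\vect{y}$ by substituting ${x^j}{}'=x^j-c$ into the ODE and invoking uniqueness, whereas the paper computes $\vect{y}'(t)$ from the variation-of-constants formula using $\textup{e}^{-\lL t}\vect{1}=\vect{1}$.
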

\begin{proof}
 The transmitted out signals of the agents implementing~\eqref{eq::consensus-modified} with the locally chosen admissible obfuscation signals $\{f^j,g^j\}_{j=1}^N$ are
$\vect{y}(t)= \text{e}^{-\LL t}\vect{x}(0)+\int_{0}^{t} \textup{e}^{-\lL (t-\tau)}(\vect{f}(\tau)+\vectsf{A}\,\vect{g}(\tau))\,\textup{d}\tau+\vect{g}(t)$. Now consider an alternative implementation of~\eqref{eq::consensus-modified} with initial conditions $\vect{x}'(0)=\vect{x}(0)-a\vect{1}$ and 
 $\vect{g}'(t)=\vect{g}(t)+a\vect{1}$ and $\vect{f}'(t)=\vect{f}(t)-[\dout^1,\cdots,\dout^N]^{\top}a$ for any $a\in\real$. Note that $\{f^{j'},g^{j'}\}_{j=1}^N$ are valid locally chosen admissible obfuscation signals that satisfy \eqref{eq::local_beta-i} and \eqref{eq::nec-suf-admin-sig-a} with the same parameter $\beta^i$, $i\in\VV$ of $\{f^j,g^j\}_{j=1}^N$ and satisfy \eqref{eq::nec-suf-admin-sig-b} with $\alpha+a$ where $\alpha$ is the parameter of~\eqref{eq::nec-suf-admin-sig-b} corresponding to $\{g^j\}_{j=1}^N$. The transmitted out signal of the agents in this implementation are  $
 \vect{y}'(t)= \text{e}^{-\LL t}\vect{x}'(0)+\int_{0}^{t} \textup{e}^{-\lL (t-\tau)}(\vect{f}'(\tau)+\vectsf{A}\,\vect{g}'(\tau))\,\textup{d}\tau+\vect{g}'(t)=\text{e}^{-\LL t}\vect{x}(0)-a\text{e}^{-\LL T}\vect{1}+\int_{0}^{t} \textup{e}^{-\lL (t-\tau)}(\vect{f}(\tau)+\vectsf{A}\,\vect{g}(\tau)-[\dout^1,\cdots,\dout^N]^{\top}a+\vectsf{A}\vect{1}a)\,\textup{d}\tau+\vect{g}'(t)=\vect{x}(t)-a\text{e}^{-\LL t}\vect{1}+\vect{g}(t)+a\vect{1}=\vect{x}(t)+\vect{g}=\vect{y}(t)$, where we used $\text{e}^{-\LL t}\vect{1}=\vect{1}$. Since the eavesdropper $\text{ext}$ does not know the parameter of the condition~\eqref{eq::nec-suf-admin-sig-b} and $\vect{y}'(t)\equiv\vect{y}(t)$ for all $t\in\real_{\geq0}$, it cannot distinguish between the actual and the alternative implementations. Therefore, it cannot uniquely identify the initial condition of the agents. 
\end{proof}

\begin{rem}[Guaranteed privacy preservation when an ultimately secure authority assigns the admissible obfuscation signals]{\rm
 If there exists an ultimately secure and trusted authority that assigns the agents' admissible private obfuscation signals in a way that they collectively satisfy~\eqref{eq::nec-suf-admin-col}, the privacy of the agents is not trivially guaranteed. This is because it is rational to assume that the eavesdroppers know the necessary condition~\eqref{eq::nec-suf-admin-col} and may be able to exploit it to their benefit. However, in light of Theorem~\ref{thm::case2_priv_main}, we are now confident to offer the privacy preservation guarantee for such a case. This is because in this case, the eavesdroppers' knowledge set lacks more information than Case 2 in Definition~\ref{def::know} (note that the locally chosen admissible obfuscation signals are a specially structured subset of all the possible classes of the admissible obfuscation signals).}
\end{rem}

\section{Performance Demonstration}\label{sec::numeric}
\subsection{Stochastic vs. deterministic privacy preservation}
\begin{figure}
     \centering
\includegraphics[height=1in,width=0.37\textwidth]{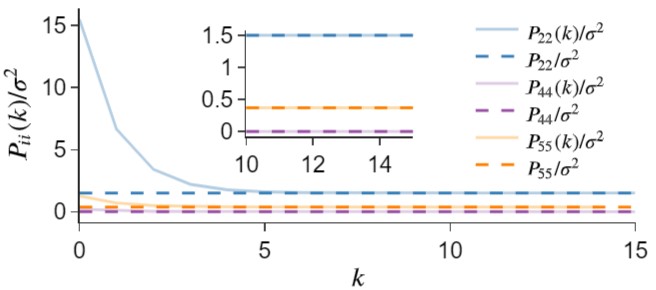}\vspace{-0.1in}
        \caption{{\small Agent 1's (eavesdropper) maximum likelihood estimator's result when method of~\cite{mo2017privacy} is used over graph of Fig.~\ref{fig::k-anom-num}(a).}}
        \label{fig:mo}
\end{figure}

\begin{figure}
    \centering
    \includegraphics[width=0.48\textwidth]{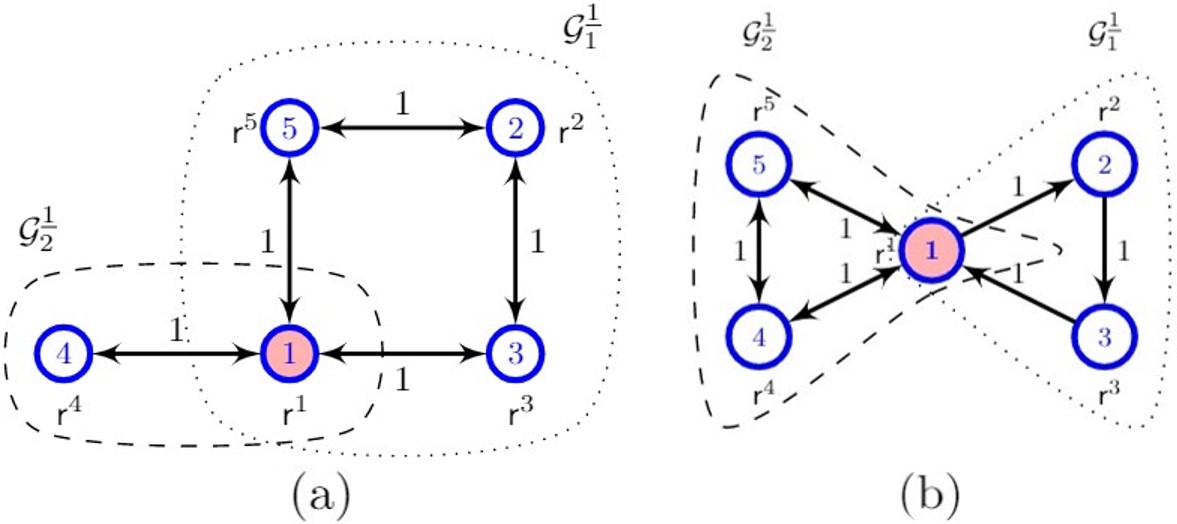}
    \caption{{\small Two strongly connected and weight-balanced graphs $\mathcal{G}$.}
    }
    \label{fig::k-anom-num}
\end{figure}

The deterministic and stochastic approaches to privacy preservation withhold different definitions of a private agent. In our deterministic setup, privacy is preserved when an eavesdropper, despite its knowledge set, ends up in an underdetermined system of equations when it wants to obtain the reference value of an agent. Therefore, the eavesdropper is left with infinite guesses of a private agent's reference value, which it cannot favor any of them more than the other. However, the stochastic privacy of an agent is preserved when the eavesdropper's estimate of the reference value yields a non-zero uncertainty. For example, in~\cite{mo2017privacy} a maximum likelihood estimator is used by the eavesdropper to estimate the reference value of the other agents. It is shown that the variance of $P(k)$ of this estimator converges to a constant matrix $P$. The privacy statement determines that the agents' privacy whose corresponding component in $P$ converges to zero is breached. More specifically, given a vector $\zeta$, a space of the agents' initial condition $\zeta^\top \vect{x}(0)$ is disclosed to the eavesdropper if $\zeta^\top P \zeta=0$ and if $\zeta^\top P \zeta>0$, it is interpreted as conserving the privacy of the subspace. In this setting, for an agent whose corresponding component of $P$ is non-zero, the eavesdropper does not know the agent's exact reference value, but it has an estimate on it. Hence, we tend to favor the deterministic notion of privacy over stochastic as the deterministic approach reveals less information. Figure~\ref{fig:mo} is the replicate of the result of an example study over the graph in Fig.~\ref{fig::k-anom-num}(a) in~\cite{mo2017privacy}, which shows the evolution of the covariance of the maximum-likelihood estimator of the eavesdropper. As expected $P_{44}$  converges to zero but $P_{22}$ and $P_{55}$ not. Even though $P_{22}$ and $P_{55}$ are non-zero,  they are pretty small, indicating that the eavesdropper can have a good estimate of the reference values of these agents. In contrast in our work, our privacy preservation shows that for agents whose privacy is preserved, the eavesdropper not only cannot obtain the reference value but also cannot establish an estimate. 

Consider the network given in Fig~\ref{fig::k-anom-num}(a). To demonstrate over results consider the following  three implementations of the modified continuous-time Laplacian average consensus algorithm~\eqref{eq::consensus-modified} with the reference values and the additive obfuscation signals as follows:
  {\small \begin{align*} 
 &\text{Case (1)}:~~\vect{r}=[-3,5,1,-2,10]^\top,\\
 &\vect{f}(t)=[-3,-2\textup{cos}( \frac{t}{t^2+1}), \frac{t^5}{t^5+1},\textup{tan}(\frac{\pi}{4}\textup{tanh}(t)),-2\textup{tanh}(t)]^\top\!\!\!,
 \\&
 \vect{g}(t)=[1+0.23\textup{e}^{-t},\textup{cos}(10\pi \frac{t^2}{t^5+1}),(1+\textup{e}^{-t}\textup{sin}(10t))\textup{tanh}(t),\\
&~\qquad\quad 1+\textup{e}^{(t-1)^2},\textup{log}(\textup{e}-\textup{e}^{0.1t}(1+\textup{sin}(t)))]^\top\!\!  .
\\~\\
&\text{Case (2)}:~~\vect{r}=[-3,15,-4,-2,5]^\top,\\
&\vect{f}(t)=[-3,-2\textup{cos}( \frac{t}{t^2+1}),-10\textup{e}^{-2t}+\frac{t^5}{t^5+1},\textup{tan}(\frac{\pi}{4}\textup{tanh}(t)),\\ &~\qquad\quad -10\textup{e}^{-2t}-2\textup{tanh}(t)]^\top\!\!\!,
\\&
\vect{g}(t)=[1+0.23\textup{e}^{-t},\textup{cos}(10\pi \frac{t^2}{t^5+1}),\\&~\qquad\quad 
5\textup{e}^{-2t}+(1+\textup{e}^{-t}\textup{sin}(10t))\textup{tanh}(t),1+\textup{e}^{(t-1)^2},\\&~\qquad\quad  5\textup{e}^{-2t}+\textup{log}(\textup{e}-\textup{e}^{0.1t}(1+\textup{sin}(t)))]^\top\!\!  .
\\~\\
&\text{Case (3)}:~~\vect{r}=[-3,25,-9,-2,0]^\top,\\
&\vect{f}(t)=[-3, \allowbreak-2\textup{cos}( \frac{t}{t^2+1}), -20\textup{e}^{-2t}+\frac{t^5}{t^5+1},\textup{tan}(\frac{\pi}{4}\textup{tanh}(t)),\\&~\qquad\quad -20\textup{e}^{-2t}-2\textup{tanh}(t)]^\top\!\!\!,
\\&
\vect{g}(t)=[1+0.23\textup{e}^{-t},\textup{cos}(10\pi \frac{t^2}{t^5+1}),\\&~\qquad\quad 10\textup{e}^{-2t}+(1+\textup{e}^{-t}\textup{sin}(10t))\textup{tanh}(t),1+\textup{e}^{(t-1)^2},\\&~\qquad\quad 10\textup{e}^{-2t}+\textup{log}(\textup{e}-\textup{e}^{0.1t}(1+\textup{sin}(t)))]^\top\!\!  .
\end{align*}}
Let Case (1) correspond to the actual operation case, and the other two cases be admissible alternative ones. Here, all the admissible obfuscation signals are smooth, uniformly continuous and non-vanishing. They satisfy~\eqref{eq::local_beta-i}, ~\eqref{eq::nec-suf-admin-sig-a} and~\eqref{eq::nec-suf-admin-sig-b} with $\alpha=1$ and $\beta^i=0$, $i\in\VV=\{1,2,3,4,5\}$. The plots in the top row of  Fig.~\eqref{fig:sim} confirms convergence of the algorithm  to the exact average, as guaranteed by Theorem~\ref{eq::consensus-modified}. The plots in the second row of Fig.~\eqref{fig:sim} show that the transmitted-out signal $y^i$ of each agent $i\in\VV$ satisfies $\lim_{t\to\infty}{y^i(t)}=\frac{1}{N}\sum_{j=1}^N\mathsf{r}^j+\alpha$. Let $\delta y^i(t)$  be the communication signals difference between Case ($j$), $j\in\{2,3\}$ and Case (1). As seen in the two bottom plots in Fig~\ref{fig:sim}, only $\delta y^2(t)$ is non-zero. 
 This means that agent $1$, in all three cases, receives exactly the same transmission messages from its neighbors, agents $4$, $5$, and $3$. This result, as predicted by Theorem~\ref{eq::consensus-modified}, shows that agent $1$, the eavesdropper, cannot tell whether $\mathsf{r}^2$ is equal to $5$ of Case (1), $15$ of Case (2) or $25$ Case (3). Moreover, agent $1$ is not able to say which one of these cases is more probable. A similar statement can be made about agent $3$ and $5$ whose privacy is guaranteed in our framework. While the privacy of agent $2$, $3$ and $5$ is preserved, according to Lemma~\ref{lem::easy-prey-external-ad}, agent $1$ can employ an observer of form~\eqref{eq::out-observer} to asymptotically estimate the reference value of agent $4$. The response of this estimator is shown in Fig.~\ref{fig:breach4}. Here to make a comparison study with respect to~\cite{mo2017privacy}, we used the undirected graph of Fig~\ref{fig::k-anom-num}(a).

\subsection{Performance over a digraph with external and internal eavesdroppers}
The first demonstration study we conduct is using execution of the modified static average consensus algorithm~\eqref{eq::consensus-modified} over the strongly connected and weight-balanced digraph in Fig.~\ref{fig::k-anom-num}(b). The reference value and the locally chosen obfuscation signals of the agents~are 
\begin{align}\label{eq::main_numeric}
       & \mathsf{r}^1=3,~ \mathsf{r}^2=2,~\mathsf{r}^3=5,
       \mathsf{r}^4=-3,~ \mathsf{r}^5=-1,\nonumber\\
       &f^l(t)=\dout^l(\sin(l\frac{\pi}{12})+\cos(l\frac{\pi}{12}))\frac{\sqrt{(2\,l)}}{4l}\textup{e}^{-t},\\
       &g^l(t)=\sin(l\frac{\pi}{12}+l\pi t^2),\qquad\qquad\qquad\qquad l\in\VV. \nonumber
\end{align}
The locally chosen admissible obfuscation signals here satisfy the conditions in Theorem~\ref{thm::main_alpha_beta_known} with    $\alpha=0$ and $\beta^i=0$, $i\in\VV$.  The interested reader can examine these conditions conveniently using the online integral calculator~\cite{inegralCalcu}. Let the eavesdropper be agent $1$ whose knowledge set is~\eqref{eq::knowledge_set-1} (Case 1 in Definition~\ref{def::know}). With regards to agents $4$ and $5$, despite use of non-vanishing obfuscation signals $g^4$ and $g^5$, as guaranteed in Lemma~\ref{lem::easy-prey-external-ad},  agent $1$ can employ local observers of the form~\eqref{eq::out-observer} to obtain $x^4(0)=\mathsf{r}^4=-3$ and $x^5(0)=\mathsf{r}^5=-1$ (see Fig.~\ref{fig::breach45}).
Agent $1$ however, cannot uniquely identify $\mathsf{r}^2$ and $\mathsf{r}^3$, since $\mathcal{N}^2_{\textup{out}}=\{3\}\not\subseteq\mathcal{N}^1_{\textup{out}+1}=\{1,2,4,5\}$. To show this, consider an \emph{alternative} implementation of algorithm~\eqref{eq::consensus-modified} with initial conditions and admissible obfuscation signals 
    \begin{align}
       & {{x}^1}'\!(0)\!=\!3,\,\, {{x}^2}'\!(0)\!=\!1,\,\,{{x}^3}'\!(0)\!=\!6,\,\,{{x}^4}'\!(0)\!=\!-3,\,\,{{x}^5}'\!(0)\!=\!-1,\nonumber\\
       &{f^i}'(t)\!=\!f^i(t),\quad\quad\quad\quad{g^i}'(t)\!=\!{g}^i(t),~~\quad \quad i\in\{1,3,4,5\},\nonumber\\
       &{f^2}'(t)\!=\!f^2(t)-\textup{e}^{-t},\quad{g^2}'(t)\!=\!g^2(t)+\textup{e}^{-t},\label{eq::alter_numeric}
    \end{align}
where $\frac{1}{5}\sum_{i=1}^5{{x}^i}'(0)\!=\!\frac{1}{5}\sum_{i=1}^5{x}^i(0)\!=\!\frac{1}{5}\sum_{i=1}^8\mathsf{r}^i\!=\!1.2$. As Fig.~\ref{fig::simulation} shows the execution of algorithm~\eqref{eq::consensus-modified} using the initial conditions and obfuscation signals~\eqref{eq::main_numeric} (the actual case) and those in~\eqref{eq::alter_numeric} (an alternative case) converge to the same final value of $1.2$.  
Let $\delta y^i=y^i-{{y}^i}'$, $i\in\until{5}$ be the error between the output of the agents in the actual and the alternative cases. As Fig.~\ref{fig::simulation} shows $\delta y^i\equiv 0$ for all $i\in\Nout^1=\{2,4,5\}$. This means that agent $1$ cannot distinguish between the actual and the alternative cases and therefore, fails to identify uniquely the initial values of agent $2$ and also agent $3$. Figure~\ref{fig::breach2} shows that an external eavesdropper that has access to the output signals of agents $2$ and its  knowledge set is~\eqref{eq::knowledge_set-1} can employ an observer of the form~\eqref{eq::out-observer-external} to identify the initial value of agent $2$, i.e., $\mathsf{r}^2=2$.


\begin{figure}
     \centering
     \includegraphics[width=0.48\textwidth]{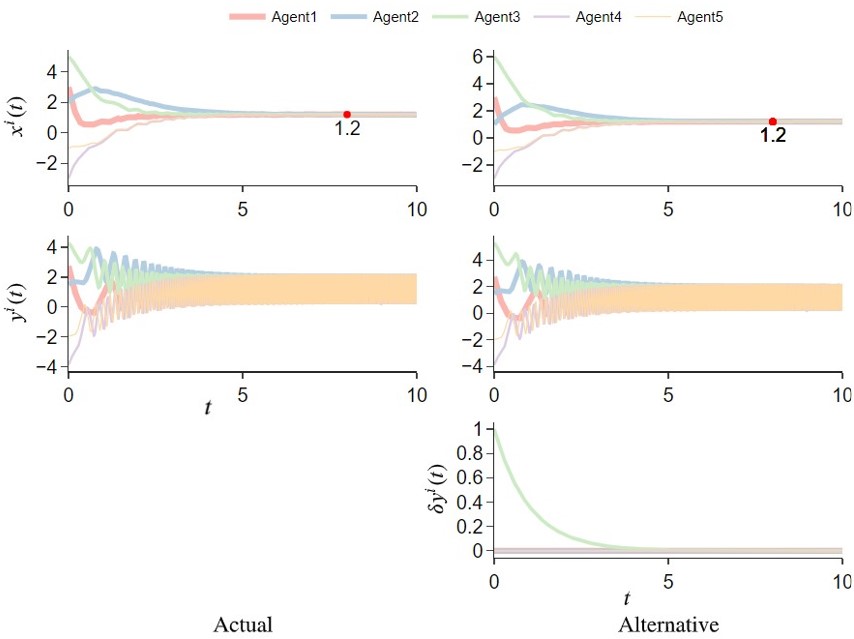}
      \caption{{\small Trajectories of the  state of the agents under the actual initial conditions and the obfuscation signals~\eqref{eq::main_numeric} as well as the alternative ones in~\eqref{eq::alter_numeric} and time history of the difference between the output signal of an agent in actual implementation scenario and its output signal in the alternative implementation described in~\eqref{eq::alter_numeric}.}}
        \label{fig::simulation}
\end{figure}

\begin{figure}
     \centering
     \includegraphics[width=0.40\textwidth]{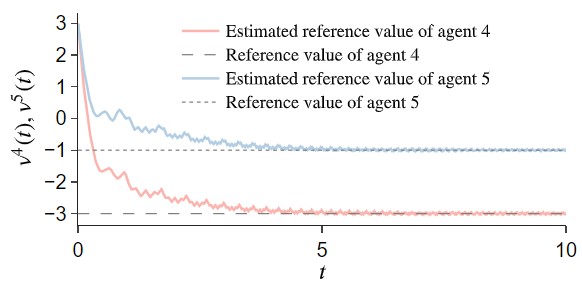}
      \caption{{\small Time history of the observers of the form~\eqref{eq::out-observer} that agent $1$ with knowledge set~\eqref{eq::knowledge_set-1} uses to obtain $\mathsf{r}^4$ and $\mathsf{r}^5$.}}
        \label{fig::breach45}
\end{figure}

\begin{figure}
     \centering
     \includegraphics[width=0.40\textwidth]{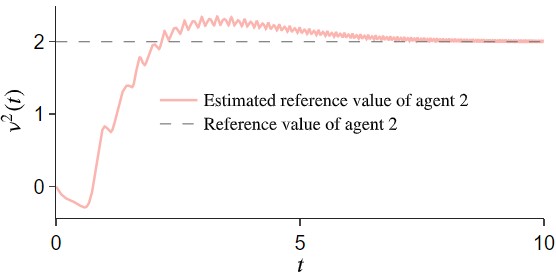}
      \caption{{\small Time history of the observer~\eqref{eq::out-observer-external} of an external eavesdropper with knowledge set~\eqref{eq::knowledge_set-1} that wants to obtain $\mathsf{r}^2$ and has direct access to $y^2$ and $y^3$ for all $t\in\real_{\geq0}$.}}
        \label{fig::breach2}
\end{figure}

\begin{figure}
     \centering
     \includegraphics[width=0.48\textwidth]{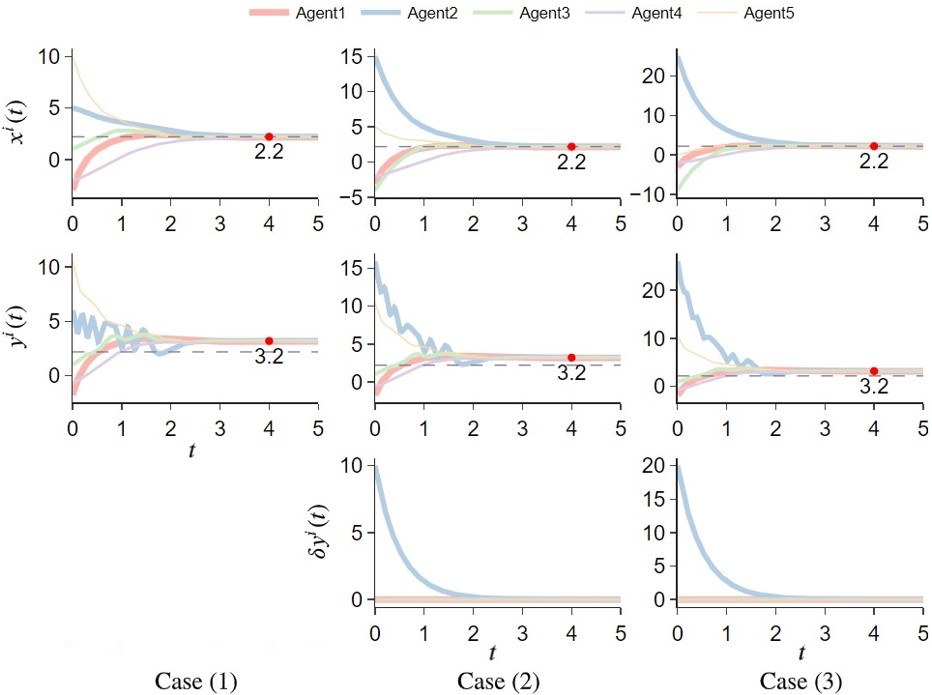}
      \caption{{\small The consensus results for 3 different cases.}}
        \label{fig:sim}
\end{figure}

\begin{figure}
     \centering
     \includegraphics[width=0.4\textwidth]{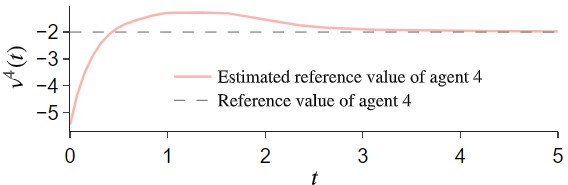}
      \caption{{\small Privacy breach of agent 4 in all 3 cases of~\ref{fig::k-anom-num}(a).}}
        \label{fig:breach4}
\end{figure}

\section{Conclusions}
In this paper, we considered the problem of preserving the privacy of the reference value of the agents in an average consensus algorithm using additive obfuscation signals. We started our study by characterizing the set of the necessary and sufficient conditions on the admissible obfuscation signals, which do not perturb the final convergence point of the algorithm.
We assessed the privacy preservation property of the average consensus algorithm with the additive obfuscation signals against internal and external eavesdroppers, depending on how much knowledge the eavesdroppers have about the necessary conditions that specify the class of signals that the agents choose their local admissible obfuscation signals from. We showed that if the necessary conditions are fully known to the eavesdroppers, then an internal or external eavesdropper that has access to all the transmitted input and out signals of an agent can employ an asymptotic observer to obtain the reference value of that agent. Next, we showed that indeed having access to all the transmitted input and out signals of an agent at all $t\in\real_{\geq0}$ is the necessary and sufficient condition for an eavesdropper to identify the initial value of that particular agent. On the other hand, we showed that if the necessary conditions defining the locally chosen admissible obfuscation signals are not fully known to the eavesdroppers, then the eavesdroppers cannot reconstruct the reference value of any other agent in the network. Our future work includes extending our results to other multi-agent distributed algorithms such as dynamic average consensus and distributed optimization algorithms.


\section{Appendix}\label{sec::appen}
To provide proofs for our lemmas and theorems we rely on a set of  auxiliary results, which we state first.

\begin{lem}[Auxiliary result 1]\label{lem::nec-suf-hurwitz-local}
Let $\lL$ be the Laplacian matrix of a strongly connected and weight-balanced digraph. Recall $\lL^{+}=\rR^\top\lL\rR$ from~\eqref{eq::lL+}. Let  $\vect{g}(t)=[g_1(t),...,g_n(t)]^\top\in\mathcal{L}^{\infty}_{n}$.
Then,  
\begin{align}\label{eq::conv-equ-a}
    & \lim_{t\to\infty}\int_{0}^{t} \textup{e}^{-\lL^{+}(t-\tau)}\rR^\top\lL\,\vect{g}(\tau)\textup{d}\tau=\vect{0},
    \end{align}
   is guaranteed to hold if and only if
    \begin{align}
    &\lim_{t\to\infty}\int_{0}^{t}\textup{e}^{-(t-\tau)} g^i(\tau)\,\textup{d}\tau=\alpha\in\real,\quad i\in\until{N}.\label{eq::conv-equ-b}
\end{align}
\end{lem}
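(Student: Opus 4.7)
The plan is to reduce both sides of the equivalence to statements about the auxiliary signal $\vect{h}(t):=\int_{0}^{t}\textup{e}^{-(t-\tau)}\vect{g}(\tau)\,\textup{d}\tau$, which is the per-component output of the scalar filter in \eqref{eq::conv-equ-b} and satisfies $\dot{\vect{h}}+\vect{h}=\vect{g}$, $\vect{h}(0)=\vect{0}$. The single preliminary observation that makes everything work is the commutation identity $\rR^\top\lL=\lL^{+}\rR^\top$, which follows from $\vect{T}^\top\lL\vect{T}=\operatorname{diag}(0,\lL^{+})$ together with $\vect{1}_N^\top\lL=\vect{0}$ (weight-balanced). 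Hence the integral in \eqref{eq::conv-equ-a} equals
\[
\vect{z}(t):=\int_{0}^{t}\textup{e}^{-\lL^{+}(t-\tau)}\lL^{+}\rR^\top\vect{g}(\tau)\,\textup{d}\tau,
\]
the state of the Hurwitz LTI system $\dot{\vect{z}}+\lL^{+}\vect{z}=\lL^{+}\rR^\top\vect{g}$ with $\vect{z}(0)=\vect{0}$.

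For the sufficiency $(\Leftarrow)$, I would substitute $\vect{g}=\dot{\vect{h}}+\vect{h}$ into $\vect{z}(t)$ and integrate the $\dot{\vect{h}}$-contribution by parts, using $\vect{h}(0)=\vect{0}$ and the commutativity of $\lL^{+}$ with $\textup{e}^{-\lL^{+}(t-\tau)}$, to obtain
\[
\vect{z}(t)=\lL^{+}\rR^\top\vect{h}(t)+\int_{0}^{t}(I-\lL^{+})\textup{e}^{-\lL^{+}(t-\tau)}\lL^{+}\rR^\top\vect{h}(\tau)\,\textup{d}\tau.
\]
Condition \eqref{eq::conv-equ-b} delivers $\vect{h}(t)\to\alpha\vect{1}_N$, hence $\rR^\top\vect{h}(t)\to\alpha\,\rR^\top\vect{1}_N=\vect{0}$; the algebraic term vanishes, and the convolution term is the state of a Hurwitz system ($-\lL^{+}$) driven by a vanishing input, so it too vanishes via the standard converging-input/converging-state property exploited elsewhere in the appendix.

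For the necessity $(\Rightarrow)$, I would invert the defining ODE to get $\rR^\top\vect{g}=\vect{z}+(\lL^{+})^{-1}\dot{\vect{z}}$ and push this through the scalar filter. Integrating by parts on the $\dot{\vect{z}}$-term (using $\vect{z}(0)=\vect{0}$) produces
\[
\rR^\top\vect{h}(t)=(\lL^{+})^{-1}\vect{z}(t)+\bigl(I-(\lL^{+})^{-1}\bigr)\int_{0}^{t}\textup{e}^{-(t-\tau)}\vect{z}(\tau)\,\textup{d}\tau,
\]
so whenever \eqref{eq::conv-equ-a} holds both terms vanish and $\rR^\top\vect{h}(t)\to\vect{0}$, i.e., $h^i(t)-h^j(t)\to 0$ for every $i,j$; the shared asymptotic value of the $h^i$'s is the $\alpha$ of \eqref{eq::conv-equ-b}.

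The main obstacle I anticipate is the closing step of the necessity direction: the integration-by-parts/ISS chain cleanly forces $\rR^\top\vect{h}(t)\to\vect{0}$, so the $h^i$'s coalesce, but pinning down a genuine common scalar $\alpha\in\real$ requires in addition that $\tfrac{1}{N}\vect{1}_N^\top\vect{h}(t)$ converge, and \eqref{eq::conv-equ-a} is blind to the common mode of $\vect{g}$ (because $\rR^\top\vect{1}_N=\vect{0}$). I expect this to be handled by reading \eqref{eq::conv-equ-b} as asserting the existence of such a common limit, and/or by appealing to the admissibility framing \eqref{eq::local_beta-i} under which the lemma is invoked in Theorem~\ref{thm::main}, which pins down the common mode through the integrated constraint on $\vect{f}+\dout\vect{g}$.
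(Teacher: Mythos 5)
Your proposal is correct in substance and takes a genuinely different route from the paper. The paper's proof introduces two auxiliary LTI systems driven by the \emph{same} input $\rR^\top\lL\,\vect{g}$, namely $\dvect{\zeta}=-\lL^{+}\vect{\zeta}+\rR^\top\lL\vect{g}$ and $\dvect{\eta}=-\vect{\eta}+\rR^\top\lL\vect{g}$ (see \eqref{eq::lin-eq-a}--\eqref{eq::lin-eq-b}), writes the error dynamics for $\vect{e}=\vect{\zeta}-\vect{\eta}$ in the two forms \eqref{eq::err-lin1} and \eqref{eq::err-lin2}, and applies the ISS converging-input/converging-state property once in each direction to transfer $\vect{\zeta}(t)\to\vect{0}$ into $\vect{\eta}(t)\to\vect{0}$ and vice versa; the left factor $\rR^\top\lL$ is then peeled off the $\vect{\eta}$-convolution and the necessity direction is closed by noting that the nullspace of $\rR^\top\lL$ is spanned by $\vect{1}_N$. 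You instead exploit the commutation identity $\rR^\top\lL=\lL^{+}\rR^\top$ (which is valid, using $\lL\vect{1}_N=\vect{0}$ and orthogonality of $\vect{T}$) and relate the two convolutions directly by integration by parts, handling the residual convolution terms with the converging-input statement of Lemma~\ref{lem::conv_vanishing}(a). The two arguments lean on the same structural facts ($-\lL^{+}$ Hurwitz, essential boundedness of $\vect{g}$, $\rR^\top\vect{1}_N=\vect{0}$); the paper's state-space comparison avoids integration by parts altogether (a mild convenience since $\vect{g}$ is only measurable, though your use of it is legitimate because $\vect{h}$ and $\vect{z}$ are absolutely continuous), while your computation makes the algebra explicit and pinpoints exactly where the two sides of the equivalence interact.

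Regarding the obstacle you flag in the necessity direction: this is precisely the step the paper's proof passes over. From \eqref{eq::B-lim} the paper immediately writes $\lim_{t\to\infty}\int_0^t\textup{e}^{-(t-\tau)}\vect{g}(\tau)\textup{d}\tau=\alpha\vect{1}_N$, which tacitly presumes that this limit exists; what the argument actually yields---in the paper's version just as in yours---is $\rR^\top\vect{h}(t)\to\vect{0}$, i.e.\ vanishing of the pairwise differences of the filtered components, since \eqref{eq::conv-equ-a} carries no information about the common mode of $\vect{g}$ (it is annihilated by $\rR^\top\lL$). So your necessity argument is exactly as strong as the paper's; you are simply explicit about the unclosed common-mode step, whereas the paper asserts it via the nullspace argument. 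Your suggested repair of reading \eqref{eq::conv-equ-b} as asserting existence of the common limit is the de facto reading under which the lemma is used; note, however, that the paper does not invoke \eqref{eq::local_beta-i} for this purpose (that condition constrains $f^i+\dout^i g^i$, not the common mode of $\vect{g}$), so the second half of your proposed fix would not come for free from the surrounding framework.
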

\begin{proof}
Let
\begin{align}
\dvect{\zeta}&=-\lL^{+}\vect{\zeta}+\rR^\top\lL\vect{g}(t),\quad \quad\qquad \vect{\zeta}(0)\in\real^{N-1},\label{eq::lin-eq-a}\\
\dot{\vect{\eta}}&=-\vect{\eta}+\rR^\top\lL\vect{g}(t),~\qquad\qquad\quad \vect{\eta}(0)\in\real^{N-1}\label{eq::lin-eq-b}.
\end{align}
The trajectories $t\mapsto\vect{\zeta}$ and $t\mapsto\vect{\eta}$ of these two dynamics for $t\in\real_{\geq0}$ are given by
\begin{align}
\vect{\zeta}(t)&=\textup{e}^{-\lL^{+}t}\vect{\zeta}(0)+\int_{0}^{t} \textup{e}^{-\lL^{+}(t-\tau)}\rR^\top\lL\vect{g}(\tau)\textup{d}\tau,\label{eq::lin-time-a}\\
\vect{\eta}(t)&=\textup{e}^{-t}\vect{\eta}(0)+\rR^\top\lL\int_{0}^{t} \textup{e}^{-(t-\tau)}\,\vect{g}(\tau)\textup{d}\tau\label{eq::lin-time-b}.
\end{align}
Let $\vect{e}=\vect{\zeta}-\vect{\eta}$. Then, the error dynamics between~\eqref{eq::lin-eq-a} and~\eqref{eq::lin-eq-b} is given by
\begin{align}\label{eq::err-lin1}
    \dot{\vect{e}}=-\vect{e}+(\vect{I}-\lL^{+})\vect{\zeta}.
\end{align}
or equivalently
\begin{align}\label{eq::err-lin2}
    \dot{\vect{e}}=-\lL^{+}\vect{e}+(\lL^{+}+\vect{I})\vect{\eta}.
\end{align}
Let~\eqref{eq::conv-equ-a} hold. Since $-\lL^{+}$ is a Hurwitz matrix, we have $\lim_{t\to\infty}\vect{\zeta}(t)=0$. Moreover, since $\vect{g}$ is essentially bounded, the trajectories of $\vect{\zeta}$ are guaranteed to be bounded. Therefore, considering error dynamics~\eqref{eq::err-lin1}, by invoking the ISS stability results~\cite{SND-DVE-EDS:11}, we have the guarantees that $\lim_{t\to\infty}\vect{e}(t)=\vect{0}$, and consequently $\lim_{t\to\infty}\vect{\eta}(t)=\vect{0}$. As such, from~\eqref{eq::lin-time-b} we~obtain
\begin{align}\label{eq::B-lim}
\rR^\top\lL \lim_{t\to\infty}\int_{0}^{t} \textup{e}^{-(t-\tau)}\vect{g}(\tau)\textup{d}\tau=\vect{0}.
\end{align}
The nullspace of $\rR^\top\lL\!\in\!\real^{(N-1)\times N}$ is spanned by~$\vect{1}_N$,~thus,
$$\lim_{t\to\infty}\int_{0}^{t} \textup{e}^{-(t-\tau)}\vect{g}(\tau)\textup{d}\tau=\alpha \vect{1}_N,\quad\alpha\in\real,$$
which validates~\eqref{eq::conv-equ-b}. Now let~\eqref{eq::conv-equ-b} hold. Then, using~\eqref{eq::lin-time-b}, we obtain $\lim_{t\to\infty}\vect{\eta}(t)= \vect{0}$. Since $\vect{g}$ is essentially bounded, the trajectories of $\vect{\zeta}$ are guaranteed to be bounded. Thereby, considering error dynamics~\eqref{eq::err-lin2}, by invoking the ISS stability results~\cite{SND-DVE-EDS:11}, we have the guarantees that $\lim_{t\to\infty}\vect{e}(t)=\vect{0}$, and consequently $\lim_{t\to\infty}\vect{\eta}(t)=\vect{0}$. Since $-\lL^{+}$ is a Hurwitz matrix, we obtain~\eqref{eq::conv-equ-a} from~\eqref{eq::lin-time-a}.
\end{proof}

\begin{lem}[Auxiliary result 2]\label{lem::conv_vanishing}
Let $\vect{u}:\real_{\geq 0}\to\real^n$ be~an~essentially bounded signal and $\vect{E}\in\real^{n\times n}$ be a Hurwitz~matrix.
\begin{itemize}
    \item[(a)] If $\lim_{t\to\infty}\vect{u}(t)=\bar{\vect{u}}\in\real^{n}$, and $\vect{E}\in\real^{n\times n}$, then 
\begin{align}\label{eq::conv_vanishing}
\lim_{t\to\infty}\int_{0}^t\textup{e}^{\vect{E}\,(t-\tau)}\vect{u}(\tau)\textup{d}\tau=-\vect{E}^{-1}\,\bar{\vect{u}}.
\end{align}

\item[(b)] 
If $\lim_{t\to\infty}\int_{0}^t \vect{u}(\tau)\textup{d}\tau=\bar{\vect{u}}\in\real^{n}$,  then 
\begin{align}\label{eq::conv_integral_bounded}
\lim_{t\to\infty}\int_{0}^t\textup{e}^{\vect{E}\,(t-\tau)}\vect{u}(\tau)\textup{d}\tau=\vect{0}.
\end{align}
\end{itemize}
\end{lem}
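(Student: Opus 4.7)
The plan is to prove the two parts in sequence, using part (a) as a stepping stone for part (b). For part (a), I would first separate the asymptotic content of $\vect{u}$ from its transient deviation by writing
\begin{align*}
\int_0^t \textup{e}^{\vect{E}(t-\tau)}\vect{u}(\tau)\,\textup{d}\tau = \int_0^t \textup{e}^{\vect{E}(t-\tau)}\bar{\vect{u}}\,\textup{d}\tau + \int_0^t \textup{e}^{\vect{E}(t-\tau)}(\vect{u}(\tau)-\bar{\vect{u}})\,\textup{d}\tau.
\end{align*}
A change of variables $s=t-\tau$ evaluates the first piece in closed form as $\vect{E}^{-1}(\textup{e}^{\vect{E}t}-\vect{I})\bar{\vect{u}}$, which tends to $-\vect{E}^{-1}\bar{\vect{u}}$ because $\vect{E}$ is Hurwitz.

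For the second piece, I would run a standard $\varepsilon$-$T$ argument. Given $\varepsilon>0$, pick $T$ so that $\|\vect{u}(\tau)-\bar{\vect{u}}\|<\varepsilon$ for a.e.\ $\tau\ge T$. Split the integration interval into $[0,T]$ and $[T,t]$. On $[T,t]$, bound the integrand using $\|\textup{e}^{\vect{E}(t-\tau)}\|\le M\textup{e}^{-\lambda(t-\tau)}$ (valid because $\vect{E}$ is Hurwitz, with $\lambda>0$ less than the minimum real part of $-\vect{E}$'s spectrum) to get a contribution bounded by $M\varepsilon/\lambda$. On $[0,T]$, use essential boundedness of $\vect{u}-\bar{\vect{u}}$ together with $\|\textup{e}^{\vect{E}(t-\tau)}\|\le M\textup{e}^{-\lambda(t-T)}$ for $\tau\le T$, which vanishes as $t\to\infty$. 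Since $\varepsilon$ is arbitrary, the second piece has limit zero, establishing~\eqref{eq::conv_vanishing}. An equivalent, more compact argument would realize $\vect{z}(t)=\int_0^t\textup{e}^{\vect{E}(t-\tau)}\vect{u}(\tau)\textup{d}\tau$ as the zero-initial-condition solution of $\dot{\vect{z}}=\vect{E}\vect{z}+\vect{u}$ and invoke the converging-input converging-state property of ISS systems (available from~\cite{SND-DVE-EDS:11} already cited in the paper), applied to the shifted state $\vect{w}=\vect{z}+\vect{E}^{-1}\bar{\vect{u}}$.

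For part (b), I would reduce to part (a) via integration by parts. Set $\vect{v}(\tau)=\int_0^\tau \vect{u}(s)\textup{d}s$, so that $\vect{v}(0)=\vect{0}$, $\vect{v}$ is continuous, and $\vect{v}(\tau)\to\bar{\vect{u}}$; in particular $\vect{v}\in\mathcal{L}^\infty_n$. Integration by parts gives
\begin{align*}
\int_0^t \textup{e}^{\vect{E}(t-\tau)}\vect{u}(\tau)\,\textup{d}\tau = \vect{v}(t)-\textup{e}^{\vect{E}t}\vect{v}(0)+\vect{E}\!\int_0^t\textup{e}^{\vect{E}(t-\tau)}\vect{v}(\tau)\,\textup{d}\tau.
\end{align*}
The boundary term $\textup{e}^{\vect{E}t}\vect{v}(0)$ vanishes; $\vect{v}(t)\to\bar{\vect{u}}$ by hypothesis; and part (a) applied to $\vect{v}$ shows the remaining integral converges to $-\vect{E}^{-1}\bar{\vect{u}}$. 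Multiplying by $\vect{E}$ yields $-\bar{\vect{u}}$, so the total limit is $\bar{\vect{u}}+(-\bar{\vect{u}})=\vect{0}$, establishing~\eqref{eq::conv_integral_bounded}.

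The main obstacle is the $\varepsilon$-$T$ estimate in part (a); it is routine but must be done carefully because $\vect{u}$ is only essentially bounded (not continuous), and one must use a uniform exponential bound on $\|\textup{e}^{\vect{E}t}\|$ rather than a scalar exponential. The integration-by-parts step in (b) is clean provided we first verify that $\vect{v}$ lies in $\mathcal{L}^\infty_n$, which follows from continuity on compact intervals together with the limit at infinity.
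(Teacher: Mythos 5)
Your proof is correct, and at bottom it follows the same strategy as the paper: part (a) rests on splitting $\vect{u}$ into its limit $\bar{\vect{u}}$ plus a vanishing remainder, and part (b) is obtained by comparing the convolution with the running integral of $\vect{u}$. The differences are in packaging. In (a), where you carry out an explicit $\varepsilon$--$T$ estimate (and mention the ISS alternative), the paper simply realizes $\int_0^t \textup{e}^{\vect{E}(t-\tau)}(\vect{u}(\tau)-\bar{\vect{u}})\,\textup{d}\tau$ as the zero-state response of $\dot{\vect{\zeta}}=\vect{E}\vect{\zeta}+\vect{\mu}$ with vanishing input $\vect{\mu}=\vect{u}-\bar{\vect{u}}$ and cites the converging-input results of~\cite{SND-DVE-EDS:11}; your version is self-contained and elementary, the paper's is shorter and consistent with the ISS framing used elsewhere. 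In (b), the paper introduces the pair $\dot{\vect{\zeta}}=\vect{u}$, $\dot{\vect{\eta}}=\vect{E}\vect{\eta}+\vect{u}$ and shows via the error dynamics $\dot{\vect{e}}=\vect{E}\vect{e}+\vect{E}\vect{\zeta}$ (a part-(a)-type argument with input $\vect{E}\vect{\zeta}\to\vect{E}\bar{\vect{u}}$) that $\vect{e}\to-\bar{\vect{u}}$, hence $\vect{\eta}\to\vect{0}$; your integration-by-parts identity with $\vect{v}(\tau)=\int_0^\tau\vect{u}$ is precisely the variation-of-constants solution of that error equation, so the two arguments are algebraically equivalent --- yours avoids the auxiliary ODEs and applies part (a) verbatim to $\vect{v}$, at the cost of justifying the a.e.\ product rule for the absolutely continuous $\vect{v}$, which you do. Your explicit remark that $\vect{v}$ is bounded (continuous with a finite limit) is the same fact the paper uses when asserting $\vect{\zeta}$ is essentially bounded.
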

\begin{proof}
To prove statement (a) we proceed as follows. Let $\vect{\mu}(t)=\vect{u}(t)-\bar{\vect{u}}$. Next, consider $\dvect{\zeta}=\vect{E}\,\vect{\zeta}+\vect{\mu}$, $\vect{\zeta}(0)\in\real^n$, which gives  $\vect{\zeta}(t)=\textup{e}^{\vect{E}\, t}\vect{\zeta}(0)+\int_{0}^t\textup{e}^{\vect{E}(t-\tau)}\vect{\mu}(\tau)\textup{d}\tau$, $t\geq0$. Since $\vect{E}$ is Hurwitz and $\vect{\mu}$ is an essentially bounded and vanishing signal, by virtue of the ISS results for linear systems~\cite{SND-DVE-EDS:11} we have $\lim_{t\to\infty}\vect{\zeta}(t)=0$. Consequently,  $\lim_{t\to\infty}\int_{0}^t\textup{e}^{\vect{E}\,(t-\tau)}\vect{\mu}(\tau)\textup{d}\tau=\vect{0}$, which guarantees  ~\eqref{eq::conv_vanishing}.

To prove statement (b) we proceed as follows. Consider
\begin{align*}
    \dvect{\zeta}=\vect{u}, ~\dvect{\eta}=\vect{E}\vect{\eta}+\vect{u}, \quad \vect{\zeta}(0)=\vect{0},~\vect{\eta}(0)\in\real^n,
\end{align*}
which result in $\vect{\zeta}(t)=\int_{0}^t \vect{u}(\tau)\textup{d}\tau$ and 
\begin{align}\label{eq::aux_lem2_(b)_eta}
    \vect{\eta}(t)=\textup{e}^{\vect{E}\,t}\vect{\eta}(0)+\!\!\int_{0}^{t}\!\!\textup{e}^{\vect{E}\,(t-\tau)}\vect{u}(\tau)\textup{d}\tau.
\end{align}
 Given the conditions on $\vect{u}$ both $\vect{\zeta}$ and $\vect{\eta}$ are essentially bounded signals (recall that $\vect{E}$ is Hurwitz). Let $\vect{e}=\vect{\eta}-\vect{\zeta}$. Therefore, we can write
\begin{align*}
    \dvect{e}=\vect{E}\,\vect{e}+\vect{E}\,\vect{\zeta},\quad \vect{e}(0)=\vect{\eta}(0)\in\real^n.
\end{align*}
Since $\vect{\zeta}$ is essentially bounded and satisfies $\lim_{t\to\infty}\vect{E}\vect{\zeta}(t)=\vect{E}\bar{\vect{u}}$, with an argument similar to that of the proof of statement (a), we can conclude that $\lim_{t\to\infty}\vect{e}(t)=-\bar{\vect{u}}$. As a result $\lim_{t\to\infty}\vect{\eta}(t)=\vect{0}$. Consequently, from~\eqref{eq::aux_lem2_(b)_eta}, we obtain~\eqref{eq::conv_integral_bounded}.
\end{proof}


\begin{lem}[Auxiliary result 3]\label{lem::subgraph-balanced}
Let $\GG$ be a strongly connected and weight-balanced digraph. Then, every island of any agent $i$, is strongly connected and weight-balanced. 
\end{lem}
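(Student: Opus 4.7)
I would handle strong connectivity and weight-balance of each island $\GG^{\underline{i}}_k = (\VV^{\underline{i}}_k,\EE^{\underline{i}}_k)$ separately, using as the central structural fact that $\bar{\VV}^{\underline{i}}_k$ is a connected component of the undirected representation of $\GG \setminus \{i\}$, so that for every $j \in \bar{\VV}^{\underline{i}}_k$ both $\Nin^j$ and $\Nout^j$ are contained in $\VV^{\underline{i}}_k$ (otherwise an edge between $j$ and a vertex of a different component would merge them).

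For strong connectivity, I would fix any two vertices $u, v \in \VV^{\underline{i}}_k$ and produce a directed path from $u$ to $v$ lying in $\GG^{\underline{i}}_k$. Using strong connectivity of $\GG$, pick a shortest directed path $P_1$ from $u$ to $i$ and a shortest directed path $P_2$ from $i$ to $v$ in $\GG$. I claim the internal vertices of $P_1$ (those before $i$) all lie in $\bar{\VV}^{\underline{i}}_k$: if some internal vertex $w$ were in a different component $\bar{\VV}^{\underline{i}}_{k'}$ with $k'\neq k$, then the prefix of $P_1$ from $u$ to $w$ would give an undirected walk in $\GG \setminus \{i\}$ connecting $u$ and $w$, contradicting that they lie in distinct components. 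The same argument applied to $P_2$ shows its internal vertices lie in $\bar{\VV}^{\underline{i}}_k$. Concatenating $P_1$ and $P_2$ yields a directed walk from $u$ to $v$ within $\VV^{\underline{i}}_k$, and the edges used belong to $\EE^{\underline{i}}_k$ by construction of the induced subgraph.

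For weight-balance, I would treat vertices $j\neq i$ and the vertex $i$ itself separately. For $j \in \bar{\VV}^{\underline{i}}_k$, the structural fact above implies that every nonzero $\mathsf{a}_{jl}$ and $\mathsf{a}_{lj}$ in $\GG$ has $l \in \VV^{\underline{i}}_k$, so the weighted in- and out-degrees of $j$ in $\GG^{\underline{i}}_k$ coincide with those in $\GG$, and balance is inherited. For $i$ itself, summing the identity $\din^j = \dout^j$ (which holds in $\GG$) over all $j \in \bar{\VV}^{\underline{i}}_k$ gives
\begin{equation*}
\sum_{j \in \bar{\VV}^{\underline{i}}_k}\sum_{l=1}^N \mathsf{a}_{lj} \;=\; \sum_{j \in \bar{\VV}^{\underline{i}}_k}\sum_{l=1}^N \mathsf{a}_{jl}.
\end{equation*}
Since $\mathsf{a}_{lj}=\mathsf{a}_{jl}=0$ whenever $l\notin \VV^{\underline{i}}_k$, both sides decompose into an interior piece $\sum_{j,l\in\bar{\VV}^{\underline{i}}_k}\mathsf{a}_{lj}=\sum_{j,l\in\bar{\VV}^{\underline{i}}_k}\mathsf{a}_{jl}$ (identical by relabeling) and boundary pieces involving $i$, leaving $\sum_{j\in\bar{\VV}^{\underline{i}}_k}\mathsf{a}_{ij}=\sum_{j\in\bar{\VV}^{\underline{i}}_k}\mathsf{a}_{ji}$, which is precisely the balance of $i$ in $\GG^{\underline{i}}_k$.

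The only nontrivial step is the balance at the cut vertex $i$, since there is no local reason a priori for the weights of edges from $i$ into a particular island to match those entering $i$ from that island. The key is to aggregate the trivially inherited balances of all other vertices in the island and exploit that the boundary of $\bar{\VV}^{\underline{i}}_k$ in $\GG$ touches only $i$; everything else in the argument is bookkeeping.
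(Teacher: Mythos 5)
Your proof is correct and takes essentially the same route as the paper: the paper likewise obtains balance at the cut vertex by aggregating the zero row and column sums of the full Laplacian over the island block (its identity $\texttt{sum}(\vectsf{A}_{12}^\top)=\texttt{sum}(\vectsf{A}_{21})$ is exactly your boundary identity, written in block-matrix rather than summation form), and inherits balance at the other island vertices from the fact that their neighborhoods lie inside the island. The only difference is that you spell out the strong-connectivity step with an explicit directed path routed through agent $i$, a step the paper settles with ``by construction.''
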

\begin{proof}
Without loss of generality, we prove our argument by showing that the island $\GG_{1}^{\underline{1}}$ of agent $1$ is strongly connected and weight-balanced. By construction, we know that there is a directed path from every agent to every other agent in $\GG_{1}^{\underline{1}}$, therefore, $\GG_{1}^{\underline{1}}$ is strongly connected. Next we show that $\GG_{1}^{\underline{1}}$ is weight-balanced. Let $\VV_2=\VV_{1}^{\underline{1}}\backslash\{1\}$ and $\VV_3=\VV\backslash\VV_2$.
Let the nodes of $\GG$ be labeled in accordance to $(1,\VV_2,\VV_3)$, respectively, and partition the graph Laplacian $\lL$ accordingly as
\begin{align*}
    \lL=\begin{bmatrix}\dout^1&-\vectsf{A}_{12}&-\vectsf{A}_{13}\\
    -\vectsf{A}_{21}&\vectsf{L}_{22}&\vect{0}\\
    -\vectsf{A}_{31}&\vect{0}&\vectsf{L}_{33}
    \end{bmatrix}.
\end{align*}
Since $\GG$ is strongly connected and weight-balanced, we have $\lL\vect{1}_N=\vect{0}$ and $\vect{1}_N^\top\lL=\vect{0}$, which guarantee that
\begin{align}\label{eq::island_SCWB_1}
&\vect{1}_{|\VV_{1}^{\underline{1}}|}^\top \begin{bmatrix}-\vectsf{A}_{12}\\ \vectsf{L}_{22}
    \end{bmatrix}=\vect{0},\qquad  \begin{bmatrix}-\vectsf{A}_{21} & \vectsf{L}_{22}
    \end{bmatrix}\vect{1}_{|\VV_{1}^{\underline{1}}|}=\vect{0}.
\end{align}
Therefore, 
\begin{align*}
&\vect{1}_{|\VV_{1}^{\underline{1}}|}^\top \begin{bmatrix}-\vectsf{A}_{12}\\ \vectsf{L}_{22}
    \end{bmatrix}\vect{1}_{|\VV_{1}^{\underline{1}}|}=0,\quad 
&\vect{1}_{|\VV_{1}^{\underline{1}}|}^\top \begin{bmatrix}-\vectsf{A}_{21} & \vectsf{L}_{22}
    \end{bmatrix}\vect{1}_{|\VV_{1}^{\underline{1}}|}=0,
\end{align*}
which we can use to conclude that $\texttt{sum}(\vectsf{A}_{12}^\top)=\texttt{sum}(\vectsf{A}_{21})$. Let the Laplacian matrix of $\GG_{1}^{\underline{1}}$ be $\lL_1^{\underline{1}}$.
Partitioning this matrix according to order node set $(1,\VV_2)$, we obtain
\begin{align*}
    \lL_1^{\underline{1}}=\begin{bmatrix}{\mathsf{d}}_{\textup{out}}^{1,1}&-\vectsf{A}_{12}\\
    -\vectsf{A}_{21}&\vectsf{L}_{22}
    \end{bmatrix},
\end{align*}
where $\mathsf{d}_{\textup{out}}^{1,1}=\sum_{j\in\VV_2}\mathsf{a}_{1j}=\texttt{sum}(\vectsf{A}_{12}^\top)$.
To establish $\GG_{1}^{\underline{1}}$ is weight-balanced digraph, we show next that $\vect{1}_{|\VV_{1}^{\underline{1}}|}^\top\lL_1^{\underline{1}}=\vect{0}$. From $\vect{1}_N^\top\lL=\vect{0}$, it follows that $\vect{1}_{|\VV_{1}^{\underline{1}}|}^\top\begin{bmatrix}-\vectsf{A}_{12}\\ \vectsf{L}_{22}
    \end{bmatrix}=\vect{0}$.  Therefore, to prove $\GG_{1}^{\underline{1}}$ is weight-balanced, we need to show that $\mathsf{d}_{\textup{out}}^{1,1}+\texttt{sum}(-\vect{A}_{21})=0$, which follows  immediately from $\mathsf{d}_{\textup{out}}^{1,1}=\texttt{sum}(\vectsf{A}_{12}^\top)$ and  $\texttt{sum}(\vectsf{A}_{12}^\top)=\texttt{sum}(\vectsf{A}_{21})$.
\end{proof}
\bigskip

Next we present the proof of our main results.

\medskip

\begin{proof}[Proof of Theorem~\ref{thm::main-col}]
To prove necessity, we proceed as follows. We write the algorithm~\eqref{eq::consensus-modified} in compact form 
\begin{align}\label{eq::compact_modified_Con}
\dvect{x}=-{\lL}\,\vect{x}-{\lL}\,\vect{g}+\vect{f}+\Dout\,\vect{g}=-{\lL}\,\vect{x}+\vect{f}+\vect{A}\,\vect{g}.
\end{align}

Left multiplying both sides of~\eqref{eq::compact_modified_Con} by $\vect{1}_N^{\top}$ gives
\begin{align*}
    \sum\nolimits_{j=1}^N\dot{x}^j(t)=\sum\nolimits_{j=1}^N(f^i(t)+\dout^i\, g^i(t)),
\end{align*}
which results in
\begin{align*}
    \sum\nolimits_{j=1}^N\!\!\!{x}^j(t)\!=\!\sum\nolimits_{j=1}^N\!\!{x}^j(0)\!+\!\!\!\int_{0}^t\!\!\sum\nolimits_{j=1}^N(f^i(\tau)+\dout^i\, g^i(\tau))\,\textup{d}\tau.
\end{align*}
Because $x^i(0)=\mathsf{r}^i$, to ensure $\lim_{t\to\infty} x^i(t)=\avrg{\mathsf{r}}$, $i\in\VV$, we necessarily need~\eqref{eq::nec-suf-admin-col-b}.

Next, we apply the change of variable 
 \begin{align}\label{eq::var_ch}
\vect{p}=\begin{bmatrix}p_1\\ \vect{p}_{2:N}\end{bmatrix}=\,\vect{T}\,\vect{x},
\end{align}
where $\vect{T}$ is defined in~\eqref{eq::lL+}, to write~\eqref{eq::compact_modified_Con} in the equivalent form
\begin{subequations}\label{eq::p1p2}
\begin{align}
\dot{p}_1&\!=\!\frac{1}{\sqrt{N}}\sum\nolimits_{i=1}^N (f^i+\dout^i\, g^i),\label{eq::p1p2-p1}\\
\dvect{p}_{2:N}&\!=\!-\lL^{+}  \,\vect{p}_{2:N}\!+\rR^\top(\vect{f}+\vect{A}\,\vect{g}).\label{eq::p1p2-p2}
\end{align}
\end{subequations}
 The solution of~\eqref{eq::p1p2} is 
\begin{subequations}\label{eq::equiv_traject}
\begin{align}
{p}_1(t)=\,&\frac{1}{\sqrt{N}}\sum\nolimits_{i=1}^N\!\!x^i(0)+ \label{eq::p1-time}\\
&\quad\frac{1}{\sqrt{N}}\int_{0}^{t}\sum\nolimits_{i=1}^N (f^i(\tau)+\dout^i\, g^i(\tau))\textup{d}\tau,  \nonumber\\
\vect{p}_{2:N}(t)=\,&\textup{e}^{-\lL^{+}\,t}\,\vect{p}_{2:N}(0)\,+
\nonumber\\
&\qquad\int_{0}^t\!\!\textup{e}^{-\lL^{+} (t-\tau)}\rR^{\top}(\vect{f}(\tau)+\vect{A}\,\vect{g}(\tau))\,\textup{d}\tau. \label{eq::p2-time}
\end{align}
\end{subequations}
Given~\eqref{eq::nec-suf-admin-col-a},~\eqref{eq::p1-time} results in
$\lim_{t\to\infty} p_1(t)=\frac{1}{\sqrt{N}}\sum\nolimits_{i=1}^N\!\!x^i(0)=\avrg{\mathsf{r}}$.
Consequently, given~\eqref{eq::var_ch}, to ensure $\lim_{t\to\infty} x^i(t)=\avrg{\mathsf{r}}$, $i\in\VV$, we need
\begin{align}\label{eq::p2_zero}
    \lim_{t\to\infty}\vect{p}_{2:N}(t)=\vect{0}.
\end{align}
Because for a strongly connected and weight-balanced digraph, $-\lL^{+}$ is a Hurwitz matrix, $\lim_{t\to\infty} \textup{e}^{-\lL^{+}\,t}\vect{p}_{2:N}(0)=\vect{0}$. Then, the necessary condition for~\eqref{eq::p2_zero} is~\eqref{eq::nec-suf-admin-col-b}.

The sufficiency proof follows from noting that under~\eqref{eq::nec-suf-admin-col}, the trajectories of~\eqref{eq::equiv_traject} satisfy $\lim_{t\to\infty}p_1(t)=\frac{1}{\sqrt{N}}\sum\nolimits_{i=1}^N\!x^i(0)$ and $\lim_{t\to\infty}\vect{p}_{2:N}(t)=\vect{0}$. Then, given~\eqref{eq::var_ch} and $x^i(0)=\mathsf{r}^i$ we obtain $\lim_{t\to\infty}x^i(t)=\frac{1}{N}\sum\nolimits_{j=1}^N\!\mathsf{r}^j$, $i\in\VV$.
\end{proof}

\smallskip
\begin{proof}[Proof of Theorem~\ref{thm::main}]
Given~\eqref{eq::local_beta-i}, it is straightforward to see that~\eqref{eq::nec-suf-admin-sig-a} is necessary and sufficient for~\eqref{eq::nec-suf-admin-col-a}. Next,
 we observe that 
using~\eqref{eq::local_beta-i}, we can write
$\lim_{t\to\infty}\int_{0}^t\!\rR^\top\,(\vect{f}(\tau)+\Dout\,\vect{g}(\tau))\textup{d}\tau=\rR^\top\begin{bmatrix}\beta^1&\cdots&\beta^N\end{bmatrix}^\top$. Then, it follows from the statement (b) of Lemma~\ref{lem::conv_vanishing} that 
$\lim_{t\to\infty}\int_{0}^t\!\!\textup{e}^{-\lL^{+} (t-\tau)} \rR^\top\,(\vect{f}(\tau)+\Dout\,\vect{g}(\tau))\textup{d}\tau=\vect{0}$. As a result, given $\vect{f}+\vect{A}\,\vect{g}=\vect{f}+\Dout\,\vect{g}-{\lL}\,\vect{g}$, we obtain
\begin{align}\label{eq::nes-suf-local3-tem}
&\lim_{t\to\infty}\int_{0}^{t}\!\!\textup{e}^{-\lL^{+} (t-\tau)}\rR^{\top}(\vect{f}(\tau)+\vectsf{A}\,\vect{g}(\tau))\,\textup{d}\tau=\nonumber\\
&\quad\quad\qquad\qquad-\lim_{t\to\infty}\int_{0}^{t}\!\!\textup{e}^{-\lL^{+} (t-\tau)} \rR^\top{\lL}\,\vect{g}(\tau)\textup{d}\tau.
\end{align}
Given~\eqref{eq::nes-suf-local3-tem}, by virtue of Lemma~\ref{lem::nec-suf-hurwitz-local},~\eqref{eq::nec-suf-admin-col-b} holds if and only if~\eqref{eq::nec-suf-admin-sig-b} holds.
\end{proof}

\smallskip
\begin{proof}[Proof of Lemma~\ref{lem::admissible-signal}]
When condition (a) holds, the proof of the statement follows from  statement (a) of Lemma~\ref{lem::conv_vanishing}. When condition (b) is satisfied, the proof follows from the statements (a) and (b) of  Lemma~\ref{lem::conv_vanishing} which, respectively, give $\lim_{t\to\infty}\int_{0}^t\! \textup{e}^{-(t-\tau)}g_1(\tau)\textup{d}\tau=\alpha$ and $\lim_{t\to\infty}\int_{0}^t\! \textup{e}^{-(t-\tau)}g_2(\tau)\textup{d}\tau=0$.  
When condition (c) is satisfied, the proof follows from the statement (a) of Lemma~\ref{lem::conv_vanishing} which gives 
$\lim_{t\to\infty}\int_{0}^t\! \textup{e}^{-(t-\tau)}g_1(\tau)\textup{d}\tau=\alpha$
and noting that~$\int_{0}^t \textup{e}^{-(t-\tau)}g_2(\tau)\textup{d}\tau$ is the zero state response of system $\dot{\zeta}=-\zeta+g_2$. Since $g_2(t)$ is essentially bounded, this system is ISS, and as a result it is also integral ISS~\cite{SND-DVE-EDS:11}. Then, $\int_{0}^t \textup{e}^{-(t-\tau)}g_2(\tau)\textup{d}\tau=0$, follows from~\cite[Lemma 3.1]{SND-DVE-EDS:11}.
\end{proof}

\smallskip
\begin{proof}[Proof of Lemma~\ref{lem::flat_output}]
Let the error variables of the two execution of~\eqref{eq::consensus-modified} described in the statement be $\delta x^i(t)={x^i}'(t)-x^i(t)$, $\delta y^i(t)={y^i}'(t)-y^i(t)$, $\delta g^i(t)={g^i}'(t)-g^i(t)$, and $\delta f^i(t)={f^i}'(t)-f^i(t)$, $i\in\VV$. Consequently, 
\begin{subequations}\label{eq::delta_p_q}
\begin{align}
 \!&\delta{x^1}(0)=0,\quad \delta \vect{x}_4=\vect{0},\quad \delta \vect{x}_5(0)=\vect{0},\label{eq::ini_deltax145} \\
  \!&\delta{x^i}(0)\in\real,~~\quad\quad\quad\quad\quad\quad\quad \label{eq::int_delta_1_4_5} i\in({\mathcal{V}}^{\underline{1}}_{1,2}\cup  {\mathcal{V}}^{\underline{1}}_{1,3}),\\
    \! &\delta\vect{x}_2(0)= -\vectsf{A}_{23}\lL_{33}^{-1}\delta\vect{x}_3(0),
    \end{align}
\end{subequations}
and
\begin{subequations}\label{eq::delta_f_g}
\begin{align}
\!&\delta g^1(t)\equiv 0,~~\delta f^1(t)\equiv 0,\label{eq::delta_g1_f1} \\
\!&\delta\vect{g}_l(t)\equiv\vect{0},~~\delta\vect{f}_l(t)\equiv\vect{0},\quad\quad l\in\{3,4,5\},\label{eq::delta_g345_f345}\\
\!\!\!\!\!\!\!\!\!\!&\delta\vect{g}_2(t)\!=\!-\textup{e}^{-\Dout_{22}t}\delta\vect{x}_2(0),~\delta\vect{f}_2(t)\!=\! -\vectsf{A}_{23}\textup{e}^{-\lL_{33}t}\delta\vect{x}_3(0)\label{eq::delta_g2_f2}.
\end{align}
\end{subequations}
Given the inter-agent interactions across the network based on agent grouping in accordance to 
the definition of the island $\mathcal{G}^{\underline{1}}_1$ (see Fig.~\ref{fig::network_island}), the error dynamics pertained to the modified static average consensus algorithm~\eqref{eq::consensus-modified} reads as
\begin{align}\label{eq::island_dynamics}
 &\!\!\! \begin{bmatrix}
  \delta\dot{x}^1\\
  \delta\dot{\vect{x}}_{2}\\
  \delta\dot{\vect{x}}_{3}\\
  \delta\dot{\vect{x}}_{4}\\
  \delta\dot{\vect{x}}_{5}
  \end{bmatrix}\!\!=\!-\!\underbrace{\begin{bmatrix}
  \dout^1 &-\vectsf{A}_{12} &\vect{0} &-\vectsf{A}_{14} &-\vectsf{A}_{15}\\
  -\vectsf{A}_{21}  &\lL_{22} &-\vectsf{A}_{23} &-\vectsf{A}_{24} &\vect{0}\\
  -\vectsf{A}_{31}  &-\vectsf{A}_{32} &\lL_{33} &-\vectsf{A}_{34} &\vect{0}\\
  -\vectsf{A}_{41}  &-\vectsf{A}_{42} &\vect{0} &\lL_{44} &\vect{0}\\
  -\vectsf{A}_{51}  &\vect{0} &\vect{0} &\vect{0} &\lL_{55}
  \end{bmatrix}}_{\lL}\!
  \begin{bmatrix}
  \delta x^1\\\delta\vect{x}_{2} \\ \delta\vect{x}_{3} \\ \delta\vect{x}_{4}\\\delta\vect{x}_{5} 
  \end{bmatrix}\nonumber\\&
  ~+\underbrace{\begin{bmatrix}
  0&\vectsf{A}_{12}&\vect{0}     &\vectsf{A}_{14}&\vectsf{A}_{15}\\
  \vectsf{A}_{21} &\vectsf{A}_{22}&\vectsf{A}_{23}&\vectsf{A}_{24}&\vect{0}\\
  \vectsf{A}_{31} &\vectsf{A}_{32}&\vectsf{A}_{33}&\vectsf{A}_{34}&\vect{0}\\
  \vectsf{A}_{41} &\vectsf{A}_{42}&\vect{0}     &\vectsf{A}_{44}&\vect{0}\\
  \vectsf{A}_{51} &\vect{0}     &\vect{0}     &\vect{0}     &\vectsf{A}_{55}
  \end{bmatrix}}_{\vect{A}}
  \begin{bmatrix}
  \delta{g}^{1}\\\delta\vect{g}_{2} \\ \delta\vect{g}_{3}\\ \delta\vect{g}_{4}\\\delta\vect{g}_{5}
  \end{bmatrix}\!+\!
  \begin{bmatrix}
  \delta{f}^{1}\\\delta\vect{f}_{2} \\ \delta\vect{f}_{3}\\ \delta\vect{f}_{4}\\\delta\vect{f}_{5}
  \end{bmatrix}\!\!.
 \end{align}
Since for a strongly connected and weight-balanced digraph we have $\rank(\lL)=N-1$ and $-(\lL+\lL^\top)\leq 0$, the sub-block matrices $-\lL_{33}$ and $-\lL_{44}$ and $-\lL_{55}$ satisfy $-(\lL_{ii}+\lL_{ii}^\top)<0$, $i\in\until{5}$. Thereby, they are invertible and Hurwitz matrices.

To establish~\eqref{eq::init-alt-sum}, we show $\vect{1}_N^\top\delta \vect{x}(0)=\vect{0}_N$. For this, note that taking into account~\eqref{eq::delta_p_q}, we can write
\begin{align}\label{eq::deltax}
   \delta\vect{x}(0)=\underbrace{\begin{bmatrix}
  0 &\vect{0} &\vect{0} &\vect{0} &\vect{0}\\
  \vect{0}  &\vect{0} &-\vectsf{A}_{23} &\vect{0} &\vect{0}\\
  \vect{0}  &\vect{0} &\lL_{33} &\vect{0} &\vect{0}\\
  \vect{0}  &\vect{0} &\vect{0} &\vect{0} &\vect{0}\\
  \vect{0}  &\vect{0} &\vect{0} &\vect{0} &\vect{0}
  \end{bmatrix}}_{\vect{B}}\begin{bmatrix}0\\\lL_{33}^{-1}\delta\vect{x}_3(0)\\\lL_{33}^{-1}\delta\vect{x}_3(0)\\\vect{0}\\\vect{0}\end{bmatrix}
\end{align}
Comparing $\vect{B}$ with the block partitioned $\lL$ in~\eqref{eq::island_dynamics}, it is evident  that $\vect{1}^\top\vect{B}=\vect{0}$ follows from $\vect{1}^\top\lL=\vect{0}$. Consequently, we can deduce from~\eqref{eq::deltax} that $\vect{1}^\top\delta \vect{x}(0)=0$. Next, given~\eqref{eq::init-alt-sum}, we validate~\eqref{eq::init-alt-converge} by invoking Theorem~\ref{thm::main} and showing that the obfuscation signals $({f^i}',{g^i}')$, $i\in\VV$, satisfy the sufficient conditions in~\eqref{eq::nec-suf-admin-sig}. For $i\in\VV\backslash\VV_{1,2}^{\underline{1}}$, the sufficient conditions in~\eqref{eq::nec-suf-admin-sig} are trivially satisfied. To show~\eqref{eq::nec-suf-admin-sig-a} for  $i\in\VV_{1,2}^{\underline{1}}$, we proceed as follows. First note that since $(f^i,g^i)$, $i\in\VV_{1,2}^{\underline{1}}$, are admissible signals, they necessarily satisfy~\eqref{eq::nec-suf-admin-sig-a}. Next, note that using~\eqref{eq::initial_alter} we can write
\begin{align*}
  \int_{0}^t\big(-\vectsf{A}_{23}\textup{e}^{-\lL_{33}\tau}&\delta\vect{x}_3(0)+\Dout_{22}\textup{e}^{-\Dout_{22}\tau}\delta\vect{x}_2(0)\big)\textup{d}\tau=\\
  &~~ \vectsf{A}_{23}\lL_{33}^{-1}\textup{e}^{-\lL_{33}t}\delta\vect{x}_3(0)-\textup{e}^{-\Dout_{22}\tau}\delta\vect{x}_2(0).
\end{align*}
 Let $\mathfrak{B}_2=[\{\beta^i\}_{i\in\VV_{1,2}^{\underline{1}}}]$. Then, in light of the aforementioned observations and the fact that $-\lL_{33}$ and $-\Dout_{22}$ are Hurwitz matrices we can write
\begin{align*}
 & \lim_{t\to\infty}\int_{0}^{t} \big(\vect{f}_2'(\tau)+\Dout_{22}\vect{g}_2'(\tau)\big)\textup{d}\tau=\\
  &
  \mathfrak{B}_2+\lim_{t\to\infty}\big( \vectsf{A}_{23}\lL_{33}^{-1}\textup{e}^{-\lL_{33}t}\delta\vect{x}_3(0)-\textup{e}^{-\Dout_{22}\tau}\delta\vect{x}_2(0)\big)=\mathfrak{B}_2,
\end{align*}

which shows $({f^i}',{g^i}')$, $i\in\VV_{1,2}^{\underline{1}}$ also satisfy the sufficient condition~\eqref{eq::nec-suf-admin-sig-a}.
Establishing that ${g^i}'$, $i\in\VV_{1,2}^{\underline{1}}$, satisfies the sufficient condition~\eqref{eq::nec-suf-admin-sig-b} follows from admissibility of $g^i$, $i\in\VV_{1,2}^{\underline{1}}$, which ensures it satisfies~\eqref{eq::nec-suf-admin-sig-b}, and direct calculations as show below,
\begin{align*}
\lim_{t\to\infty}\int_{0}^{t}\!\!\!\textup{e}^{-(t-\tau)} {g^i}'(\tau)\,\textup{d}\tau\!&=\\
\!\alpha + \lim_{t\to\infty}&\int_{0}^{t}\!\!\!\textup{e}^{-(t-\tau)} \textup{e}^{-\dout^i \tau}\delta x^i(0)\,\textup{d}\tau\!=\!\alpha.
\end{align*}
Here we used the fact that for a strongly connected digraph we have $\dout^i\geq1$.

To establish~\eqref{eq::unobserv-con} we proceed as follows. We assume that~\eqref{eq::unobserv-con} or equivalently 
\begin{subequations}\label{eq::initial_error_Nout1-alt}
\begin{align}
&\delta {y}^1(t)=\delta{x}^1(t)+\delta {g}^1(t)\equiv \vect{0},\,\qquad t\in\real_{\geq0},\label{eq::delta_y2-alt}\\
&\delta \vect{y}_2(t)=\delta\vect{x}_2(t)+\delta \vect{g}_2(t)\equiv \vect{0},\qquad t\in\real_{\geq0},\label{eq::delta_y2-alt}\\
&\delta \vect{y}_4(t)=\delta\vect{x}_4(t)+\delta \vect{g}_4(t)\equiv \vect{0},\qquad t\in\real_{\geq0},\label{eq::delta_y4-alt}\\
&\delta \vect{y}_5(t)=\delta \vect{x}_5(t)+\delta \vect{g}_5(t)\equiv \vect{0},\qquad t\in\real_{\geq0}.\label{eq::delta_y_5-alt}
\end{align}
\end{subequations}
hold. Then, for the given initial conditions~\eqref{eq::delta_p_q}, we identify the obfuscation signals that make the error dynamics~\eqref{eq::island_dynamics} render such an output. As we show below, these obfuscation signals are exactly the same as~\eqref{eq::delta_f_g}. Then, the proof is established by the fact that given a set of initial conditions and integrable external signals, the solution of any linear ordinary differential equation is unique. That is, if we implement the identified  inputs, the error dynamics is guaranteed to  satisfy~\eqref{eq::initial_error_Nout1-alt}. If~\eqref{eq::initial_error_Nout1-alt} holds, then the error dynamics~\eqref{eq::island_dynamics} reads~as
\begin{subequations}\label{eq::obfuscation_dyno_proof}
\begin{align}
\delta\dot{x}^1&=-\dout^1\delta x^1+\delta f^1,\\
\delta\dvect{x}_2&=-\Dout_{22}\delta\vect{x}_2+\vectsf{A}_{23}\delta\vect{x}_3+\vectsf{A}_{23}\delta\vect{g}_3+\delta\vect{f}_2,\label{eq::obfuscation_dyno_proof-x2}\\
\delta\dvect{x}_3&=-\lL_{33}\delta\vect{x}_3+\vectsf{A}_{33}\delta\vect{g}_3+\delta\vect{f}_3,\\
\delta\dvect{x}_4&=-\Dout_{44}\delta\vect{x}_4+\delta\vect{f}_4,\\
\delta\dvect{x}_5&=-\Dout_{55}\delta\vect{x}_5+\delta\vect{f}_5,
\end{align}
\end{subequations}
Here, we used $\lL_{ii}=\Dout_{ii}-\vectsf{A}_{ii}$, $i\in\{1,2,4,5\}$. Next, we choose the  obfuscation signals according to~\eqref{eq::delta_f_g}. Then,  for the given initial conditions~\eqref{eq::delta_p_q}, we obtain from~\eqref{eq::obfuscation_dyno_proof}, 
\begin{subequations}\label{eq::solution_confirm}
\begin{alignat}{3}
\delta\dot{x}^1&=-\dout^1\delta x^1,&\Rightarrow~& \delta x^1(t)=0~\Rightarrow ~\delta y^1(t)\equiv 0,\\
\delta\dvect{x}_3&=-\lL_{33}\,\delta\vect{x}_3,&\Rightarrow~&\delta\vect{x}_3(t)=\textup{e}^{-\lL_{33}t}\delta\vect{x}_3(0),\\
\delta\dvect{x}_4&=-\Dout_{44}\delta\vect{x}_4,&\Rightarrow~& \delta\vect{x}_4(t)\equiv \vect{0},\Rightarrow \delta \vect{y}_4(t)\equiv \vect{0},\\
\delta\dvect{x}_5&=-\Dout_{55}\delta\vect{x}_5,~&\Rightarrow~& \delta\vect{x}_5(t)\equiv \vect{0},\Rightarrow \delta \vect{y}_5(t)\equiv \vect{0},
\end{alignat}
\end{subequations}
for $t\in\real_{\geq0}$. Substituting for $\vect{x}_3$ ans $\delta\vect{f}_2$ in~\eqref{eq::obfuscation_dyno_proof-x2}, 
we obtain
\begin{align}\label{eq::solution_confirm-2}
   \delta\dvect{x}_2&=-\Dout_{22}\delta\vect{x}_2+\vectsf{A}_{23}\textup{e}^{-\lL_{33}t}\delta\vect{x}_3(0)-\vectsf{A}_{23}\textup{e}^{-\lL_{33}t}\delta\vect{x}_3(0)\nonumber\\
   &=-\Dout_{22}\delta\vect{x}_2, ~\Rightarrow  \delta\vect{x}_2(t)=\textup{e}^{-\Dout_{22}t}\delta\vect{x}_2(0),
\end{align}
for $t\in\real_{\geq0}$. Finally using $\delta \vect{g}_2$ in~\eqref{eq::delta_g2_f2}, we 
\begin{align}\label{eq::out_2_confirm}
    \delta\vect{y}_2(t)&=\delta\vect{x}_2+\delta \vect{g}_2\nonumber\\
    &=\textup{e}^{-\Dout_{22}t}\delta\vect{x}_2(0)-\textup{e}^{-\Dout_{22}t}\delta\vect{x}_2(0)\equiv\vect{0},
\end{align}
for $t\in\real_{\geq0}$.
\end{proof}

\begin{proof}[Proof of Lemma~\ref{lem:case2_priv}]
If agent $1$ knows $\beta^i$, the proof follows from Lemma~\ref{lem::easy-prey-external-ad}. If agent $1$ does not know $\beta^i$, since it knows~\eqref{eq::nec-suf-admin-sig-a}, there exists at least one other agent $k\in\VV\backslash\{1,i\}$ whose $\beta^k$ is not known to agent $1$. We note that at the best case, $\beta^i+\beta^k$ can be known to agent $1$. Now consider $\beta_{ik}\in\real\backslash\{0\}$ and let $\beta^{i'}=\beta^i+\beta_{ik}$ and $\beta^{k'}=\beta^k- \beta_{ik}$,  and $\beta^{l'}=\beta^l$ for $l\in\VV\backslash\{i,k\}$. 
 Now consider an alternative implementation of algorithm~\eqref{eq::consensus-modified-x}-\eqref{eq::consensus-modified-y} with initial conditions $x^{l'}(0)=x^{l}(0)$ for $l\in\VV\backslash\{i,k\}$, $x^{i'}(0)=x^{i}(0)-\beta_{ik}$ and $x^{k'}(0)=x^{k}(0)+\beta_{ik}$ and obfuscation signals $f^{l'}(t)=f^{l}(t)$, $g^{l'}(t)=g^{l}(t)$ for $l\in\VV\backslash\{i,k\}$, $f^{i'}(t) = f^i(t) + d\,\beta_{ik} \text{e}^{-(\text{d}^i_{\text{out}}+ d)t}$, $g^{i'}(t) = g^i(t) + \beta_{ik} \text{e}^{-(\text{d}^i_{\text{out}}+d)t}$ and $f^{k'}(t) = f^k(t) - d\,\beta_{ik} \text{e}^{-(\text{d}^k_{\text{out}}+ d)t}$, $g^{k'}(t) = g^k(t) - \beta_{ik} \text{e}^{-(\text{d}^k_{\text{out}}+d)t}$, where $d\in\real$ is chosen such that $d>\max\{\text{d}^i_{\text{out}},\text{d}^k_{\text{out}}\}$. Let $t\mapsto {x^{l'}}(t)$ and $t\mapsto {y^{l'}}(t)$, $t\in\real_{\geq0}$, respectively, be the state and the transmitted signal of agent $l\in\VV$ in this alternative case. 
  We note that using  $\lim_{t\to\infty}\int_{0}^{t}d\beta_{ik}\text{e}^{-(\text{d}^i_{\text{out}}+d)\tau}\text{d}\tau=\frac{d\beta_{ik}}{\text{d}^i_{\text{out}}+d}$ and $\lim_{t\to\infty}\int_{0}^{t}d\beta_{ik}\text{e}^{-(\text{d}^i_{\text{out}}+d)\tau}\text{d}\tau=\frac{1}{\text{d}^i_{\text{out}}+d}$ we can show $\lim_{t\to\infty}\int_{0}^{t}(f^{l'}(\tau)\!+\!\dout^l\, g^{l'}(\tau))\,\textup{d}\tau=\beta^{l'}$, and  $\lim_{t\to\infty}\int_{0}^{t} \textup{e}^{-(t-\tau)}{g}^{l'}(\tau)\textup{d}\tau=\alpha$ for $l\in\VV$. Therefore, since  $\sum_{l=j}^N\beta^{j'}=0$, by virtue of Theorem~\ref{thm::main} we get
\begin{align}
&\lim_{t\to\infty}{x^l}'(t)=\frac{1}{N}\sum\nolimits_{j=1}^N\!\!{x^l}'(0)=\frac{1}{N}\sum\nolimits_{j=1}^N\!\!\mathsf{r}^l,~~ l\in\VV.\label{eq::init-alt-converge_beta_amb}
\end{align}
Next, let $\delta x^l(t)=x^l(t)-x^{l'}(t)$ and $\delta y^l(t)=y^l(t)-y^{l'}(t)$, $l\in\VV$.  Then,
\begin{subequations}\label{eq::diff_ik}
\begin{align}
&\begin{cases}
    \delta \dot{x}^l(t)=-\dout^l \delta {x}^l(t)+\!\!\sum\limits_{j=1}^N\mathsf{a}_{lj}\delta y^j(t),&l\!\in\!\VV\backslash\{i,k\},\\
   \delta \dot{x}^l(t)=-\dout^l \delta {x}^l(t)\!+\!\!\sum\limits_{j=1}^N\!\!\mathsf{a}_{lj}\delta y^j(t)\!+\!f^l\!-\!f^{l'},& l\!\in\!\{i,k\},
    \end{cases} \label{eq::diff_ik-a}\\
  &  \begin{cases}
     \delta y^l(t)\!=\delta x^l,&l\!\in\!\VV\backslash\{i,k\},\\
     \delta y^l(t)\!=\delta x^l+g^l\!-\!g^{l'},&l\!\in\!\{i,k\}. \end{cases} \label{eq::diff_ik-b}
\end{align}
\end{subequations}
To complete our proof, we want to show that $y^l(t)=y^{l'}(t)$~(or equivalently $\delta y^l(t)\equiv 0$), $l\!\in\!\VV$, for  $t\!\in\!\real_{\geq0}$, thus agent~$1$ cannot distinguish between the initial conditions $x^i(0)$ and $x^{i'}(0)$. Since, for a given initial condition and integrable~external inputs the solution of an ordinary differential equation is unique, we achieve this goal by showing that if $\delta y^l(t)=0$, $l\in\VV$ applied in the state dynamics~\eqref{eq::diff_ik-a}, the resulted output~\eqref{eq::diff_ik-a} satisfy $\delta y^l(t)\!\equiv\!0$, $l\!\in\!\VV$, $t\!\in\!\real_{\geq0}$. For this,~first note that since $\delta x^l (0)=0$ for $l\!\in\!\VV\backslash\{i,k\}$, then it follows from~\eqref{eq::diff_ik-a} with $\delta y^l(t)=0$, $l\!\in\!\VV$, that $\delta x^l(t)\equiv 0$. Subsequently, from~\eqref{eq::diff_ik-b}, we get the desired $\delta y^l(t)\equiv 0$, $t\!\in\!\real_{\geq0}$ for $l\!\in\!\VV\backslash\{i,k\}$. Next, we note that, from~\eqref{eq::diff_ik-a} with $\delta y^l(t)=0$, $l\!\in\!\VV$, given~$\delta x^i(0)=\beta_{ik}$ and $\delta x^k(0)=-\beta_{ik}$ we~obtain
\begin{align*}
    \delta x^i(t)=&\beta_{ik}\textup{e}^{-\dout^i t}
    -\beta_{ik}\textup{e}^{-\dout^i t}+\beta_{ik}\textup{e}^{-(\dout^i+d) t}\\
      =&\beta_{ik}\textup{e}^{-(\dout^i+d) t}\\
     \delta x^k(t)=&-\beta_{ik}\textup{e}^{-\dout^k t} +\beta_{ik}\textup{e}^{-\dout^k t}-\beta_{ik}\textup{e}^{-(\dout^k+d) t}\\
     =&-\beta_{ik}\textup{e}^{-(\dout^k+d) t}
\end{align*}
Subsequently, since $g^i-g^{i'}=-\beta_{ik}\textup{e}^{-(\dout^i+d) \tau}$ and $g^k-g^{k'}=\beta_{ik}\textup{e}^{-(\dout^k+d) \tau}$, from~\eqref{eq::diff_ik-b}, we get the desired $\delta y^l(t)\equiv 0$, $t\in\real_{\geq0}$ for $l\in\{i,k\}$, which completes our proof.
\end{proof}



\end{document}